\newcommand{\N}{\mathds{N}}
\newtheorem{definition}{Definition}
\newtheorem{theorem}{Theorem}
\newtheorem{corollary}{Corollary}
\newtheorem{lemma}{Lemma}
\newenvironment{proof}{\noindent \begin{rm}{\textbf{Proof.} }}{\hspace*{\fill}$\Box$\par\end{rm} \vspace{.3cm}}
\newtheorem{spec}{Specification}
\newtheorem{rem}{Remark}
\newcommand{\id}{\mbox{ID}}
\begin{document}


\title{Self-stabilizing algorithms for Connected Vertex Cover and Clique decomposition problems}

\author{Fran\c{c}ois Delbot\\
Universit\'{e} Paris Ouest Nanterre / LIP6, CNRS UMR 7606.\\4 place Jussieu, 75252 Paris Cedex, France.\\ \texttt{francois.delbot@lip6.fr}\\
\and Christian Laforest\\
Universit\'{e} Blaise Pascal / LIMOS, CNRS UMR 6158, ISIMA.\\Campus scientifique des C\'{e}zeaux, 24 avenue des Landais, 63173 Aubiere cedex, France.\\ \texttt{christian.laforest@isima.fr}\\
\and Stephane Rovedakis\\
Conservatoire National des Arts et M\'{e}tiers / CEDRIC, EA 4629.\\292 rue Saint-Martin, F-75141 Paris Cedex 03, France.\\ \texttt{stephane.rovedakis@cnam.fr}
}

\maketitle

\begin{abstract}
In many wireless networks, there is no fixed physical backbone nor centralized network management. The nodes of such a network have to self-organize in order to maintain a virtual backbone used to route messages. Moreover, any node of the network can be \textit{a priori} at the origin of a malicious attack. Thus, in one hand the backbone must be fault-tolerant and in other hand it can be useful to monitor all network communications to identify an attack as soon as possible. We are interested in the minimum \emph{Connected Vertex Cover} problem, a generalization of the classical minimum Vertex Cover problem, which allows to obtain a connected backbone. Recently, Delbot \emph{et al.}~\cite{DelbotLP13} proposed a new centralized algorithm with a constant approximation ratio of $2$ for this problem. In this paper, we propose a distributed and self-stabilizing version of their algorithm with the same approximation guarantee. To the best knowledge of the authors, it is the first distributed and fault-tolerant algorithm for this problem. The approach followed to solve the considered problem is based on the construction of a connected minimal clique partition. Therefore, we also design the first distributed self-stabilizing algorithm for this problem, which is of independent interest.\\

\textbf{Keywords:} Distributed algorithms, Self-stabilization, Connected Vertex Cover, Connected Minimal Clique Partition.
\end{abstract}


\section{Introduction}

In many wireless networks, there is no fixed physical backbone nor centralized network management. In such networks, the nodes need to regularly flood control messages which leads to the "broadcast  storm  problem"~\cite{NiTCS99}. Thus, the nodes have to self-organize in order to maintain a virtual backbone, used to route messages in the network. Routing messages are only exchanged inside the backbone, instead of being broadcasted to the entire network. To this end, the backbone must be connected.
The construction and the maintenance of a virtual backbone is often realized by constructing a Connected Dominating Set. A \emph{Connected Dominating Set (CDS)} of a graph $G=(V,E)$ is a set of nodes $S\subseteq V$ such that $G[S]$ (the graph induced by $S$ in $G$) is connected and each node in $V-S$ has at least one neighbor in $S$. Nodes from $S$ are responsible of routing the messages in the network, whereas nodes in $V-S$ communicate by exchanging messages through neighbors in $S$. In order to minimize the use of resources, the size of the backbone (and thus of the  CDS) is minimized. This problem is NP-hard~\cite{GJ79} and has been extensively studied due to its importance for communications in wireless networks. Many algorithms have been proposed in centralized systems (e.g., see~\cite{BlumDTC99} for a survey). In addition to message routing, there is the problem of network security. Indeed, a faulty node infected by a virus or an unscrupulous user can be at the origin of flooding or a malicious attack. Thus, it is necessary to monitor all network communications to identify these situations, as soon as possible, in order to isolate this node. A CDS $S$ will not support this feature since two nodes in $V-S$ can be neighbors, i.e, $V-S$ is not always an independent set.

In order to monitor all network communications, we can consider the Vertex Cover problem. A \emph{vertex cover} of a graph $G=(V,E)$ is a set of nodes $S \subseteq V$ such that each edge $e=uv$ is \emph{covered} by $S$, i.e., $u \in S$ or $v \in S$ (or both). A vertex cover is \emph{optimal} if it's size is minimum. This is a classical NP-complete problem~\cite{GJ79} that can be approximated with a ratio of $2$. However, if a vertex cover allows to monitor all network communications, it is not always connected and cannot be used as a backbone. A \emph{Connected vertex cover} $S$ of $G$ is a vertex cover of $G$ with the additional property that $G[S]$ (the graph induced by $S$ in $G$) is connected. Similarly, an \emph{optimal} connected vertex cover is one of minimum size and the associated problem is also NP-complete. Not a lot of work has been done on this problem (see~\cite{Savage82,EscoffierGouvesMonnot10}). More recently, Delbot \emph{et al.} in~\cite{DelbotLP13} proposed another (centralized) $2$-approximation algorithm based on connected clique partitions of $G$.

In practice, it is more convenient to use distributed and fault-tolerant algorithms, instead of centralized algorithms due to the communications cost to obtain the network topology. \emph{Self-stabilization} introduced first by Dijkstra in~\cite{Dijkstra74,Dolev2000} is one of the most versatile techniques to ensure a distributed system to recover a correct behaviour. A distributed algorithm is self-stabilizing if after faults and attacks hit the system and place it in some arbitrary global state, the system recovers from this catastrophic situation without external (e.g., human) intervention in finite time. Many self-stabilizing algorithms have been proposed to solve a lot of graph optimization problems, e.g., Guellati and Kheddouci~\cite{GuellatiK10} give a survey for several problems related to independence, domination, coloring and matching in graphs. For the minimum CDS problem, Jain and Gupta~\cite{JainG05} design the first self-stabilizing algorithm for this problem. More recently, Kamei \emph{et al.}~\cite{KameiK10,KameiK12,KameiIY13} proposed several self-stabilizing algorithms with a constant approximation ratio and an additional property during the algorithm convergence.\\
However, as explained above a CDS does not meet all the desired properties. This is why we study the minimum connected vertex cover from a distributed and self-stabilizing point of view.

\paragraph{Contributions.}
We consider the minimum \emph{Connected Vertex Cover} problem in a distributed system subject to transient faults. In this paper, we propose a distributed and self-stabilizing version of the algorithm given recently by Delbot \emph{et al.}~\cite{DelbotLP13} for this problem while guaranteeing the same approximation ratio of 2. To the best of our knowledge, it is the first distributed and fault-tolerant algorithm for this problem. The approach followed to solve the considered problem is based on the construction of a \emph{Connected Minimal Clique Partition}. Therefore, we also design the first distributed self-stabilizing algorithm for this problem, which is of independent interest. Moreover, these algorithms works under the distributed daemon without any fairness assumptions (which is the weakest daemon).\\

The rest of this paper is organized as follows. The next section describes the model considered in the paper and the notations used. In Section~\ref{sec:cmcp}, we consider first the Connected Minimal Clique Partition problem. We give a state of the art related to the graph decomposition problem, then we present our self-stabilizing algorithm for this problem and prove its correctness. Section~\ref{sec:cvc} is devoted to the Connected Vertex Cover problem. We introduce first related works associated with this problem, then we give the self-stabilizing connected vertex cover algorithm that we propose and we give the correctness proof. Finally, the last section concludes the paper and present several perspectives.

\section{Model}
\label{sec:model}

\paragraph{Notations.}
We consider a network as an undirected connected graph $G=(V,E)$ where $V$ is a set of nodes (or {\em processors}) and $E$ is the set of {\em bidirectional asynchronous communication links}. We state that $n$ is the size of $G$ ($|V| = n$) and $m$ is the number of edges ($|E| = m$). We assume that the graph $G=(V,E)$ is a simple connected graph. In the network, $p$ and $q$ are neighbors if and only if a communication link ($p$,$q$) exists (i.e., ($p$,$q$) $\in$ $E$). Every processor $p$ can distinguish all its links. To simplify the presentation, we refer to a link ($p$,$q$) of a processor $p$ by the {\em label} $q$. We assume that the labels of $p$, stored in the set $Neig_p$, are locally ordered by $\prec_p$. We also assume that $Neig_p$ is a constant input from the system. $Diam$ and $\Delta$ are respectively the diameter and the maximum degree of the network (i.e., the maximal value among the local degrees of the processors). Each processor $p \in V$ has a unique identifier in the network, noted $\id_p$.

\paragraph{Programs.}In our model, protocols are {\em uniform}, i.e., each processor executes the same program. We consider the local shared memory model of computation. In this model, the program of every processor consists in a set of {\em variables} and an {\em ordered finite set of actions} inducing a {\em priority}. This priority follows the order of appearance of the actions into the text of the protocol. A processor can write to its own variable only, and read its own variables and that of its neighbors. Each action is constituted as follows: $<label>\ ::\ <guard>\ \to \ <statement>.$ The guard of an action in the program of $p$ is a boolean expression involving variables of $p$ and its neighbors. The statement of an action of $p$ updates one or more variables of $p$. An action can be executed only if its guard is satisfied. The {\em state} of a processor is defined by the value of its variables. The {\em state} of a system is the product of the states of all processors. We will refer to the state of 
a processor and the system as a ({\em local}) {\em state} and ({\em global}) {\em configuration}, respectively. We note $\mathcal C$ the set of all possible configuration of the system. Let $\gamma \in \mathcal C$ and $A$ an action of $p$ ($p \in V$). $A$ is said to be {\em enabled} at $p$ in $\gamma$ if and only if the guard of $A$ is satisfied by $p$ in $\gamma$. Processor $p$ is said to be {\em enabled} in $\gamma$ if and only if at least one action is enabled at $p$ in $\gamma$. When several actions are enabled simultaneously at a processor $p$: only the priority enabled action can be activated.

Let a distributed protocol $P$ be a collection of binary transition relations denoted by $\mapsto$, on $\mathcal C$. A $\emph{computation}$ of a protocol $P$ is a $\emph{maximal}$ sequence of configurations $e=(\gamma _{0}$,$\gamma_{1}$,...,$\gamma_{i}$,$\gamma _{i+1}$,...$)$ such that, $\forall i\geq 0$, $\gamma_{i}\mapsto \gamma _{i+1}$ (called a {\em step}) if $\gamma_{i+1}$ exists, else $\gamma_{i}$ is a terminal configuration. $\emph{Maximality}$ means that the sequence is either finite (and no action of $P$ is enabled in the terminal configuration) or infinite. All computations considered here are assumed to be maximal. $\mathcal E$ is the set of all possible computations of $P$.

As we already said, each execution is decomposed into steps. Each step is shared into three sequential phases atomically executed: $(i)$ every processor evaluates its guards, $(ii)$ a {\em daemon} (also called {\em scheduler}) chooses some enabled processors, $(iii)$ each chosen processor executes its priority enabled action. When the three phases are done, the next step begins. 

A {\em daemon} can be defined in terms of {\em fairness} and {\em distributivity}. In this paper, we use the notion of {\em unfairness}: the {\em unfair} daemon can forever prevent a processor from executing an action except if it is the only enabled processor. Concerning the {\em distributivity}, we assume that the daemon is {\em distributed} meaning that, at each step, if one or more processors are enabled, then the daemon chooses at least one of these processors to execute an action.

We consider that any processor $p$ executed a {\em disabling} {\em action} in the computation step $\gamma_{i}\mapsto \gamma _{i+1}$ if $p$ was {\em enabled} in $\gamma_{i}$ and not enabled in $\gamma_{i+1}$, but did not execute any protocol action in $\gamma_{i}\mapsto \gamma _{i+1}$. The disabling action represents the following situation: at least one neighbor of $p$ changes its state in $\gamma_i \mapsto \gamma_{i+1}$, and this change effectively made the guard of all actions of $p$ false in $\gamma_{i+1}$.

To compute the time complexity, we use the definition of (asynchronous) \emph{round}. This definition captures the execution rate of the slowest processor in any computation. Given a computation $e$ ($e \in \mathcal{E}$), the \emph{first round} of $e$ (let us call it $e^{\prime}$) is the minimal prefix of $e$ containing the execution of one action (an action of the protocol or a disabling action) of every enabled processor from the initial configuration.  Let $e^{\prime \prime}$ be the suffix of $e$ such that $e=e^{\prime}e^{\prime \prime}$. The \emph{second round} of $e$ is the first round of $e^{\prime \prime}$, and so on.

\section{Connected Minimal Clique Partition problem}
\label{sec:cmcp}

In this section, we consider a first problem whose aim is the partitioning of the input graph into subgraphs of maximal size in a distributed fashion, while maintaining a connectivity constraint between some subgraphs. More particularly, the goal is to decompose an input undirected graph $G=(V,E)$ into a set of cliques of maximal size such that all cliques of size at least two are connected. The connectivity constraint can be used for communication facilities. In the following, we define more formally the Connected Minimal Clique Partition problem.

\begin{definition}[Connected Minimal Clique Partition]
\label{def:cmcp}
Let $G=(V,E)$ be any undirected graph, and a \emph{clique} is a complete subgraph of $G$. A clique partition $C_1, \dots, C_k$ of $G$ is \emph{minimal} if for all $i\neq j$ the graph induced by $C_i \cup C_j$ is not a clique. A minimal clique partition $C_1, \dots, C_k$ is \emph{connected} iff for any pair of nodes $u,v$ in $\bigcup_{1 \leq i \leq l} C_i$, with $C_i$ the non trivial cliques of the partition and $l \leq k$, there is a path between $u$ and $v$ in the graph induced by $\bigcup_{1 \leq i \leq l} C_i$.
\end{definition}

Since we consider that faults can arise in the system, we give in Specification~\ref{spec:cmcp} the conditions that a self-stabilizing algorithm solving the Connected Minimal Clique partition problem have to satisfy.

\begin{spec}[Self-stabilizing Connected Minimal Clique Partition]
\label{spec:cmcp}
Let $\mathcal{C}$ be the set of all possible configurations of the system. An algorithm $\mathcal{A_{CMCP}}$ solving the problem of constructing a stabilizing connected minimal clique partition satisfies the following conditions:
\begin{enumerate}
\item Algorithm $\mathcal{A}$ reaches a set of terminal configurations $\mathcal{T} \subseteq \mathcal{C}$ in finite time, and
\item Every configuration $\gamma \in \mathcal{T}$ satisfies Definition~\ref{def:cmcp}.
\end{enumerate}
\end{spec}

\subsection{Related works}

The decomposition of an input graph into patterns or partitions has been extensively studied in the literature, and also in the self-stabilizing context. Most of graph partitioning problems are NP-complete. For the graph decomposition into patterns, Ishii and Kakugawa~\cite{IshiiK02} proposed a self-stabilizing algorithm for the construction of cliques in a connected graph with unique nodes identifier. Each process has to compute the largest set of cliques of same maximum size it can belong to in the graph. A set of cliques is constructed in $O(n^4)$ computation steps assuming an unfair centralized daemon. Moreover, the authors show that there exists no self-stabilizing algorithm in arbitrary anonymous graphs for this problem. Neggazi \emph{et al.}~\cite{NeggaziHK12} considered the problem of decomposing a graph into a maximal set of disjoint triangles. They give the first self-stabilizing algorithm for this problem whose convergence time is $O(n^4)$ steps under an unfair central daemon with unique nodes identifier. Neggazi \emph{et al.}~\cite{NeggaziTHK13} studied later the uniform star decomposition problem, i.e., the goal is to divide the graph into a maximum set of disjoint stars of $p$ leaf nodes. This is a generalization of the maximum matching problem which is a NP-complete problem constructing a maximum set of independent edges of the graph. Thus, a 1-star decomposition is equivalent to a maximum matching. The authors proposed a self-stabilizing algorithm constructing a maximal $p$-star decomposition of the input graph in $O(\frac{n}{p+1})$ asynchronous rounds and a (exponential) bounded number of steps under an unfair distributed daemon with unique nodes identifier.\\
A well studied problem related with graph decomposition is the maximum matching problem. Many works address the maximal matching problem which is polynomial. The first self-stabilizing algorithm for this problem has been proposed by Hsu \emph{et al.}~\cite{HsuH92}. The algorithm converges in $O(n^4)$ steps under a centralized daemon. Hedetniemi \emph{et al.}~\cite{HedetniemiJS01} showed later that the algorithm proposed by Hsu \emph{et al.} has a better convergence time of $2m+n$ steps under a centralized daemon. Goddar \emph{et al.}~\cite{GoddardHJS03} considered the construction of a maximal matching in ad-hoc networks and give a solution which stabilizes in $n+1$ rounds under a synchronous distributed daemon. Manne \emph{et al.}~\cite{ManneMPT09} have shown that there exists no self-stabilizing algorithm for this problem under a synchronous distributed daemon in arbitrary anonymous networks. They proposed an elegant algorithm which converges in $O(n)$ rounds and $O(m)$ steps under an unfair distributed daemon in arbitrary networks with unique nodes identifier. Recently, several works consider the maximum matching problem to find an optimal or an approximated solution. Hadid \emph{et al.}~\cite{HadidK09} give an algorithm which constructs an optimal solution in $O(Diam)$ rounds under a weakly fair distributed daemon only in bipartite graphs. Manne \emph{et al.}~\cite{ManneMPT11} presented a self-stabilizing algorithm constructing a $\frac{2}{3}$-approximated maximum matching in general graphs within $O(n^2)$ rounds and a (exponential) bounded number of steps under an unfair distributed daemon. Manne \emph{et al.}~\cite{ManneM07} proposed the first self-stabilizing algorithm for the maximum weighted matching problem achieving an approximation ratio of 2 in a (exponential) bounded number of steps under a centralized daemon and a distributed daemon. Turau \emph{et al.}~\cite{TurauH11a} gave a new analysis of the algorithm of Manne \emph{et al.}~\cite{ManneM07}. They showed that this algorithm converges in $O(nm)$ steps under a centralized daemon and an unfair distributed daemon.\\
More recently, some self-stabilizing works investigated the graph decomposition into disjoint paths. Al-Azemi \emph{et al.}~\cite{Al-AzemiK11} studied the decomposition of the graph in two edge-disjoint paths in general graphs, while Neggazi \emph{et al.}~\cite{NeggaziHK12b} considered the problem of dividing the graph in maximal disjoint paths of length two.
Finally, the partitioning in clusters of the input graph has been extensively studied. Belkouch \emph{et al.}~\cite{BelkouchBCD02} proposed an algorithm to divide a graph of order $k^2$ into $k$ partitions of size $k$. The algorithm is based on spanning tree constructions of height $h$ and converges in $O(h)$ rounds under a weakly fair distributed daemon. Johnen \emph{et al.}~\cite{JohnenN09} studied the weighted clustering problem and introduced the notion of robustness allowing to reach quickly (after one round) a cluster partition. A cluster partition is then preserved during the convergence to a partition satisfying the clusterhead's weight. Bein \emph{et al.}~\cite{BeinDJV05} design a self-stabilizing clustering algorithm dividing the network into non-overlapping clusters of depth two, while Caron \emph{et al.}~\cite{CaronDDL10} considered the $k$-clustering problem in which each node is at most at distance $k$ from its clusterhead. Recently, Datta \emph{et al.}~\cite{DattaLDHR12} design a self-stabilizing $k$-clustering algorithm guaranteeing an approximation ratio in unit disk graphs.\\

All the works presented above concern the graph decomposition problem using different patterns. However, none of them allow to construct a disjoint maximal clique partition of the graph. Note that Ishii and Kakugawa~\cite{IshiiK02} computes a set of maximal cliques which are not necessary disjoint. Moreover, the \emph{non trivial} cliques (with at least two nodes) of the partition must be connected. 

 In~\cite{DelbotLP13}, the authors are interested to the decomposition of an input graph in cliques while satisfying a connectivity property. They propose a centralized algorithm for the Connected Minimal Clique Partition problem (see Definition~\ref{def:cmcp}). The proposed algorithm constructs iteratively a set of maximal cliques $S$. At the beginning of the algorithm, $S$ is empty and a node $u_1 \in V$ is randomly (with equiprobability) selected. A first maximal clique $C_1$ containing $u_1$ is added to $S$ and all the nodes of $C_1$ are marked in $G$. Then for any iteration $i$, any non marked node $u_i \in V, 1 \leq i \leq k,$ neighbor of at least one marked node of $G$ is randomly (with equiprobability) selected. As for the first clique, a new maximal clique containing $u_i$ is greedily built among non marked nodes of $G$. This procedure is executed 
iteratively while there is a non marked node in $G$. As mentioned in~\cite{DelbotLP13}, every \emph{trivial} clique (clique of size one) in the constructed set $S$ is neighbor of no other trivial clique. So the set of trivial cliques of any minimal partition computed by this algorithm induces an independent set of $G$. Otherwise, it could be possible to merge two trivial cliques of $S$ in order to obtain a clique of size two.

\subsection{Self-stabilizing construction}

In this subsection we present the self-stabilizing algorithm called $\mathcal{SS-CMCP}$ for the Minimal Clique Partition problem, a formal description is given in Algorithm~\ref{algo}.

\paragraph{General overview}

The self-stabilizing algorithm $\mathcal{SS-CMCP}$ is based on the approach proposed by Delbot \emph{et al.}~\cite{DelbotLP13} (see description in the precedent subsection). In order to design a distributed version of this approach, we consider here a designated node in the network called the \emph{root} node, noted $r$ in the following, and distances (in hops) from $r$ given in input at each node $p$ noted $dist_p$. These distance values can be obtained by computing a BFS tree rooted at $r$. Several self-stabilizing BFS algorithms can be used, e.g.,~\cite{HuangC92,DolevIM93,Johnen97,CournierRV11}. As described below, we use these information to define an order on the construction of the clique partition of the graph.\\
In the proposed algorithm, the construction of maximal cliques is performed starting from the root $r$ and following the distances in the graph. Indeed, the pair \emph{(distance, node identifier)} allows to define a construction priority for the cliques. First of all, each node shares the set of its neighbors with its neighborhood, allowing for each node to know its 2-hops neighborhood. The 2-hops neighborhood is used by each node to identify amongs its neighbors the ones which can belong to its maximal clique. For each node $p$, we define by \emph{candidate leaders} the set of neighbors $q$ of $p$ such that the pair $(dist_q, \id_q)$ is lexicographically smaller than $(dist_p, \id_p)$. In Algorithm $\mathcal{SS-CMCP}$, each node $p$ can construct its maximal clique by selecting in a greedily manner a set of neighbors $S \subseteq Neig_p$ such that (i) for any $q \in S$ we have $(dist_p, \id_p)<(dist_q, \id_q)$ and (ii) $S \cup \{p\}$ is a complete subgraph. This computation is performed by any node $p$ which has not been selected by one of its candidate leaders. In this case, $p$ is called a \emph{local leader}, otherwise $p$ is no more a local leader and clears out its set $S$. Each node selected by one of its candidate leaders has to accept only the selection of its candidate leader $q$ of smallest pair $(dist_q, \id_q)$. Finally, any local leader $p$ which has initiated the construction of its maximal clique considers in its clique only the selected neighbors which have accepted $p$'s selection.\\
The proposed algorithm maintains a connectivity property between non trivial cliques of the constructed partition. This is a consequence of the construction order of the maximal cliques, which follows the distances in the network from $r$. Indeed, every non trivial clique $C_i$ (that does not contain the root node $r$) is adjacent to at least another non trivial clique $C_j$, such that $dist_{l_j} \leq dist_{l_i}$ with $l_k$ the local leader of the clique $C_k$. Otherwise, by construction another local leader $l_g$, with $dist_{l_g} \leq dist_{l_i}$, selects the local leader $l_i$ to belong to its maximal clique. As a consequence, the maximal clique $C_i$ is removed. In fact, the algorithm constructs a specific clique partition among the possible partitions that the centralized approach proposed in~\cite{DelbotLP13} can compute.

\paragraph{Detailed description}

In the following, we give more details on the proposed algorithm $\mathcal{SS-CMCP}$. Our algorithm is composed of four rules executed by every node and five variables are maintained at each node:

\begin{itemize}
\item $N_p$: this variable contains the set of neighbors of $p$ which allows to each node to be informed of the 2-hops neighborhood,
\item $d_p$: this variable is used to exchange the value of $dist_p$ with $p$'s neighbors,
\item $S_p$: this variable is used by $p$ to indicate in its neighborhood the nodes selected by $p$ (if $p$ is a local leader),
\item $C_p$: this variable contains the set of nodes which belong to the maximal clique of $p$ (if $p$ is a local leader),
\item $lead_p$: this variable stores the local leader in the neighborhood of $p$.
\end{itemize}

As explained above, each node stores in variable $N_p$ the set of its 1-hop neighborhood, this is done using the first rule $N$-action of the algorithm which is executed in case we have $N_p \neq Neig_p$. The information stored in this variable is used by each node in $p$'s neighborhood for the computation of maximal cliques. For each node, the set of candidate leaders is given by Macro $LNeig_p$, and among this set of nodes the Macro $SNeig_p$ indicates the neighbors which have selected $p$ for the construction of their own maximal clique. Every node $p$ which is not selected by a candidate leader does not satisfy Predicate $Selected(p)$ and can execute $C1$-action to start the construction of its own maximal clique. The procedure $Clique\_temp()$ selects in a greedily manner the neighbors which forms with $p$ a complete subgraph. By executing $C1$-action, a node $p$ becomes a local leader by storing its identifier in its variable $lead_p$ and notifies with its variable $S_p$ the neighbors it has selected using Procedure $Clique\_temp()$. $C1$-action can be executed by a node $p$ only if $S_p$ does not contain the correct set of selected neighbors, i.e., we have $S_p \neq Clique\_temp()$. Then, each node $p$ selected by a candidate leader (i.e., which satisfies Predicate $Selected(p)$) can execute $C2$-action to accept the selection of its candidate leader $q$ of smallest pair $(dist_q, \id_q)$. In this case, we say that $q$ has been elected as the local leader of $p$. This is given by Macro $Leader_p$ and stored in the variable $lead_p$. $C2$-action is only executed if the variable $lead_p$ does not store the correct local leader for $p$, i.e., we have $lead_p \neq Leader_p$. Finally, $C3$-action allows to each local leader $p$ to establish the set of neighbors $q$ which are contained in its maximal clique. This set is stored in variable $C_p$ and is given by Macro $Clique(p)$ considering only the neighbors $q$ of $p$ which have elected $p$ as their local leader (i.e., $lead_q=\id_p$). This last rule is executed only by local leaders 
which are not selected to belong to another clique (i.e., $Selected(p)$ is not satisfied) and have not computed the correct set of neighbors contained in their maximal clique (i.e., $S_p=Clique\_temp()$ and $C_p \neq Clique_p$).

\begin{algorithm}[t]
\caption{\quad Self-Stabilizing Connected Minimal Clique Partition algorithm for any $p \in V$\label{algo}}
\smallskip
\begin{scriptsize}
{\bf Inputs:}\\
\hspace*{1cm}$Neig_p$: set of (locally) ordered neighbors of $p$;\\
\hspace*{1cm}$\id_p$: unique identifier of $p$;\\
\hspace*{1cm}$dist_p$: distance between $p$ and the root (leader node);\\
{\bf Variables:}\\
\hspace*{1cm}$N_p$: variable used to exchange the neighbor set $Neig_p$ in $p$'s neighborhood, $N_p \subseteq Neig_p$;\\
\hspace*{1cm}$d_p$: variable used to exchange the distance $dist_p$ in $p$'s neighborhood, $d_p \in \N$;\\
\hspace*{1cm}$S_p$: variable used by $p$ to select neighbors for the construction of its maximal clique, $S_p \subseteq Neig_p$;\\
\hspace*{1cm}$C_p$: variable used to store the set of neighbors belonging to the maximal clique of $p$, $C_p \subseteq Neig_p$;\\
\hspace*{1cm}$lead_p$: variable used to store the local leader of $p$, $lead_p \in Neig_p$;

.\dotfill\ 

{\bf Macros:}\\
\begin{tabular}{lll}
$Clique_p$ & $=$ & $\{q \in S_p: lead_q=\id_p\}$\\
$LNeig_p$ & $=$ & $\{q \in Neig_p: d_q<d_p \vee (d_q=d_p \wedge \id_q < \id_p)\}$\\
$SNeig_p$ & $=$ & $\{q \in LNeig_p: p \in S_q\}$\\
$Leader_p$ & $=$ & $\left\{\begin{array}{lll}\bot & & \mbox{If }SNeig_p=\emptyset\\ \min \{q \in SNeig_p: (\forall s \in SNeig_p: d_q \leq d_s)\} & & \mbox{Otherwise}\end{array}\right.$\\
\end{tabular}\\
.\dotfill\

{\bf Predicate:}\\
\begin{tabular}{lll}
$Selected(p)$ & $\equiv$ & $SNeig_p \neq \emptyset$\\
\end{tabular}\\
.\dotfill\

{\bf Procedure:}\\
Clique\_temp()
\begin{algorithmic}[1]
\STATE{$S:=\{p\};$}
\FORALL{$q \in (Neig_p-LNeig_p)$}
\IF{$S \subseteq N_q$}
\STATE{$S:=S \cup \{q\};$}
\ENDIF
\ENDFOR
\RETURN $S$;

\end{algorithmic}
.\dotfill\

{\bf Actions:}\\
\begin{tabular}{lllllll}
&& $N$-$action$  & $::$ & $N_p \neq Neig_p \vee d_p \neq dist_p$ & $\to$ & $N_p:=Neig_p; d_p:=dist_p$;\\
&& $C1$-$action$ & $::$ & $\neg Selected(p) \wedge S_p \neq Clique\_temp()$ & $\to$ & $S_p:=Clique\_temp();lead_p:=\id_p$;\\
&& $C2$-$action$ & $::$ & $Selected(p) \wedge lead_p \neq Leader_p$ & $\to$ & $lead_p:=Leader_p; S_p:=\emptyset; C_p:=\emptyset$;\\
&& $C3$-$action$ & $::$ & $\neg Selected(p) \wedge S_p=Clique\_temp() \wedge C_p \neq Clique_p$ & $\to$ & $C_p:=Clique_p$;\\
\end{tabular}
\end{scriptsize}
\end{algorithm}

\paragraph{Example of an execution}

We illustrate with an example given in Figure~\ref{fig:ex_algo1} how the proposed algorithm $\mathcal{SS-CMCP}$ constructs a Connected Minimal Clique Partition. In this example, we consider a particular execution following the distances in the graph and we give only the correct cliques which are constructed by the algorithm. We consider the topology given in Figure~\ref{fig:ex_algo1}(a). First of all, each node exchanges its neighbors set using $N$-action. The root node $r$ cannot be selected by one of its neighbors, so by executing $C1$-action it becomes a local leader (i.e., $lead_r= \id_r$) and selects among its neighbors the nodes to include in its maximal clique, i.e., by indicating in its variable $S_r$ the nodes 1, 2 and 5. Then, nodes 1, 2 and 5 detect that they have been selected by $r$ (their unique possible candidate leader) and in response they elect $r$ using $C2$-action. The node $r$ executes $C3$-action to construct its maximal clique by adding in its variable $C_r$ the nodes which have elected $r$ as their local leader, i.e., nodes 1, 2 and 5, as illustrated in Figure~\ref{fig:ex_algo1}(b). Next, the nodes 3, 4 and 6 elect themselves as local leaders since they are not selected to belong to a clique. They execute $C1$-action to select among their neighbors of equal or higher distance those which forms a complete subgraph (including themselves), i.e., neighbors 10 and 15 for node 3, neighbor 7 for node 4 and neighbor 9 for 6. The selected neighbors execute $C2$-action to elect the single candidate leader neighbor which has selected them to join a clique. We remind that in case of a selection from multiple candidate leaders a selected node elects the candidate leader $x$ of smallest pair $(dist_x, \id_x)$ with Macro $Leader$. Then, the local leaders 3, 4 and 6 execute $C3$-action to construct respectively their maximal clique as illustrated in Figure~\ref{fig:ex_algo1}(c). In the same way, nodes 8 and 12 become local leaders and select respectively no neighbor and neighbors 11 and 14 to join their clique. The neighbors selected by node 12 elect 12 as their local leader and node 12 constructs its maximal clique, while node 8 constructs a trivial clique as illustrated in Figure~\ref{fig:ex_algo1}(d). Finally, node 13 becomes a local leader and constructs a trivial clique as illustrated in Figure~\ref{fig:ex_algo1}(e), which gives the complete clique partition constructed by the algorithm.

\begin{figure}[!ht]
\begin{center}
\includegraphics[scale=0.6]{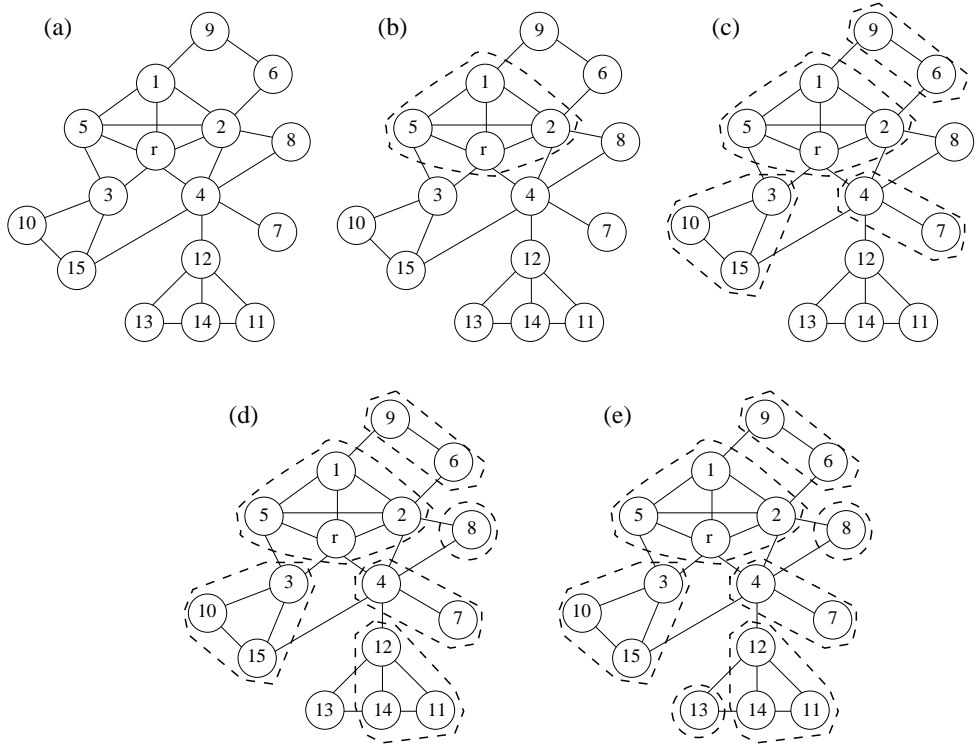}
\end{center}
\caption{Execution of Algorithm $\mathcal{SS-CMCP}$.}
\label{fig:ex_algo1}
\end{figure}

\subsection{Correctness proof}

\begin{definition}[Rank of a node]
The \emph{rank} of any node $p \in V$ is defined by the pair $(dist_p, \id_p)$. Given two nodes $p, q \in V, p \neq q$, we say that the rank of $p$ is higher than the rank of $q$, noted $rank(p) \prec rank(q)$, iff either $dist_p<dist_q$, or $dist_p=dist_q$ and $\id_p<\id_q$.
\end{definition}

\begin{definition}[Selection of nodes]
A node $q \in V$ is \emph{selected} by a neighbor $p$ of $q$ if we have $rank(p) \prec rank(q)$ and $q \in S_p$ (i.e., Predicate $Selected(q)$ is satisfied at $q$).
\end{definition}

\begin{definition}[Local leader]
\label{def:local_leader}
Given any clique partition $C_1, \dots, C_k$ of a graph $G=(V,E)$, a node $p_i \in V, 1 \leq i \leq k,$ is a \emph{local leader} of a clique $C_i$ if we have $p_i \in C_i$ and $p_i$ is not selected (i.e., we have $\neg Selected(p_i)$ at $p_i$).
\end{definition}

\begin{definition}[Rank of a clique]
Given any clique partition $C_1, \dots, C_k$ of a graph $G=(V,E)$, the rank associated to a clique $C_i, 1 \leq i \leq k,$ is equal to the rank of the local leader $p_i$ of $C_i$.
\end{definition}

\begin{rem}
\label{rem:total_order_clique}
Given any clique partition $C_1, \dots, C_k$ of a graph $G=(V,E)$, the rank of the cliques define a total order on the cliques of the partition in $G$.
\end{rem}

\begin{definition}[Election of a local leader]
Let $G=(V,E)$ be any graph and $p \in V$ a node selected by a local leader $p_i \in V$. $p$ has \emph{elected} $p_i$ to join its clique if we have $lead_p=\id_{p_i}$.
\end{definition}

\begin{definition}[Correct clique]
\label{def:correct_clique}
Given a clique partition $C_1, \dots, C_k$ of a graph $G=(V,E)$, a clique $C_i, 1 \leq i \leq k,$ is \emph{correct} iff the following conditions are satisfied:
\begin{enumerate}
\item There is a single local leader $p_i \in V$ in $C_i$;
\item $p_i$ has selected a subset $S_{p_i} \subseteq Neig_{p_i}$ of its neighbors such that every neighbor $q \in S_{p_i}$ has a rank lower than $p_i$'s rank and $p_i \cup S_{p_i}$ is a maximal clique, i.e., $(\forall q \in (Neig_{p_i}-LNeig_{p_i}), [q \in S_{p_i} \wedge (\forall s \in S_{p_i}, q \neq s \wedge q \in Neig_s)] \vee [q \not \in S_{p_i} \wedge (\exists s \in S_{p_i},q \not \in Neig_s)])$;
\item Every node $q$ selected by $p_i$ has elected $p_i$ iff $p_i$ is the local leader of highest rank in $q$'s neighborhood, i.e., $(\forall q \in S_{p_i}, [\forall s \in (Neig_q \cup \{q\}), rank(p_i) \prec rank(s)] \Rightarrow lead_q=\id_{p_i})$;
\item Every node selected by $p_i$ which has elected $p_i$ belongs to the clique $C_i$ of $p_i$, i.e., $(\forall q \in S_{p_i}, lead_q=\id_{p_i} \Rightarrow q \in C_{p_i})$.
\end{enumerate}
\end{definition}

\begin{definition}[Path]
In a graph $G=(V,E)$, the sequence of nodes $\mathcal{P}_{G}(x,y)=<p_0=x, p_1, \dots, p_k=y>$ is called a \emph{path} between $x,y \in V$ if $\forall i, 1 \leq i \leq k, (p_i,p_{i-1}) \in E$. The nodes $p_0$ and $p_k$ are termed as the \emph{extremities} of $\mathcal{P}$. The length of $\mathcal{P}$ is noted $|\mathcal{P}|=k$.
\end{definition}

\begin{definition}[Legitimate configuration]
\label{def:config_legitime}
Let $\mathcal{C}$ be the set of all possible configuration of the system. A configuration $\gamma \in \mathcal{C}$ is legitimate for Algorithm $\mathcal{SS-CMCP}$ iff every clique constructed by a local leader in $\gamma$ satisfies Definition~\ref{def:correct_clique}.
\end{definition}

\subsubsection{Proof assuming a weakly fair daemon}

In the following we consider that for each node $p \in V$ the input $dist_p$ is correct, i.e., $dist_p$ is equal to the distance (in hops) between $p$ and $r$ in $G$. We begin the proof by showing in the above theorem that in an illegitimate configuration of the system there exists a node which can execute an action of Algorithm $\mathcal{SS-CMCP}$.

\begin{theorem}
\label{thm:cmcp_enable_action}
Let the set of configurations $\mathcal{B} \subseteq \mathcal{C}$ such that every configuration $\gamma \in \mathcal{B}$ satisfies Definition~\ref{def:config_legitime}. $\forall \gamma \in (\mathcal{C}-\mathcal{B}), \exists p \in V$ such that $p$ is enabled in $\gamma$. 
\end{theorem}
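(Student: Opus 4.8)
The plan is to prove the contrapositive: I would show that every \emph{terminal} configuration (one in which no node is enabled) belongs to $\mathcal{B}$, i.e. is legitimate. Fix such a $\gamma$. Since no guard is satisfiable anywhere, for every node $p$ the four guards are simultaneously false, which yields four local invariants per node: from $N$-$action$ that $N_p=Neig_p$ and $d_p=dist_p$; from $C1$-$action$ that $Selected(p) \vee S_p=Clique\_temp()$; from $C2$-$action$ that $\neg Selected(p) \vee lead_p=Leader_p$; and from $C3$-$action$ that $Selected(p) \vee S_p\neq Clique\_temp() \vee C_p=Clique_p$. Note that the daemon and its fairness play no role here, since enabledness is a purely structural property of a single configuration.

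First I would exploit the $N$-$action$ invariant globally. Because $N_p=Neig_p$ and $d_p=dist_p$ hold at \emph{every} node, and because $dist_p$ is assumed correct, all the macros ($LNeig_p$, $SNeig_p$, $Leader_p$, $Clique_p$) and the predicate $Selected(p)$ evaluate to their intended semantic meaning: $LNeig_p$ is exactly the set of neighbours of $p$ of higher rank, $SNeig_p$ the subset of those that have placed $p$ in their selection set, and $Selected(p)$ holds iff some higher-rank neighbour selects $p$. This alignment between the syntactic variables and the rank-based notions introduced above is what makes the remaining verification possible.

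Next I would fix an arbitrary local leader $p$, i.e. a node with $\neg Selected(p)$, and check the four conditions of Definition~\ref{def:correct_clique} for its clique. Condition~2 follows immediately: $\neg Selected(p)$ together with the $C1$-$action$ invariant forces $S_p=Clique\_temp()$, and a short analysis of the greedy procedure (each added neighbour is adjacent to all previously added ones, so $\{p\}\cup S_p$ is a clique, while each rejected lower-rank neighbour misses an edge to some member) gives exactly the formal maximality clause. Condition~4 follows from the $C3$-$action$ invariant: with $\neg Selected(p)$ and $S_p=Clique\_temp()$ already in hand, the invariant yields $C_p=Clique_p=\{q\in S_p: lead_q=\id_p\}$, so every selected neighbour that elected $p$ lies in $C_p$. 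Condition~1 is then immediate, since any $q\in C_p$ satisfies $q\in S_p$ with $rank(p)\prec rank(q)$, hence $Selected(q)$ holds and $q$ cannot itself be a local leader. Condition~3 uses the $C2$-$action$ invariant: for each selected $q\in S_p$ we have $lead_q=Leader_q$, and $Leader_q$ is the highest-rank element of $SNeig_q$; when $p$ is the highest-rank local leader in $q$'s closed neighbourhood, $p\in SNeig_q$ dominates, so $Leader_q=\id_p$ and thus $lead_q=\id_p$.

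I expect the main obstacle to be Condition~3, for two reasons. First, one must reconcile the macro $Leader_q$, defined purely through the selection variables, with the semantic phrase \emph{``local leader of highest rank in $q$'s neighbourhood''}; in particular a node that is itself selected may still carry a stale, non-empty selection set in a terminal configuration (because $C2$-$action$ fires only when $lead\neq Leader$), and such a stale selector can enter some $SNeig_q$. One therefore has to verify that it cannot spuriously change $Leader_q$ in the regime where the implication of Condition~3 is non-vacuous, namely when $p$ strictly dominates $q$'s closed neighbourhood: there every selector of $q$ is a neighbour of $q$ and hence of strictly lower rank than $p$, so $p$ still wins the $Leader_q$ computation. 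Second, the quantifier in the formal statement of Condition~3 must be read as excluding $p$ itself, and one must check that the implication is vacuously true in exactly the ambiguous cases, which is precisely what keeps these terminal-but-stale configurations inside $\mathcal{B}$. The maximality argument for Condition~2 is routine, but it is the other point at which the global correctness of the $N_q=Neig_q$ invariant is genuinely needed.
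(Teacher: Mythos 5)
Your proposal is correct and takes essentially the same route as the paper: the paper argues by contradiction that in a non-legitimate terminal configuration some claim of Definition~\ref{def:correct_clique} must fail at a local leader's clique, and shows claim-by-claim that the failure makes $C1$-action, $C2$-action or $C3$-action enabled --- exactly the mirror image of your contrapositive derivation of each claim from the negated guards. Your explicit handling of stale selection sets in Condition~3 (observing that in the non-vacuous case every competing selector of $q$ is a neighbour of $q$, hence of lower rank than $p$, so $Leader_q=\id_p$ anyway) is a detail the paper's proof glosses over, but it does not change the substance or structure of the argument.
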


\begin{proof}
Assume, by the contradiction, that $\exists \gamma \in (\mathcal{C}-\mathcal{B})$ such that $\forall p \in V$ no action of Algorithm~\ref{algo} is enabled at $p$ in $\gamma$. According to Definition~\ref{def:config_legitime}, this implies that there exists a local leader $p_i \in V$ such that its clique $C_i$ does not satisfy Definition~\ref{def:correct_clique}.\\
If Claim 1 of Definition~\ref{def:correct_clique} is not satisfied in $\gamma$ then this implies that there is at least two local leaders in $C_i$. By definition of a local leader (see Definition~\ref{def:local_leader}), there is a node $q$ in $C_i$, $q \neq p_i$, which satisfies Predicate $\neg Selected(q)$. This implies that $q$ has not been selected by $p_i$, i.e., $q \not \in S_{p_i}$ and $q \in C_{p_i}$. According to the formal description of Algorithm $\mathcal{SS-CMCP}$, Macro $Clique_{p_i}$ returns the selected neighbors of $p_i$ which has elected $p_i$. So, since $q \not \in Leader_{p_i}$ and $q \in C_{p_i}$ then we have $C_{p_i} \neq Leader_{p_i}$ and $C3$-action is enabled at $p_i$, a contradiction. If Claim 2 of Definition~\ref{def:correct_clique} is not satisfied in $\gamma$ then this implies that either $p_i$ has selected a subset of its neighbors $S_{p_i}$ which does not form a maximal subgraph, i.e., we have $(\exists q \in (Neig_{p_i}-LNeig_{p_i}), q \not \in S_{p_i} \wedge (\forall s \in S_{p_i}, q \in Neig_s))$, or the selected subset $S_{p_i}$ does not define with $p_i$ a complete subgraph, i.e., $(\exists q \in (Neig_{p_i}-LNeig_{p_i}), q \in S_{p_i} \wedge (\exists s \in S_{p_i}, q \not \in Neig_s))$. According to the formal description of Algorithm $\mathcal{SS-CMCP}$, a local leader computes its selected neighbors using Procedure $Clique\_temp()$. So, we have $S_{p_i} \neq Clique\_temp()$ and $C1$-action is enabled at $p_i$, a contradiction. If Claim 3 of Definition~\ref{def:correct_clique} is not satisfied in $\gamma$ then this implies that there exists a neighbor $q$ selected by $p_i$ such that $q$ has not elected $p_i$ while $p_i$ has the highest rank in $q$'s neighborhood, i.e., $(\exists q \in S_{p_i}, (\forall s \in (Neig_q \cup q), rank(p_i) \prec rank(s)) \wedge lead_q \neq \id_{p_i})$. According to the formal description of Algorithm $\mathcal{SS-CMCP}$, Macro $Leader_p$ returns the neighbor of $p$ which has the highest rank. Moreover, we have $Leader_q = \id_{p_i}$ since by hypothesis $p_i$ has the highest rank in $q$'s neighborhood. So, we have $lead_q \neq Leader_q$ and $C2$-action is enabled at $q$, a contradiction. If Claim 4 of Definition~\ref{def:correct_clique} is not satisfied in $\gamma$ then this implies that there exists a selected neighbor $q$ of $p_i$ which does not belong to $C_i$ while $q$ has elected $p_i$, i.e., $(\exists q \in S_{p_i}, lead_q=\id_{p_i} \wedge q \not \in C_{p_i})$. According to the formal description of Algorithm $\mathcal{SS-CMCP}$, Macro $Clique_{p_i}$ returns the selected neighbors of $p_i$ which have elected $p_i$. So, we have $C_{p_i} \neq Clique_{p_i}$ and $C3$-action is enabled at $p_i$, a contradiction.
\end{proof}

To show the convergence of Algorithm $\mathcal{SS-CMCP}$ to a legitimate configuration, we now prove several sub-lemmas allowing to show that Algorithm $\mathcal{SS-CMCP}$ constructs a partition of correct cliques following the rank of the cliques (see Lemma~\ref{lem:construct_clique_prio}).

\begin{lemma}
\label{lem:N-action}
After executing $N$-action at any node $p \in V$, $N$-action is disabled at $p$.
\end{lemma}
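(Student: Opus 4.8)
The plan is to argue directly from the text of $N$-action, exploiting the fact that its guard depends only on the variables $N_p, d_p$ of $p$ itself and on the inputs $Neig_p, dist_p$, none of which a neighbor can influence. First I would recall that the guard of $N$-action is the disjunction $N_p \neq Neig_p \vee d_p \neq dist_p$ and that its statement performs exactly the two assignments $N_p := Neig_p$ and $d_p := dist_p$.

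The key observation is that $Neig_p$ and $dist_p$ are constant inputs of $p$ (as stated in the model, $Neig_p$ is a constant input and $dist_p$ is the fixed hop-distance from $r$), so executing the statement does not alter the right-hand sides against which the guard compares. Hence, immediately after $p$ executes $N$-action, we have $N_p = Neig_p$ and $d_p = dist_p$ simultaneously, so each of the two disjuncts $N_p \neq Neig_p$ and $d_p \neq dist_p$ evaluates to false.

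It follows that the guard of $N$-action is false at $p$ in the resulting configuration, i.e. $N$-action is no longer enabled at $p$, which is the claim. I would add the remark that this conclusion is insensitive to any concurrent moves of $p$'s neighbors in the same step: since a processor may write only its own variables and the inputs are constant, nothing another processor does can re-enable $N$-action at $p$. There is no genuine obstacle in this lemma; because the statement of the action writes precisely the values that its guard tests, the result is immediate by inspection of Algorithm~\ref{algo}.
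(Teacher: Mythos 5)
Your proof is correct and rests on the same facts as the paper's: the statement of $N$-action writes exactly the values $Neig_p$ and $dist_p$ that its guard tests, and since these are constant inputs (static graph, fault-free execution) that no neighbor can influence, the guard is false immediately after execution and remains so. The paper phrases this as a proof by contradiction while you argue directly, but the substance is identical.
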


\begin{proof}
Assume, by the contradiction, that $N$-action is enabled at any node $p \in V$ after its execution. If $p$ can execute $N$-action again then this implies that we have $N_p \neq Neig_p$ or $d_p \neq dist_p$ which is due to a modification in $p$'s neighborhood or a fault. This is a contradiction because we assume a static graph $G=(V,E)$ and a system execution without faults until reaching a legitimate configuration starting from an arbitrary configuration. 
\end{proof}

In the following, we note $\overline{S} \subseteq V$ the set of nodes in $\gamma \in \mathcal{C}$ such that every node $p \in \overline{S}$ is not selected by a neighbor of rank higher than $rank(p)$, i.e., $\overline{S}$ contains the set of local leaders in $\gamma$.

\begin{rem}
\label{rem:nodes_rank_clique}
A local leader $p_i \in \overline{S}$ can only select a node $p$ in its neighborhood such that $rank(p_i) \prec rank(p)$.
\end{rem}

\begin{proof}
According to the formal description of Algorithm $\mathcal{SS-CMCP}$, Macro $LNeig_{p_i}$ returns the neighbors $p$ of $p_i$ such that $rank(p_i) \prec rank(p)$. Moreover, Procedure $Clique\_temp()$ chooses nodes in the neighborhood of $p_i$ which are not included in the set given by Macro $LNeig_{p_i}$ (see line 2 of Procedure $Clique\_temp()$).
\end{proof}

\begin{lemma}
\label{lem:C1-action}
When $C1$-action is enabled at $p_i \in \overline{S}$, it remains enabled until $p_i$ executes it or $p_i \not \in \overline{S}$.
\end{lemma}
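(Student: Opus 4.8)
The plan is a per-step persistence (closure) argument: I would show that in any step $\gamma \mapsto \gamma'$ of a computation where $p_i \in \overline{S}$ in both configurations and $p_i$ does not execute $C1$-action, the guard of $C1$-action stays true, and then conclude by induction on the number of steps. Recall that this guard is $\neg Selected(p_i) \wedge (S_{p_i} \neq Clique\_temp())$, so it suffices to argue that neither conjunct can flip from true to false under the stated restrictions. For the first conjunct I would simply observe that, by the definitions of $Selected$ and of $\overline{S}$, the predicate $\neg Selected(p_i)$ is literally the condition $p_i \in \overline{S}$; hence it holds on the whole window under consideration and can stop holding only at the instant $p_i$ leaves $\overline{S}$, which is one of the two allowed stopping events.

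For the second conjunct I would first freeze $S_{p_i}$. In the local-shared-memory model only $p_i$ writes $S_{p_i}$, and among its four actions only $C1$-action and $C2$-action assign this variable. Now $C2$-action is disabled throughout the window because its guard requires $Selected(p_i)$, which is false while $p_i \in \overline{S}$; $C1$-action is excluded by hypothesis; and $N$-action and $C3$-action leave $S_{p_i}$ untouched. Hence $S_{p_i}$ is constant on the window, and the conjunct $S_{p_i} \neq Clique\_temp()$ can fail only if the value returned by $Clique\_temp()$ at $p_i$ changes so as to coincide with this frozen set.

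The crux, and the only delicate point, is therefore to control $Clique\_temp()$. Its output depends only on the constant input $Neig_{p_i}$, on $LNeig_{p_i}$ (through $d_{p_i}$ and the $d_q$), and on the neighbor sets $N_q$ with $q \in Neig_{p_i}-LNeig_{p_i}$; every one of these variable dependencies is written exclusively by $N$-action. The danger I must rule out is that a neighbor, or $p_i$ itself, updates some $N_q$ or $d_q$ and thereby shifts $Clique\_temp()$ onto the stale $S_{p_i}$, disabling $C1$-action without $p_i$ acting and without $p_i$ leaving $\overline{S}$. I would dispose of this using Lemma~\ref{lem:N-action}: each node executes $N$-action at most once, and once its $N$/$d$ variables agree with their constant inputs they never change again, so only finitely many steps can ever modify $Clique\_temp()$ at $p_i$, after which it is permanently fixed (and set consistently, since $dist_p$ is assumed correct). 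Reasoning within the regime where these one-shot $N$-updates have completed, $Clique\_temp()$ is constant, the second conjunct persists, and $C1$-action remains enabled until $p_i$ executes it or $p_i \notin \overline{S}$. I expect this control of $Clique\_temp()$ against neighbor-induced changes to be the main obstacle; the remainder is bookkeeping on which actions write which variables.
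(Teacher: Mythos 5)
Your proposal is correct and its core is the same argument as the paper's: a single-step persistence/contradiction argument in which the first conjunct $\neg Selected(p_i)$ is exactly the hypothesis $p_i \in \overline{S}$, and the second conjunct persists because $S_{p_i}$ is written only by $p_i$ (via $C1$-action, excluded by hypothesis, or $C2$-action, disabled while $\neg Selected(p_i)$ holds), so it is frozen across the step. Where you go beyond the paper is in what you call the crux: the paper's proof concludes ``$S_{p_i} \neq Clique\_temp()$ still holds in $\gamma'$'' directly from $S_{p_i}$ being frozen, tacitly treating $Clique\_temp()$ as constant, even though its value depends on the neighbors' variables $d_q$ and $N_q$, which other processors may rewrite via $N$-action in the same step. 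You explicitly rule this out using Lemma~\ref{lem:N-action}, at the price of restricting the claim to the regime where the one-shot $N$-updates have completed (equivalently, where $N_q = Neig_q$ and $d_q = dist_q$ hold around $p_i$). This restriction is harmless and in fact matches how the lemma is used in the paper: in Lemma~\ref{lem:correct_clique_round} the $N$-variables are stabilized in the first round before $C1$-action is invoked, and the whole section already assumes the $dist_p$ inputs are correct. So your version is, if anything, the more rigorous one; the only thing to flag is that what you prove is the lemma under that explicit stabilization hypothesis rather than verbatim from an arbitrary configuration, and it is worth stating that hypothesis up front rather than discovering it mid-proof.
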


\begin{proof}
Let $\gamma \mapsto \gamma'$ be a step. Assume, by the contradiction, that $C1$-action is enabled at $p_i \in \overline{S}$ in $\gamma$ and not in $\gamma'$ (i.e., $S_{p_i}=Clique\_temp()$ in $\gamma'$) but $p_i$ did not execute $C1$-action in $\gamma \mapsto \gamma'$. According to the hypothesis of the lemma, we assume that $p_i \in \overline{S}$ in $\gamma'$, so we have $\neg Selected(p_i)$ in $\gamma'$. Since $p_i$ did not move in $\gamma \mapsto \gamma'$ and the variable $S_{p_i}$ can only be modified locally by $p_i$ by executing $C1$-action, we have $S_{p_i} \neq Clique\_temp()$ at $p_i$ in $\gamma'$, a contradiction.
\end{proof}

\begin{lemma}
\label{lem:select_clique}
The node $p_i \in \overline{S}$ of highest rank selects the maximal subset of its neighbors which can belong to its clique $C_i$ if $C_i$ does not satisfy Claim 2 of Definition~\ref{def:correct_clique}.
\end{lemma}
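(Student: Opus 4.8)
The plan is to prove this as a convergence statement: I want to show that the highest-rank node of $\overline{S}$ is eventually forced to execute $C1$-action and thereby install in $S_{p_i}$ exactly the greedy maximal clique returned by $Clique\_temp()$, whenever its current clique $C_i$ violates Claim 2 of Definition~\ref{def:correct_clique}. Concretely, I would verify that (i) the guard of $C1$-action is satisfied at $p_i$, (ii) it stays satisfied until $p_i$ moves, and (iii) under the weakly fair daemon $p_i$ eventually moves; the effect of that move is precisely the asserted selection.

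First I would pin down which node $p_i$ is. Since the root $r$ has $dist_r=0$, it has the globally highest rank, so no neighbor of $r$ can satisfy $rank(q)\prec rank(r)$. By the Definition of selection, a node is selected only by a strictly higher-rank neighbor; hence $r$ is never selected, $\neg Selected(r)$ holds in every configuration, and $r\in\overline{S}$ permanently. As no node outranks $r$, the highest-rank element of $\overline{S}$ is $r$ in every configuration, so $p_i=r$ and $p_i$ can never leave $\overline{S}$. In particular the conjunct $\neg Selected(p_i)$ of the guard of $C1$-action holds at $p_i$ forever.

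Next I would argue that the second conjunct of the guard, $S_{p_i}\neq Clique\_temp()$, is forced by the hypothesis. By Lemma~\ref{lem:N-action}, once a node has executed $N$-action its variable $N_q$ equals $Neig_q$ and stays correct; under the weakly fair daemon this holds for all of $p_i$'s neighbors after the first round. From then on $Clique\_temp()$ scans $Neig_{p_i}-LNeig_{p_i}$ (the lower-rank neighbors, cf. Remark~\ref{rem:nodes_rank_clique}) against the true adjacency and returns a set $S$ for which $\{p_i\}\cup S$ is a maximal clique, i.e. $S$ satisfies the completeness/maximality formula of Claim 2. Hence any value equal to $Clique\_temp()$ already satisfies Claim 2; contrapositively, if $C_i$ does not satisfy Claim 2 then the current $S_{p_i}$ is not such a maximal clique (either $\{p_i\}\cup S_{p_i}$ fails to be complete, or some lower-rank neighbor can still be added), so $S_{p_i}\neq Clique\_temp()$. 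Combined with $\neg Selected(p_i)$, the guard of $C1$-action is satisfied, so the action is enabled at $p_i$.

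Finally I would close the convergence. Because $p_i=r$ remains in $\overline{S}$ forever, Lemma~\ref{lem:C1-action} applies: once enabled, $C1$-action stays enabled at $p_i$ until $p_i$ executes it. The weakly fair daemon then guarantees that $p_i$ eventually performs $C1$-action, setting $S_{p_i}:=Clique\_temp()$, which is the maximal subset of its lower-rank neighbors forming a clique with $p_i$ — exactly the claimed selection. The step I expect to be the crux is the middle one: translating the graph-theoretic failure of Claim 2 into the syntactic inequality $S_{p_i}\neq Clique\_temp()$. This requires noting that only this one direction is needed (failure of maximality forces the inequality, whereas the converse could fail if two distinct maximal cliques existed) and that it relies on the $N$-variables having already converged, so that $Clique\_temp()$ faithfully mirrors the real neighborhood.
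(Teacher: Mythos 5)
Your proof is correct for the statement read literally, and its technical core coincides with the paper's: both reduce the hypothesis ``Claim 2 fails'' to the syntactic guard $S_{p_i} \neq Clique\_temp()$ (you via the contrapositive ``any output of $Clique\_temp()$ satisfies Claim 2,'' the paper via an explicit two-case analysis: the selected set is either not a complete subgraph or not maximal), both dispose of the priority issue through Lemma~\ref{lem:N-action}, and both obtain eventual execution from persistence (Lemma~\ref{lem:C1-action}) plus the weakly fair daemon; that the paper frames this as a contradiction rather than a convergence argument is cosmetic. The genuine divergence is your opening move of pinning $p_i$ down to the root $r$. Since rank is computed from the (assumed correct) inputs $dist_p$, this identification is sound, and it buys a cleaner persistence step: $r$ can never leave $\overline{S}$, so the escape clause of Lemma~\ref{lem:C1-action} never fires. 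But it also buys strictly less than the paper's proof: the paper's argument never actually uses the highest-rank hypothesis, so it establishes the property for \emph{every} $p_i \in \overline{S}$, and that generality is precisely what Lemmas~\ref{lem:construct_clique_prio} and~\ref{lem:config_legitime_round} later consume when they apply this lemma iteratively to local leaders at distance $k>0$ from $r$ (i.e., to the highest-rank leader among those whose cliques are still incorrect, which is not $r$ once $r$'s clique is done). A proof specialized to $r$ cannot be reused there, so the lemma would have to be restated and reproved in the general form anyway. One further caution: your assertion that $\neg Selected(r)$ holds \emph{in every configuration} is only true for the rank-based notion of selection; the algorithm's predicate $Selected$ is evaluated on the variables $d_q$, which are arbitrary in the initial configuration, so this claim needs the same ``after the first round the $d$- and $N$-variables are correct'' caveat that you already state for the fidelity of $Clique\_temp()$ (the paper's proof makes the same identification silently, so this is an imprecision inherited from the paper rather than an error of yours).
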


\begin{proof}
According to the formal description of Algorithm $\mathcal{SS-CMSP}$, a local leader executes $C1$-action to select the maximal subset of its neighbors which can belong to its clique. Assume, by the contradiction, that the node $p_i \in \overline{S}$ of highest rank does not select the maximal subset of its neighbors to belong to its clique $C_i$ while $C_i$ does not satisfy Claim 2 of Definition~\ref{def:correct_clique}. That is, $C1$-action is disabled or it is not the enabled action of highest priority at $p_i$.\\
We first show that $C1$-action is enabled at $p_i$. By definition of $\overline{S}$, we have $\neg Selected(p_i)$ at $p_i$. Moreover, Procedure $Clique\_temp()$ chooses in a deterministic greedy manner a maximal subset of $p_i$'s neighbors which define with $p_i$ a complete subgraph, i.e., satisfying $(\forall q \in Neig_{p_i}, [q \in S_{p_i} \wedge (\forall s \in S_{p_i}, q \neq s \wedge q \in Neig_s)] \vee [q \not \in S_{p_i} \wedge (\exists s \in S_{p_i}, q \not \in Neig_s)])$. Since Claim 2 of Definition~\ref{def:correct_clique} is not satisfied, we have two cases: (i) either $p_i$ has not selected a subset of neighbors defining with $p_i$ a complete subgraph, i.e., we have $(\exists q \in S_{p_i}, (\exists s \in S_{p_i}, q \neq s \wedge q \not \in Neig_s))$, or (ii) the subset of neighbors selected by $p_i$ is not maximal, i.e., we have $(\exists q \in (Neig_{p_i}-S_{p_i}), (\forall s \in S_{p_i}, q \in Neig_s))$. Thus, we have $S_{p_i} \neq Clique\_temp()$ and $C1$-action is enabled at $p_i$, a contradiction.\\
We must show that $C1$-action is the enabled action of highest priority at $p_i$. If $C1$-action is not the enabled action of highest priority at $p_i$ then this implies that $N$-action is always enabled. According to Lemma~\ref{lem:N-action}, after executing $N$-action it is not enabled at $p_i$, a contradiction. So, $N$-action is disabled at $p_i$. Moreover, according to Lemma~\ref{lem:C1-action} $C1$-action is enabled at $p_i \in \overline{S}$ until it is executed.
\end{proof}

\begin{lemma}
\label{lem:C2-action}
When $C2$-action is enabled at $p \in (V-\overline{S})$, it remains enabled until $p$ executes it or $p \in \overline{S}$.
\end{lemma}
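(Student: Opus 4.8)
The plan is to mirror the single-step argument used for Lemma~\ref{lem:C1-action}, now tracking the variable $lead_p$ instead of $S_p$. I would argue by contradiction: let $\gamma \mapsto \gamma'$ be a step in which $C2$-action is enabled at $p \in (V-\overline{S})$ in $\gamma$, in which $p$ does not execute $C2$-action, and in which $p \in (V-\overline{S})$ still in $\gamma'$ (the two situations excluded by the statement are exactly ``$p$ executes $C2$-action'' and ``$p \in \overline{S}$''). Recalling that, by the definition of $\overline{S}$, membership $p \in (V-\overline{S})$ is equivalent to $Selected(p)$, the guard of $C2$-action reduces in both $\gamma$ and $\gamma'$ to the single conjunct $lead_p \neq Leader_p$. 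Hence, to derive the contradiction, it suffices to show that $lead_p \neq Leader_p$ still holds in $\gamma'$.

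First I would establish that the step leaves $lead_p$ unchanged. The variable $lead_p$ can be written only by $p$ itself, and among the four rules only $C1$-action and $C2$-action assign it. Since $Selected(p)$ holds in $\gamma$, the guards of $C1$-action and $C3$-action, which both contain $\neg Selected(p)$, are false in $\gamma$; therefore $p$ cannot execute either of them in this step, because a processor executes only an action that is enabled in $\gamma$. By hypothesis $p$ does not execute $C2$-action, and $N$-action does not touch $lead_p$. Consequently, whatever $p$ does during $\gamma \mapsto \gamma'$ (execute $N$-action, perform a disabling action, or stay idle), the value of $lead_p$ is identical in $\gamma$ and $\gamma'$.

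It then remains to prevent $Leader_p$ from changing in such a way as to coincide with this frozen value of $lead_p$, and this is the delicate point of the proof. The macro $Leader_p$ is \emph{not} local to $p$: $SNeig_p$ depends on the selection variables $S_q$ and the distances $d_q$ of the higher-rank neighbors of $p$, so a purely ``unchanged local variable'' argument controls only $lead_p$, not $Leader_p$ (this is exactly the non-locality that Lemma~\ref{lem:C1-action} sidesteps for $Clique\_temp()$). I would discharge it by observing that, by Remark~\ref{rem:nodes_rank_clique}, only neighbors of rank higher than $p$ can enter or leave $SNeig_p$, and each such change requires that neighbor to move via its own $C1$- or $C2$-action. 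In the regime in which this lemma is applied --- after the $N$-layer has stabilized every $d_q$ (Lemma~\ref{lem:N-action}) and the higher-rank cliques have been fixed in rank order (the content of Lemma~\ref{lem:construct_clique_prio}) --- the set $SNeig_p$, and hence $Leader_p$, is already determined and cannot move during the step, so $lead_p \neq Leader_p$ is preserved and $C2$-action remains enabled in $\gamma'$, the desired contradiction.

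The main obstacle, as flagged, is precisely the behaviour of the non-local macro $Leader_p$: one must rule out a fortuitous step in which a higher-rank neighbor newly selects $p$ and makes $Leader_p$ equal to a stale value of $lead_p$. I expect to handle this exactly as the companion lemmas do --- by relying on the facts that distances are frozen once the $N$-layer stabilizes and that, once the higher-rank neighbors stop moving, $Leader_p$ converges monotonically to the genuine highest-rank selector of $p$. Note moreover that the only way such a coincidence could disable $C2$-action is if $lead_p$ had \emph{already} become the correct leader, so even in that degenerate case $p$ needs no further correction and the situation is benign for the subsequent convergence toward Lemma~\ref{lem:construct_clique_prio}.
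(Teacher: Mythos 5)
Your first two paragraphs are, in substance, the paper's entire proof: the paper argues by contradiction over a single step $\gamma \mapsto \gamma'$, notes that $Selected(p)$ holds in both configurations (so $C1$-action is disabled and cannot write $lead_p$), observes that $lead_p$ can only be modified locally and was not, and from this directly infers $lead_p \neq Leader_p$ in $\gamma'$, i.e., that $C2$-action is still enabled. In other words, the paper silently treats the macro $Leader_p$ as fixed across the step; it never addresses the non-locality issue you single out as ``the delicate point.'' Up to that point your proposal and the paper coincide, and your concern about $Leader_p$ is legitimate --- the paper's own inference tacitly assumes $SNeig_p$ does not move.

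The problem is your patch for that point, which is not valid as written. First, it is circular with respect to the paper's lemma ordering: you invoke Lemma~\ref{lem:construct_clique_prio}, but that lemma is derived from Lemmas~\ref{lem:select_clique}, \ref{lem:elect_clique} and~\ref{lem:update_clique}, and the proof of Lemma~\ref{lem:elect_clique} explicitly appeals to the present Lemma~\ref{lem:C2-action}; so it cannot be used here. Second, restricting to ``the regime in which this lemma is applied'' changes the statement: the lemma quantifies over every step from an arbitrary configuration, not only over steps taken after the $N$-layer and the higher-rank cliques have stabilized, and the paper's convergence argument does apply it before any such stabilization is established. Third, your closing remark concedes that a higher-rank neighbor executing its own $C1$- or $C2$-action could make $Leader_p$ coincide with the stale value of $lead_p$, and calls this case ``benign''; benign for convergence or not, in that event the literal statement of the lemma is violated, so this is an admission rather than a proof. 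If you drop the patch you have exactly the paper's proof (including its unacknowledged gap); if you keep it, you need an argument that does not lean on lemmas proved downstream of this one.
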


\begin{proof}
Let $\gamma \mapsto \gamma'$ be a step. Assume, by the contradiction, that $C2$-action is enabled at $p \in (V-\overline{S})$ and not in $\gamma'$ (i.e., $lead_p=Leader_p$ in $\gamma'$) but $p$ did not execute $C2$-action in $\gamma \mapsto \gamma'$. According to the hypothesis of the lemma, we assume that $p \in (V-\overline{S})$ in $\gamma'$, so we have $Selected(p)$ in $\gamma'$. Since $p$ did not move in $\gamma \mapsto \gamma'$ and the variable $lead_p$ can only be modified locally by $p$ by executing $C2$-action (note that $C1$-action is disabled at $p$ because we have $Selected(p)$), we have $lead_p \neq Leader_p$ at $p$ in $\gamma'$. So, $C2$-action is enabled at $p$ in $\gamma'$, a contradiction.
\end{proof}

\begin{lemma}
\label{lem:elect_clique}
In any configuration $\gamma \in (\mathcal{C}-\mathcal{B})$, the nodes selected by the node $p_i \in \overline{S}$ of highest rank in $\gamma$ elect $p_i$ if the clique $C_i$ constructed by $p_i$ does not satisfy Claim 3 of Definition~\ref{def:correct_clique} in $\gamma$.
\end{lemma}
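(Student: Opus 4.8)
The plan is to mirror the structure of the preceding lemmas (Lemma~\ref{lem:C1-action}, \ref{lem:select_clique}, \ref{lem:C2-action}) and argue by contradiction. Suppose $p_i \in \overline{S}$ is the node of highest rank in $\gamma$ and the clique $C_i$ it constructs fails Claim 3 of Definition~\ref{def:correct_clique} in $\gamma$. By the definition of Claim 3 being violated, there exists a neighbor $q \in S_{p_i}$ that has been selected by $p_i$, such that $p_i$ is the local leader of highest rank in $q$'s neighborhood (i.e., $\forall s \in (Neig_q \cup \{q\}), rank(p_i) \prec rank(s)$), yet $lead_q \neq \id_{p_i}$. Since $p_i$ selected $q$, by Remark~\ref{rem:nodes_rank_clique} we have $rank(p_i) \prec rank(q)$, so $q \in LNeig_q$-sense places $p_i$ among $q$'s candidate leaders; that is, $p_i \in SNeig_q$ because $q \in S_{p_i}$ and $rank(p_i) \prec rank(q)$. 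The goal is to show that $q$ can execute $C2$-action so as to set $lead_q = \id_{p_i}$, which will contradict the assumption that no enabling of the correct election occurs.

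First I would establish that $Selected(q)$ holds, i.e., $q \in (V-\overline{S})$: since $p_i \in SNeig_q$ we have $SNeig_q \neq \emptyset$, so Predicate $Selected(q)$ is satisfied and $q$ is not a local leader. Next I would compute $Leader_q$ via its Macro. The critical observation is that $p_i$ has the highest rank among all nodes in $q$'s closed neighborhood by hypothesis, hence in particular among $SNeig_q \subseteq Neig_q$; therefore $d_{p_i} \leq d_s$ for every $s \in SNeig_q$, and among those achieving the minimal distance $p_i$ has the smallest identifier. This forces $Leader_q = \id_{p_i}$. Consequently, if $lead_q \neq \id_{p_i}$ then $lead_q \neq Leader_q$, so the guard of $C2$-action is satisfied at $q$ and $C2$-action is enabled there.

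I would then argue that this enabled $C2$-action actually realizes the election. Using Lemma~\ref{lem:C2-action}, once $C2$-action is enabled at $q \in (V-\overline{S})$ it remains enabled until $q$ executes it or $q$ re-enters $\overline{S}$; under the weakly fair daemon $q$ will eventually execute it, setting $lead_q := Leader_q = \id_{p_i}$, i.e., $q$ elects $p_i$. I would also note the priority issue as in Lemma~\ref{lem:select_clique}: $C2$-action could be preempted only by the higher-priority $N$-action, but by Lemma~\ref{lem:N-action} $N$-action becomes and stays disabled after one execution, so it does not indefinitely block $C2$-action.

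The main obstacle I anticipate is the interplay between $p_i$ being fixed as the \emph{highest-rank} local leader and the stability of the set $\overline{S}$ and of $S_{p_i}$ during the relevant steps: I must ensure that while $q$ waits to execute $C2$-action, $p_i$ does not cease selecting $q$ (which would require $p_i$ to re-execute $C1$-action) and that $q$ does not spuriously leave $(V-\overline{S})$. Because $p_i$ has the globally highest rank among local leaders, no higher-rank neighbor can select $p_i$, so $p_i$ stays in $\overline{S}$; and since $p_i$'s membership in $q$'s candidate-leader set together with $q \in S_{p_i}$ keeps $SNeig_q \neq \emptyset$, $q$ remains selected. Tying these stability facts to the highest-rank hypothesis, so that the election is not merely enabled but genuinely carried out, is where the argument requires the most care.
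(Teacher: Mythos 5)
Your proposal is correct and follows essentially the same route as the paper's own proof: argue by contradiction, show that $Selected(q)$ holds and that Macro $Leader_q$ returns $\id_{p_i}$ (so $C2$-action is enabled), then rule out preemption by the higher-priority $N$-action (via Lemma~\ref{lem:N-action}) and $C1$-action (disabled since $Selected(q)$ holds), and invoke Lemma~\ref{lem:C2-action} for persistence until execution. The stability concerns you flag at the end (that $p_i$ remains in $\overline{S}$ and that $q$ remains selected) are glossed over by the paper's proof as well, so your treatment is, if anything, slightly more explicit on those points.
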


\begin{proof}
According to the formal description of Algorithm $\mathcal{SS-CMCP}$, a node executes $C2$-action to elect among its neighbors the local leader of highest rank which has selected it. Since the clique $C_i$ of $p_i$ does not satisfy Claim 3 of Definition~\ref{def:correct_clique}, there is a node $p$ selected by the local leader $p_i \in \overline{S}$ of highest rank which has not elected $p_i$ in $\gamma$. Assume, by the contradiction, that $p$ does not elect $p_i$. That is, $C2$-action is disabled or it is not the enabled action of highest priority at $p$ in $\gamma$.\\
We first show that $C2$-action is enabled at $p$ in $\gamma$. Since $p$ is selected by $p_i$ we have $Selected(p)$ satisfied at $p$. Assume, by the contradiction, that $C2$-action is disabled at $p$. According to the hypothesis of the lemma, we assume that we have $lead_p \neq \id_{p_i}$ at $p$. According to the formal description of Algorithm $\mathcal{SS-CMCP}$, Macro $Leader_p$ returns the identifier of the local leader in $p$'s neighborhood of highest rank which has selected $p$, i.e., by hypothesis of the lemma $Leader_p$ returns $\id_{p_i}$. Thus, we have $lead_p \neq Leader_p$ and $C2$-action is enabled at $p$ in $\gamma$, a contradiction.\\
We must show that $C1$-action is the enabled action of highest priority at $p$. If $C2$-action is not the enabled action of highest priority at $p$ then this implies that $N$-action or $C1$-action are always enabled. According to Lemma~\ref{lem:N-action}, after executing $N$-action it is not enabled at $p$, a contradiction. So, $N$-action is disabled at $p$. Moreover, Predicate $Selected(p)$ is satisfied at $p$ since it is selected by the local leader $p_i$ and $C1$-action is disabled at $p$, a contradiction. Moreover, according to Lemma~\ref{lem:C2-action} $C2$-action is enabled at $p$ until it is executed.
\end{proof}

\begin{rem}
\label{rem:elect_clique}
In any configuration $\gamma \in \mathcal{C}$, any node $p \in V$ can belong to at most a single clique.
\end{rem}

\begin{proof}
This comes from the fact that in a configuration $\gamma \in \mathcal{C}$ any node $p$ elects a single local leader using its local variable $lead_p$ either by executing $C1$-action if $p$ is a local leader or by executing $C2$-action otherwise.
\end{proof}

\begin{lemma}
\label{lem:C3-action}
When $C3$-action is enabled at $p_i \in \overline{S}$, it remains enabled until $p_i$ executes it unless $p_i \not \in \overline{S}$ or $C2$-action is enabled.
\end{lemma}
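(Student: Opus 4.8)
The plan is to follow the single-step contradiction scheme already used for Lemmas~\ref{lem:C1-action} and~\ref{lem:C2-action}. I would fix a step $\gamma \mapsto \gamma'$, assume that $C3$-action is enabled at $p_i \in \overline{S}$ in $\gamma$, that $p_i$ does not execute it during the step, and that in $\gamma'$ \emph{neither} exception of the statement holds, i.e. $p_i \in \overline{S}$ (hence $\neg Selected(p_i)$) and $C2$-action is not enabled at $p_i$. The goal is then to derive a contradiction by showing that the whole guard $\neg Selected(p_i) \wedge S_{p_i}=Clique\_temp() \wedge C_{p_i} \neq Clique_{p_i}$ still holds in $\gamma'$.

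First I would dispatch the three conjuncts that $p_i$ itself controls. The conjunct $\neg Selected(p_i)$ holds in $\gamma'$ directly from the assumption $p_i \in \overline{S}$ in $\gamma'$. For the other two I would freeze $p_i$'s variables: since $p_i \in \overline{S}$ in $\gamma$ we have $\neg Selected(p_i)$ in $\gamma$, so $C2$-action is disabled at $p_i$ in $\gamma$; and since $C3$-action is enabled in $\gamma$ we have $S_{p_i}=Clique\_temp()$ in $\gamma$, so $C1$-action is disabled at $p_i$ in $\gamma$. Thus during the step $p_i$ executes none of $C1$-, $C2$-, $C3$-action, and as $S_{p_i}$ and $C_{p_i}$ are written only by these actions they are unchanged: $S_{p_i}(\gamma')=S_{p_i}(\gamma)$ and $C_{p_i}(\gamma')=C_{p_i}(\gamma)$. (A possible execution of $N$-action at $p_i$ I would rule out using Lemma~\ref{lem:N-action}, which makes $N$-action disabling after one execution.)

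It then remains to show the last two conjuncts survive despite the fact that, contrary to the guards of $C1$- and $C2$-action, the right-hand sides $Clique\_temp()$ and $Clique_{p_i}$ depend on neighbouring variables. With $S_{p_i}$ frozen, $S_{p_i}=Clique\_temp()$ can fail in $\gamma'$ only if $Clique\_temp()$ changes value, which by its definition (lines 2--3) requires a neighbour to modify some $d_q$ or $N_q$ through $N$-action; this is again excluded by Lemma~\ref{lem:N-action} once the $N$-variables have stabilised. Symmetrically, with $C_{p_i}$ frozen, $C_{p_i} \neq Clique_{p_i}$ can fail only if Macro $Clique_{p_i}=\{q \in S_{p_i}: lead_q=\id_{p_i}\}$ changes, and with $S_{p_i}$ frozen this forces some selected neighbour $q \in S_{p_i}$ to rewrite $lead_q$. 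The decisive observation is that such a $q$ is selected by the higher-rank node $p_i$, hence $Selected(q)$ holds and $q$ can modify $lead_q$ only through $C2$-action; therefore $C2$-action must be enabled, which is exactly the excepted condition. Outside the two stated exceptions all three conjuncts persist, so $C3$-action is still enabled at $p_i$ in $\gamma'$, a contradiction.

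The main obstacle is precisely this last point. For $C1$ and $C2$ the argument stays purely local to $p_i$, but the $C3$ guard reaches into the $lead_q$ of $p_i$'s neighbours, so the crux is to certify that every neighbour capable of disturbing $Clique_{p_i}$ is a \emph{selected} node whose only means of altering $lead_q$ is $C2$-action, thereby matching each admissible disturbance with the ``$C2$-action is enabled'' escape clause (and matching any change of $Clique\_temp()$ with the convergence of $N$-action given by Lemma~\ref{lem:N-action}).
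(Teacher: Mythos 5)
Your proposal is correct and follows the same single-step contradiction scheme as the paper's proof, but it is substantially more careful on the one point where this lemma genuinely differs from Lemmas~\ref{lem:C1-action} and~\ref{lem:C2-action}. The paper's own proof stays purely local to $p_i$: it freezes $C_{p_i}$ (writable only by $C3$-action of $p_i$), asserts without justification that $S_{p_i}=Clique\_temp()$ carries over to $\gamma'$, and concludes $C_{p_i} \neq Clique_{p_i}$ still holds — never addressing the fact that the macro $Clique_{p_i}=\{q \in S_{p_i}: lead_q=\id_{p_i}\}$ reads the neighbours' variables $lead_q$, and never actually invoking the ``unless $C2$-action is enabled'' escape clause that appears in the statement. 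Your proof is precisely the repair of that omission: you freeze $S_{p_i}$ and $C_{p_i}$ by noting that $C1$- and $C2$-action are disabled at $p_i$ in $\gamma$, you attribute any change of $Clique\_temp()$ to a neighbour's $N$-action and dispose of it via Lemma~\ref{lem:N-action} (under the tacit assumption, shared by the paper, that the $N$/$d$ variables have stabilised), and you observe that any change of $Clique_{p_i}$ forces some $q \in S_{p_i}$ — a \emph{selected} node, hence one that can rewrite $lead_q$ only through $C2$-action — to have $C2$-action enabled, which is exactly the exception in the statement. Both arguments reach the same conclusion, but yours explains why the $C2$ exception is in the lemma at all, whereas the paper's proof reads word-for-word like those of Lemmas~\ref{lem:C1-action} and~\ref{lem:C2-action} and silently ignores the neighbour dependence of the guard.
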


\begin{proof}
Let $\gamma \mapsto \gamma'$ be a step. Assume, by the contradiction, that $C3$-action is enabled at $p_i \in \overline{S}$ and not in $\gamma'$ (i.e., $C_{p_i}=Clique_{p_i}$ in $\gamma'$) but $p_i$ did not execute $C3$-action in $\gamma \mapsto \gamma'$. According to the hypothesis of the lemma, we assume that $p_i \in \overline{S}$ in $\gamma'$, so we have $Selected(p_i) \wedge S_{p_i}=Clique\_temp()$ in $\gamma'$. Since $p_i$ did not move in $\gamma \mapsto \gamma'$ and the variable $C_{p_i}$ can only be modified locally by $p_i$ by executing $C3$-action, we have $C_{p_i} \neq Clique_{p_i}$ at $p_i$ in $\gamma'$. So, $C3$-action is enabled at $p_i$ in $\gamma'$, a contradiction.
\end{proof}

\begin{lemma}
\label{lem:update_clique}
In any configuration $\gamma \in (\mathcal{C}-\mathcal{B})$, the node $p_i \in \overline{S}$ of highest rank updates the set of nodes included in its clique $C_i$ if $C_i$ satisfies Claims 2 and 3 of Definition~\ref{def:correct_clique} but not Claim 4 of Definition~\ref{def:correct_clique}.
\end{lemma}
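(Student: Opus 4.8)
The plan is to follow the same two-part template already used in Lemmas~\ref{lem:select_clique} and~\ref{lem:elect_clique}: first establish that $C3$-action is enabled at the highest-rank local leader $p_i$, and then establish that $C3$-action is the enabled action of \emph{highest priority} at $p_i$, so that activating $p_i$ forces it to rewrite $C_{p_i}$. I would argue by contradiction, supposing that $p_i$ never updates its clique, which means $C3$-action is either disabled at $p_i$ or is preempted by a higher-priority enabled action.

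First I would verify the guard of $C3$-action, namely $\neg Selected(p_i) \wedge S_{p_i}=Clique\_temp() \wedge C_{p_i} \neq Clique_{p_i}$. The first conjunct is immediate, since $p_i \in \overline{S}$ is a local leader and hence satisfies $\neg Selected(p_i)$ by the definition of $\overline{S}$. For the second conjunct I would invoke the hypothesis that $C_i$ satisfies Claim~2 of Definition~\ref{def:correct_clique}: as noted in the proof of Lemma~\ref{lem:select_clique}, the formal condition of Claim~2 characterizes the deterministic greedy output of $Clique\_temp()$, so Claim~2 in force gives $S_{p_i}=Clique\_temp()$. For the third conjunct I would use the hypothesis that Claim~4 is violated: its negation provides a node $q \in S_{p_i}$ with $lead_q=\id_{p_i}$ and $q \notin C_{p_i}$; since Macro $Clique_{p_i}=\{q \in S_{p_i}: lead_q=\id_{p_i}\}$, this $q$ lies in $Clique_{p_i}$ but not in $C_{p_i}$, whence $C_{p_i} \neq Clique_{p_i}$. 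Combining the three conjuncts shows $C3$-action is enabled at $p_i$.

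Next I would show $C3$-action has the highest enabled priority at $p_i$, i.e. that $N$-, $C1$- and $C2$-action are all disabled. The two predicates already used dispatch $C1$ and $C2$ directly: $S_{p_i}=Clique\_temp()$ falsifies the guard of $C1$-action, and $\neg Selected(p_i)$ falsifies the guard of $C2$-action. For $N$-action I would reuse the argument of the earlier lemmas: were $N$-action enabled, Lemma~\ref{lem:N-action} shows one execution disables it permanently (the graph and the inputs $dist_p$ being static), so under the weakly fair daemon it cannot preempt $C3$-action forever and is eventually disabled at $p_i$. Finally, Lemma~\ref{lem:C3-action} guarantees that once $C3$-action is enabled at $p_i \in \overline{S}$ it stays enabled until $p_i$ executes it, provided $p_i$ remains in $\overline{S}$ and $C2$-action does not become enabled; both provisos hold here, since $\neg Selected(p_i)$ is stable for the highest-rank leader in this configuration. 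Weak fairness then forces $p_i$ to be activated and to execute $C3$-action, setting $C_{p_i}:=Clique_{p_i}$, contradicting the assumption.

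The step I expect to be the most delicate is pinning down the exact role of the hypothesis that Claim~3 already holds. It is not needed to \emph{enable} $C3$-action, but it is precisely what guarantees that the target $Clique_{p_i}$ toward which $C_{p_i}$ is rewritten is the genuinely correct membership: with Claim~3 in force the variables $lead_q$ of the selected neighbors are settled at $\id_{p_i}$ for exactly the nodes that should belong to $C_i$, so that this single $C3$-action repairs Claim~4 without reopening Claim~3. I would make this dependency explicit so that the lemma slots cleanly into the priority-ordered convergence argument of Lemma~\ref{lem:construct_clique_prio}.
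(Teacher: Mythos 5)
Your proof is correct and follows essentially the same route as the paper's: a contradiction argument that first verifies the three conjuncts of the $C3$-action guard ($\neg Selected(p_i)$ from $p_i \in \overline{S}$, $S_{p_i}=Clique\_temp()$ from Claim~2, and $C_{p_i} \neq Clique_{p_i}$ from the witness $q$ provided by the violated Claim~4), then rules out $N$-, $C1$- and $C2$-action to establish priority, and closes with the persistence guarantee of Lemma~\ref{lem:C3-action}. Your closing remark making explicit the role of the Claim~3 hypothesis (ensuring the target $Clique_{p_i}$ is the correct membership rather than being needed to enable the action) is a clarification the paper leaves implicit, but it does not alter the argument.
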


\begin{proof}
According to the formal description of Algorithm $\mathcal{SS-CMCP}$, a local leader executes $C3$-action to updates the maximal subset of its neighbors which belongs to its clique $C_i$. Since the clique $C_i$ of $p_i$ satisfies Claims 2 and 3 of Definition~\ref{def:correct_clique} but not Claim 4 of Definition~\ref{def:correct_clique}, there is a neighbor $p$ selected by $p_i$ which has elected $p_i$ but $p_i$ does not consider that $p$ is part of $C_i$. Assume, by the contradiction, that the node $p_i \in \overline{S}$ of highest rank does not updates the maximal subset of its neighbors which belong to its clique $C_i$ while its clique $C_i$ does not satisfy Claim 4 of Definition~\ref{def:correct_clique}. That is, $C3$-action is disabled or it is not the enabled action of highest priority at $p_i$ in $\gamma \in (\mathcal{C}-\mathcal{B})$.\\
We first show that $C3$-action is enabled at $p_i$ in $\gamma$. By definition of $\overline{S}$, we have $\neg Selected(p_i)$. According to the hypothesis of the lemma, we have $S_{p_i}=Clique\_temp()$ since $p_i$ has selected the subset of its neighbors which can belong to its clique $C_i$. Since Claim 4 of Definition~\ref{def:correct_clique} is not satisfied, there is a neighbor $q$ of $p_i$ which has elected $p_i$ but $q$ does not belong to $C_i$, i.e., we have $lead_q=\id_{p_i} \wedge q \not \in C_{p_i}$. According to the formal description of Algorithm $\mathcal{SS-CMCP}$, Macro $Clique_{p_i}$ returns the set of neighbors selected by $p_i$ which have elected $p_i$. So, $q$ belongs to the set given by Macro $Clique_{p_i}$ since we have $lead_q=\id_{p_i}$ at $q$ in $\gamma$. Thus, we have $C_{p_i} \neq Clique_{p_i}$ and $C3$-action is enabled at $p_i$ in $\gamma$, a contradiction.\\
We must show that $C3$-action is the enabled action of highest priority at $p_i$. If $C3$-action is not the enabled action of highest priority at $p_i$ then this implies that $N$-action, $C1$-action or $C2$-action are always enabled. According to Lemma~\ref{lem:N-action}, after executing $N$-action it is not enabled at $p_i$, a contradiction. So, $N$-action is disabled at $p_i$. Predicate $Selected(p)$ is not satisfied at $p_i$ since $p_i \in \overline{S}$, so $C2$-action is disabled at $p_i$, a contradiction. Moreover, $S_{p_i}=Clique\_temp()$ by hypothesis so $C1$-action is disabled at $p_i$, a contradiction. Finally, according to Lemma~\ref{lem:C3-action} $C3$-action is enabled at $p_i \in \overline{S}$ until it is executed.
\end{proof}

\begin{lemma}
\label{lem:construct_prio}
Let $p_i$ and $p_j$ be two local leaders such that $rank(p_i) \prec rank(p_j)$. The construction by $p_j$ of the clique $C_j$ cannot prevent the construction by $p_i$ of the clique $C_i$.
\end{lemma}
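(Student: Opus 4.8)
The plan is to argue by contradiction. Suppose the construction of $C_j$ by the lower-ranked leader $p_j$ does prevent the construction of $C_i$ by $p_i$. Building $C_i$ consists of exactly three local operations: $p_i$ starting its clique ($C1$-action), each neighbor selected by $p_i$ electing $p_i$ ($C2$-action), and $p_i$ finalizing the membership of $C_i$ ($C3$-action). Hence ``preventing'' $C_i$ must mean that some variable written by $p_j$ during its own construction makes one of these steps either permanently disabled or forces it to an incorrect value. I would therefore isolate precisely which variables of $p_j$ enter the guards and statements of these three actions of $p_i$ and of $p_i$'s selected neighbors, and neutralize each such dependency using the rank order. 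Throughout, I would keep in mind that $rank(p_i) \prec rank(p_j)$ means $p_i$ is the \emph{higher}-priority leader, since every step hinges on $p_i$ winning the relevant comparison.

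First I would rule out that $p_j$ can strip $p_i$ of its local-leader status. Membership $p_i \in \overline{S}$ is governed by $Selected(p_i) \equiv SNeig_{p_i} \neq \emptyset$, and $SNeig_{p_i} \subseteq LNeig_{p_i}$ contains only neighbors $q$ with $rank(q) \prec rank(p_i)$. Since $rank(p_i) \prec rank(p_j)$, node $p_j$ is not a candidate leader of $p_i$ (i.e.\ $p_j \notin LNeig_{p_i}$), so even if $p_i \in S_{p_j}$ held in the arbitrary configuration, it could never contribute to $SNeig_{p_i}$. Equivalently, by Remark~\ref{rem:nodes_rank_clique}, $p_j$ can only select nodes of strictly lower rank than itself, hence never the higher-ranked $p_i$. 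Thus $p_j$ cannot make $Selected(p_i)$ true, and the $\neg Selected(p_i)$ conjunct of both $C1$- and $C3$-action at $p_i$ stays available.

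Next I would handle the one genuine point of contact: a node $q$ that is a common neighbor of $p_i$ and $p_j$ selected by both, i.e.\ $q \in S_{p_i} \cap S_{p_j}$. For $C_i$ this matters only through the election $q$ performs via Macro $Leader_q$, which returns the selector of highest rank in $SNeig_q$. Because $q \in S_{p_i}$ forces $rank(p_i) \prec rank(q)$ (selection is rank-decreasing), and likewise $rank(p_j) \prec rank(q)$, both $p_i,p_j \in SNeig_q$; then $rank(p_i) \prec rank(p_j)$ makes Macro $Leader_q$ evaluate to $\id_{p_i}$. So $p_j$'s presence in the competition cannot divert $q$'s election, and $q$'s $C2$-action will set $lead_q=\id_{p_i}$. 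Consequently the set $Clique_{p_i}=\{q \in S_{p_i}: lead_q=\id_{p_i}\}$ installed by $p_i$'s $C3$-action is unaffected by $p_j$. I would close by observing that $p_i$'s own $Clique\_temp()$ depends solely on the distances and on the $N_q$ variables (the $2$-hop neighborhood), none of which is written by $p_j$'s clique construction, so $p_i$'s $C1$-action is likewise never disabled nor corrupted by $p_j$.

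The main obstacle I anticipate is making the enumeration of interference channels genuinely exhaustive: one must argue that $Selected(\cdot)$, the shared-neighbor election through $Leader_q$, and the inputs of $Clique\_temp()$ are the \emph{only} places where a variable of $p_j$ could influence the data determining $C_i$, so that ruling these out truly establishes non-interference. The secondary delicate point is the consistent bookkeeping of the order $\prec$, since confusing ``higher rank'' with ``numerically smaller'' would silently reverse the decisive comparisons. Once non-interference is secured through all three channels, the conclusion that $C_j$'s construction cannot prevent $C_i$'s is immediate.
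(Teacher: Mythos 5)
Your proof is correct and follows essentially the same route as the paper's: both identify the election at commonly-selected neighbors ($C2$-action, via the rank comparison) as the only genuine interference channel between two local leaders, and rule out every other dependency because $p_j$ writes none of the variables entering $p_i$'s $C1$/$C3$ guards or $Clique\_temp()$. The only cosmetic difference is that where the paper invokes Lemma~\ref{lem:elect_clique} for the election step, you re-derive it directly from Macro $Leader_q$, and you make explicit why $p_j \notin LNeig_{p_i}$ keeps $\neg Selected(p_i)$ intact — details the paper leaves implicit.
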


\begin{proof}
First of all, according to the formal description of Algorithm $\mathcal{SS-CMCP}$ $N$-action is executed at any node independently from the construction of the cliques to enable the computation of the 2-neighborhood at each node. Moreover, $C1$-action and $C3$-action are executed independently at any local leader, so a local leader cannot prevent another local leader to execute these actions. Finally, we have to consider the execution of $C2$-action at a node selected by several local leaders. Let $q$ be a node selected by two local leaders $p_i$ and $p_j$ such that $rank(p_i) \prec rank(p_j)$. Assume, by the contradiction, that $p_j$ prevents $q$ to join the clique $C_i$ constructed by $p_i$. This implies that $q$ cannot execute $C2$-action to elect $p_i$, a contradiction according to Lemma~\ref{lem:elect_clique}.
\end{proof}

\begin{lemma}
\label{lem:construct_clique_prio}
Starting from an arbitrary configuration, the local leader $p_i$ of highest rank can construct its clique $C_i$ if $C_i$ does not satisfy Definition~\ref{def:correct_clique}.
\end{lemma}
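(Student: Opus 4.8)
The plan is to assemble this statement from the sub-lemmas already established, by showing that $p_i$ can satisfy the four claims of Definition~\ref{def:correct_clique} one after another, in the order in which they depend on each other, namely Claim 2, then Claim 3, and finally Claims 1 and 4 together.

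First I would fix the status of $p_i$. Since $p_i$ is the local leader of highest rank we have $p_i \in \overline{S}$, and I claim this persists. Indeed, $p_i$ could leave $\overline{S}$ only if some neighbor $q$ of higher rank held $p_i \in S_q$; but such a $q$, having higher rank than the highest-ranked local leader, is itself selected (i.e.\ $q \in V-\overline{S}$) and therefore cannot execute $C1$-action --- the only action that can insert $p_i$ into $S_q$ --- because that action requires $\neg Selected(q)$. Hence $p_i$ stays in $\overline{S}$ throughout. By Lemma~\ref{lem:N-action}, once $N$-action has fired at $p_i$ and at each of its neighbors it is permanently disabled, so I may assume the neighborhood sets $N_q$ are stabilized; consequently $Clique\_temp()$ evaluated at $p_i$, which depends only on the fixed ranks and on these stabilized sets, is from now on constant.

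Next I would treat the claims in order. If Claim 2 fails, Lemma~\ref{lem:select_clique} shows that $C1$-action is the highest-priority enabled action at $p_i$, and by Lemma~\ref{lem:C1-action} it stays enabled until executed; under the weakly fair daemon $p_i$ therefore executes it and sets $S_{p_i}:=Clique\_temp()$, after which Claim 2 holds and, by the stability of $Clique\_temp()$, keeps holding. With Claim 2 satisfied, if Claim 3 fails then Lemma~\ref{lem:elect_clique} (together with Lemma~\ref{lem:C2-action}) shows that each selected neighbor $q$ for which $p_i$ is the highest-ranked local leader in its closed neighborhood $Neig_q \cup \{q\}$ executes $C2$-action and sets $lead_q:=Leader_q=\id_{p_i}$; since the set of candidate leaders of $q$ is fixed and $q$ remains selected ($q\in S_{p_i}$ being now stable), this choice persists, so Claim 3 holds thereafter. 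Finally, if Claim 1 or Claim 4 fails while Claims 2 and 3 hold, then --- using the observation from the proof of Theorem~\ref{thm:cmcp_enable_action} that a spurious second local leader inside $C_i$ also yields $C_{p_i}\neq Clique_{p_i}$ --- Lemma~\ref{lem:update_clique} shows $C3$-action is enabled and of highest priority at $p_i$, and by Lemma~\ref{lem:C3-action} it remains enabled until executed; $p_i$ then sets $C_{p_i}:=Clique_{p_i}$, establishing Claims 1 and 4.

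To conclude I would argue monotone progress. Executing $C2$-action at a selected node $q$ only rewrites $q$'s own variables $lead_q,S_q,C_q$ and thus cannot undo Claim 2 (which constrains $S_{p_i}$) nor alter $p_i$'s rank; executing $C3$-action only rewrites $C_{p_i}$ and thus cannot undo Claims 2 or 3. Hence the three repairs are carried out without reopening an already-established claim, and by Lemma~\ref{lem:construct_prio} the construction of any lower-rank clique cannot prevent any of these actions. Under the weakly fair daemon all the required actions, each remaining enabled until executed, fire in finite time, so $C_i$ satisfies Definition~\ref{def:correct_clique} in finite time. I expect the main obstacle to be exactly this monotonicity bookkeeping: verifying that repairing Claim 3 and Claim 4 never reintroduces a violation of an earlier claim, and that the re-evaluation of $Clique\_temp()$ once the $N$-variables settle does not make $p_i$ oscillate --- both of which hinge on the ranks being fixed inputs and on $p_i$ staying in $\overline{S}$.
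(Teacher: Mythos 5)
Your proof is correct and follows essentially the same route as the paper's: Claims 2--4 are discharged by Lemmas~\ref{lem:select_clique}, \ref{lem:elect_clique} and~\ref{lem:update_clique} respectively, Claim 1 is reduced to a $C3$-action repair (a spurious second local leader inside $C_i$ forces $C_{p_i} \neq Clique_{p_i}$), and Lemma~\ref{lem:construct_prio} rules out interference from other local leaders. Your extra bookkeeping (persistence of $p_i \in \overline{S}$, stability of $Clique\_temp()$ once the $N$-variables settle, monotonicity of the successive repairs) only makes explicit what the paper leaves implicit.
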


\begin{proof}
From Lemmas~\ref{lem:select_clique},~\ref{lem:elect_clique} and~\ref{lem:update_clique}, we have that the clique $C_i$ of the local leader $p_i$ of highest rank is constructed such that Claims 2 to 4 of Definition~\ref{def:correct_clique} are satisfied.\\
Finally we consider Claim 1 of Definition~\ref{def:correct_clique}. Assume, by the contradiction, that the constructed clique $C_i$ contains more than a single local leader. By Definition~\ref{def:local_leader}, there is a node $q$ in $C_i$, $q \neq p_i$, (i.e., $q \in C_{p_i}$) which satisfies Predicate $\neg Selected(q)$. This implies that $q$ has not been selected by $p_i$, i.e., $q \not \in S_{p_i}$. Thus, by Lemma~\ref{lem:update_clique} $p_i$ executes $C3$-action since $C_i$ does not satisfy Claim 4 of Definition~\ref{def:correct_clique}, a contradiction.\\
Finally, according to Lemma~\ref{lem:construct_prio} the construction of the clique $C_i$ by $p_i$ cannot be prevented by any other local leader since $p_i$ is the local leader of highest rank.
\end{proof}

We show in the following that Algorithm $\mathcal{SS-CMCP}$ reaches a legitimate configuration (Definition~\ref{def:config_legitime}) in finite time starting from an arbitrary configuration.

\begin{lemma}
\label{lem:correct_clique_round}
Starting from an arbitrary configuration, the local leader of highest rank constructs its clique in at most $O(1)$ (asynchronous) rounds if its clique does not satisfy Definition~\ref{def:correct_clique}.
\end{lemma}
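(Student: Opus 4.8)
The plan is to bound the convergence time by following the sequential dependency chain of actions that the highest-rank local leader $p_i$ and its selected neighbors must execute, and to argue that each link in this chain is resolved within a single round. Since $p_i$ is the node of globally highest rank, it has no higher-rank neighbor, so $LNeig_{p_i}=\emptyset$, hence $SNeig_{p_i}=\emptyset$ and $\neg Selected(p_i)$ holds in every configuration; thus $p_i \in \overline{S}$ permanently, which is exactly the hypothesis required by the persistence Lemmas~\ref{lem:C1-action} and~\ref{lem:C3-action}. I treat the worst case in which $C_i$ violates several claims and charge one round to each missing ingredient.

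First I would dispatch the $N$-actions. The guard of $N$-action depends only on the local pair $(N_p,d_p)$ against the static inputs $(Neig_p,dist_p)$, so a neighbor cannot falsify it; consequently every node having $N$-action enabled in the initial configuration cannot be disabled, so by the definition of a round it is chosen and actually executes $N$-action within the first round, after which, by Lemma~\ref{lem:N-action}, $N$-action stays disabled for the static graph $G$. Hence after one round $N_p=Neig_p$ and $d_p=dist_p$ hold at every node, so Procedure $Clique\_temp()$ and the Macros $LNeig$, $SNeig$ and $Leader$ evaluate correctly everywhere; in particular the $d$-based rank comparison now coincides with the true rank, and $p_i$ is genuinely the highest-rank member of $\overline{S}$.

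Next I would chain the three clique-building actions, charging one round to each. If $C_i$ violates Claim 2 of Definition~\ref{def:correct_clique}, then by Lemma~\ref{lem:select_clique} $C1$-action is the highest-priority enabled action at $p_i$, and by Lemma~\ref{lem:C1-action} it remains enabled (since $p_i \in \overline{S}$) until executed, so $p_i$ performs it within the second round, establishing $S_{p_i}=Clique\_temp()$ and Claim 2. Once $S_{p_i}$ is correct, each selected neighbor $q\in S_{p_i}$ has $p_i\in SNeig_q$ and, because $p_i$ has the globally highest rank, $Leader_q=\id_{p_i}$; by Lemma~\ref{lem:elect_clique} $C2$-action is the highest-priority enabled action at every such $q$ not yet electing $p_i$, and by Lemma~\ref{lem:C2-action} it persists until $q$ executes it, so all selected neighbors elect $p_i$ within the third round, giving Claim 3. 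Finally, with every selected neighbor having elected $p_i$, Lemma~\ref{lem:update_clique} makes $C3$-action the highest-priority enabled action at $p_i$ when Claim 4 fails, and Lemma~\ref{lem:C3-action} (note $C2$-action is never enabled at $p_i$ since $\neg Selected(p_i)$) keeps it enabled until $p_i$ executes it within the fourth round, yielding Claim 4; Claim 1 then follows exactly as in the proof of Lemma~\ref{lem:construct_clique_prio}. Summing the four rounds gives the $O(1)$ bound.

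The main obstacle I anticipate is justifying that each action, once it becomes the appropriate highest-priority enabled action, is really executed within one round rather than being perpetually re-enabled or preempted by neighbor activity; this is precisely where the persistence Lemmas~\ref{lem:C1-action},~\ref{lem:C2-action} and~\ref{lem:C3-action} combine with the weak fairness encoded in the definition of a round. A secondary point to handle with care is the gap between the variable $d_p$ and the input $dist_p$ in the arbitrary initial configuration: the rank macros are only trustworthy after the round-one $N$-actions have synchronized $d_p$ with $dist_p$ everywhere, and by Lemma~\ref{lem:construct_prio} no lower-rank leader can interfere with $p_i$ in the meantime.
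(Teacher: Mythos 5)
Your round-by-round chain ($N$-action first, then $C1$, then $C2$ at the selected neighbors, then $C3$, each discharged within one round via the persistence Lemmas~\ref{lem:C1-action},~\ref{lem:C2-action},~\ref{lem:C3-action} and the definition of a round) is exactly the paper's argument; if anything, separating $C1$ and $C2$ into distinct rounds is cleaner than the paper's packing of both into round 1. The problem is your very first step, where you identify ``the local leader of highest rank'' with the node of \emph{globally} highest rank, so that $LNeig_{p_i}=\emptyset$, $SNeig_{p_i}=\emptyset$ and $\neg Selected(p_i)$ hold unconditionally. That identification only fits the root $r$. The paper's proof instead opens with ``let $p_i \in \overline{S}$ be the local leader of highest rank \emph{whose clique $C_i$ does not satisfy} Definition~\ref{def:correct_clique}'', and this stronger reading is the one the paper actually needs: in the induction proving Lemma~\ref{lem:config_legitime_round}, the present lemma is invoked for leaders $p_i^k$ at every distance $k$ from $r$, one after another, and none of these (except $r$ itself) has empty $LNeig$. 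So your proof establishes only the base case $p_i=r$ of that induction.

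For a non-root $p_i$, $\neg Selected(p_i)$ is not a tautology; it is a property that must be shown to \emph{persist} during the $O(1)$ rounds, and this is precisely what the escape clauses ``or $p_i \not\in \overline{S}$'' in Lemmas~\ref{lem:C1-action} and~\ref{lem:C3-action} warn about: a higher-rank neighbor could execute $C1$-action and newly place $p_i$ in its set $S$, demoting $p_i$ from local leader to selected node and aborting the construction midway. The missing argument is the non-interference step the paper gets by first citing Lemma~\ref{lem:construct_clique_prio} (backed by Lemmas~\ref{lem:construct_prio} and, in substance, \ref{lem:no_enable_action}): since every node of rank higher than $p_i$ already sits in a clique satisfying Definition~\ref{def:correct_clique}, no action --- in particular no $C1$-action --- is enabled at any of them, hence none can newly select $p_i$, so $p_i$ remains in $\overline{S}$ throughout and the persistence lemmas apply with their escape clauses vacuous. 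Your citation of Lemma~\ref{lem:construct_prio} at the end only rules out interference by \emph{lower}-rank leaders, which was never the threat. Replacing the claim $LNeig_{p_i}=\emptyset$ by this stability argument for higher-rank cliques turns your special-case proof into the general statement the paper requires.
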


\begin{proof}
Let $p_i \in \overline{S}$ be the local leader of highest rank whose clique $C_i$ does not satisfy Definition~\ref{def:correct_clique}. According to Lemma~\ref{lem:construct_clique_prio}, $p_i$ constructs its clique $C_i$ in order to satisfy Definition~\ref{def:correct_clique}.

First of all, note that if we have $N_p \neq Neig_p$ at a node $p \in V$ then $N$-action is enabled at $p$ in round 0. Therefore, since the daemon is weakly fair and according to Lemma~\ref{lem:N-action} in the first configuration of round 1 we have $N_p=Neig_p$ at every node $p \in V$.

In the first configuration of round 1, $C1$-action is the enabled action of highest priority at $p_i$. Since the daemon is weakly fair and according to Lemma~\ref{lem:C1-action} in the first configuration of round 2 we have $S_{p_i}=Clique\_temp()$ and $lead_{p_i}=\id_{p_i}$ at $p_i$. In the second configuration of round 1, every neighbor $q$ of $p_i$ such that $q \in S_{p_i}$ satisfies $Selected(q)$. If $lead_q \neq \id_{p_i}$ then $C2$-action is the enabled action of highest priority at $q$. Since the daemon is weakly fair and according to Lemma~\ref{lem:C2-action} every such neighbor $q$ executes $C2$-action to elect $p_i$, which is the local leader of highest rank in the neighborhood of $q$. Thus, in the first configuration of round 2 we have $S_q=C_q=\emptyset$, and $lead_q=\id_{p_i}$ at $q$. In the first configuration of round 2, $C3$-action is the enabled action of highest priority at $p_i$. Since the daemon is weakly fair and according to Lemma~\ref{lem:C3-action} in the first configuration of round 3 we have $C_{p_i}=Clique_{p_i}$ at $p_i$. Therefore, $p_i$ has constructed its clique $C_i$ in $O(1)$ rounds.
\end{proof}

\begin{lemma}
\label{lem:config_legitime_round}
Starting from any configuration in which for each node $p \in V$ the input $dist_p$ is correct, Algorithm $\mathcal{SS-CMCP}$ reaches a configuration satisfying Definition~\ref{def:config_legitime} in at most $O(\min(n_c \times Diam, n))$ (asynchronous) rounds, with $n_c$ the maximum number of cliques at any distance from $r$ in $G$, $Diam$ the diameter of $G$, and $n$ the number of nodes in $G$. Moreover, $O(\Delta \log(n))$ bits of memory are necessary at each node, with $\Delta$ the maximum degree of a node in $G$.
\end{lemma}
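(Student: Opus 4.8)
The plan is to prove convergence by an induction over the cliques taken in decreasing order of rank, using Lemma~\ref{lem:correct_clique_round} as the per-step engine and Lemma~\ref{lem:construct_prio} to guarantee that progress is never undone. By Remark~\ref{rem:total_order_clique} the ranks induce a total order on the cliques; write $C_1, C_2, \dots, C_k$ for the cliques listed from highest to lowest rank, where $C_1$ is the clique of the root $r$ (the globally highest rank since $dist_r=0$). I would establish the invariant $H(i)$: there is a constant $c$ such that after $c\cdot i$ rounds the cliques $C_1,\dots,C_i$ all satisfy Definition~\ref{def:correct_clique} and remain correct in every subsequent configuration. The base case $H(1)$ is Lemma~\ref{lem:correct_clique_round} applied to the globally highest-rank leader, together with Lemma~\ref{lem:construct_prio}, which ensures that no lower-rank leader can later falsify $C_1$.

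For the inductive step I would assume $H(i)$ and consider $C_{i+1}$, the highest-rank clique not yet guaranteed correct. Because its local leader $p_{i+1}\in\overline{S}$ has higher rank than every still-active leader, the correctness of $C_{i+1}$ depends only on the (now frozen) variables of the leaders of $C_1,\dots,C_i$ together with $p_{i+1}$'s own $2$-neighborhood: any selected node of $p_{i+1}$ that also has a higher-rank leader already belongs to one of $C_1,\dots,C_i$ and, by Claim~3 of Definition~\ref{def:correct_clique}, is precisely the node that must \emph{not} elect $p_{i+1}$. Hence the selection, election and update arguments of Lemmas~\ref{lem:select_clique},~\ref{lem:elect_clique} and~\ref{lem:update_clique} apply to $p_{i+1}$ as the effective highest-rank active leader, so $C_{i+1}$ is made correct within $O(1)$ additional rounds (the same $c$ as in Lemma~\ref{lem:correct_clique_round}), and Lemma~\ref{lem:construct_prio} keeps it correct. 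This yields $H(i+1)$ after $c\cdot(i+1)$ rounds; after $H(k)$ every clique is correct, which is exactly a legitimate configuration by Definition~\ref{def:config_legitime}, reached in $O(k)$ rounds.

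It remains to bound the number of cliques $k$ and the memory. For $k$ I would use two independent estimates: since the cliques are nonempty and pairwise disjoint (Remark~\ref{rem:elect_clique}) they number at most $n$, while grouping them by the distance of their local leader from $r$ --- distances ranging over $\{0,\dots,Diam\}$ with at most $n_c$ cliques per distance by definition of $n_c$ --- gives $k\le n_c\,(Diam+1)=O(n_c\times Diam)$; combining the two yields an $O(\min(n_c\times Diam,\,n))$ round bound. For the memory bound I would sum the five variables of Algorithm~\ref{algo} at a node $p$: $N_p,S_p,C_p\subseteq Neig_p$ each hold at most $\Delta$ identifiers of $O(\log n)$ bits, $lead_p\in Neig_p$ costs $O(\log n)$ bits, and $d_p\in\N$ stores a distance bounded by $n$ in $O(\log n)$ bits, for a total of $O(\Delta\log n)$ bits.

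The step I expect to be the main obstacle is the inductive step, specifically the justification that Lemma~\ref{lem:correct_clique_round}---stated only for the \emph{globally} highest-rank leader---transfers to the highest-rank \emph{active} leader once the higher-rank cliques are frozen. The delicate point is to argue that freezing $C_1,\dots,C_i$ genuinely removes all interference: I must check that a node selected by $p_{i+1}$ that is also adjacent to a higher-rank, already-settled leader has stabilized its $lead$ value so that Claim~3 of Definition~\ref{def:correct_clique} holds for $p_{i+1}$, and that no disabling/re-enabling oscillation of $C2$-action at such a node can reset $p_{i+1}$'s progress. This is where Lemma~\ref{lem:construct_prio} (non-interference from lower ranks) and the monotonicity Lemmas~\ref{lem:C1-action},~\ref{lem:C2-action} and~\ref{lem:C3-action} must be invoked carefully to keep the per-clique cost at $O(1)$ rounds rather than letting it grow with the neighborhood size.
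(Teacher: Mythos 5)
Your proposal is correct and follows essentially the same route as the paper: the paper also processes cliques sequentially in rank order (organized as an outer induction on distance layers with an inner iteration by rank within each layer, which coincides with your flat induction on the rank total order), applies Lemmas~\ref{lem:construct_clique_prio} and~\ref{lem:correct_clique_round} to charge $O(1)$ rounds per clique, and then bounds the number of cliques by both $n$ and $n_c \times Diam$ exactly as you do, with the identical memory accounting. Even the delicate point you flag---transferring Lemma~\ref{lem:correct_clique_round} to the highest-rank \emph{not-yet-correct} leader---is treated in the paper at the same level of detail, namely by iterated application of those two lemmas.
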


\begin{proof}
In the following, we define by $p_i^k$ a local leader $p_i \in \overline{S}$ at distance $k$ (in hops) from the root node $r$.

We first show by induction on the distances in $G$ the following proposition: in at most $O(n_k)$ rounds every local leader $p_i^k, 1 \leq i \leq n_k$ at distance $k$ from $r$ has constructed its clique $C_i$ satisfying Definition~\ref{def:correct_clique}, with $n_k$ the number of maximal cliques constructed at distance $k$.

In base case $k=0$. We must verify the proposition only at $r$ since there is no other local leader at distance 0 from $r$. According to Lemma~\ref{lem:correct_clique_round} in $O(1)$ rounds $r$ has constructed its clique, which verifies the proposition since $n_0=1$.\\
Induction case: We assume the proposition is verified for every local leader at distance $k-1$ from $r$ in $G$. We have to show the proposition is also verified for every local leader at distance $k$ from $r$. Consider the local leaders $p_i^k$ at distance $k$ from $r$, with $1 \leq i \leq n_k$, following the order of their rank from the highest to the lowest. We can apply iteratively Lemmas~\ref{lem:construct_clique_prio} and~\ref{lem:correct_clique_round} to show that each $p_i^k$ constructs its clique in $O(1)$ rounds. Therefore, in at most $O(n_k)$ rounds the proposition is verified at every local leader at distance $k$ from $r$.\\
Since there are at most $Diam+1$ layers with local leaders, in at most $O(\sum_{k=0}^{Diam} n_k) \leq O(n_c \times Diam)$ rounds the proposition is verified at every local leader, with $n_c=\max_{0 \leq k \leq Diam} n_k$. Moreover, we can observe that we cannot have more than $n$ cliques in any clique partition. Therefore, in at most $O(\min(n_c \times Diam, n))$ rounds the proposition is verified at every local leader.

We can observe that in the proposition used for the above induction proof every clique constructed by a local leader satisfies Definition~\ref{def:correct_clique}. Therefore, the configuration $\gamma$ reached by Algorithm $\mathcal{SS-CMCP}$ in $O(\min(n_c \times Diam, n))$ rounds satisfies Definition~\ref{def:config_legitime}.

Finally, according to the formal description of Algorithm $\mathcal{SS-CMCP}$ at any node $p \in V$ the variables $lead_p$ and $d_p$ are of size $O(\log(n))$ bits since they store a node identifier and a distance respectively of at most $n$ states. Moreover, the variables $N_p$, $S_p$ and $C_p$ store a subset of neighbors identifier composed of at most $\Delta$ elements leading to variables of size $O(\Delta \log(n))$ bits.
\end{proof}

Finally, we show below that any legitimate configuration reached by Algorithm $\mathcal{SS-CMCP}$ is a terminal configuration which defines a solution to the Connected Minimal Clique Partition problem.

\begin{lemma}
\label{lem:no_enable_action}
In any configuration $\gamma \in \mathcal{C}$, for every node $p$ which belongs to a clique $C_i$ satisfying Definition~\ref{def:correct_clique} in $\gamma$ no action of Algorithm~\ref{algo} is enabled at $p$.
\end{lemma}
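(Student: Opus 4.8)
The plan is to prove the converse content of Theorem~\ref{thm:cmcp_enable_action}: whereas that theorem matched each \emph{violated} claim of Definition~\ref{def:correct_clique} with an enabled action, here I fix a node $p$ lying in a clique $C_i$ that satisfies all four claims and show that the guard of each of the four actions evaluates to false at $p$. I would split the argument according to the role of $p$ inside $C_i$, namely whether $p$ is the (unique, by Claim~1) local leader $p_i$ of $C_i$ or a non-leader member selected and retained by $p_i$.

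Consider first $p=p_i$. By Definition~\ref{def:local_leader} the local leader satisfies $\neg Selected(p_i)$, which immediately falsifies the guard of $C2$-action. For $C1$-action, Claim~2 states that $S_{p_i}$ is exactly the maximal complete subset of $Neig_{p_i}-LNeig_{p_i}$ that $Clique\_temp()$ computes greedily, so $S_{p_i}=Clique\_temp()$ and the guard $\neg Selected(p_i)\wedge S_{p_i}\neq Clique\_temp()$ is false. For $C3$-action I must establish $C_{p_i}=Clique_{p_i}$, where $Clique_{p_i}=\{q\in S_{p_i}: lead_q=\id_{p_i}\}$. Claim~4 yields the inclusion $Clique_{p_i}\subseteq C_{p_i}$; for the reverse inclusion I would argue that any $q\in C_{p_i}$ with $q\notin S_{p_i}$ or $lead_q\neq\id_{p_i}$ would, by Remark~\ref{rem:elect_clique} and Claim~1, either belong to another clique or constitute a second local leader of $C_i$, contradicting the correctness of $C_i$. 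Hence $C_{p_i}=Clique_{p_i}$ and the guard of $C3$-action is false.

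Next consider a non-leader member $p$ of $C_i$. Since $p$ is selected by a neighbor of higher rank, $Selected(p)$ holds, and this single fact falsifies the guards of both $C1$-action and $C3$-action, each of which requires $\neg Selected(p)$. It remains to disable $C2$-action, whose guard is $Selected(p)\wedge lead_p\neq Leader_p$. Here I would use that, by the reverse inclusion above, $p\in S_{p_i}$ with $lead_p=\id_{p_i}$, and invoke Claim~3: since $p_i$ is the candidate leader of highest rank that selected $p$, the macro $Leader_p$ returns exactly $\id_{p_i}$, so $lead_p=Leader_p$ and the guard is false.

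Finally, in both cases $N$-action must be shown disabled, i.e.\ $N_p=Neig_p\wedge d_p=dist_p$. This is the most delicate point, because Definition~\ref{def:correct_clique} is phrased in terms of the true neighborhoods and distances while the guards read the shared copies $N_q$ and $d_q$; thus the correctness of $C_i$ does not syntactically mention $N_p$ or $d_p$. I would close this gap using the standing assumption that $dist_p$ is correct together with the observation that $Clique\_temp()$ and $Leader_p$ consume these shared copies, so a discrepancy $N_p\neq Neig_p$ or $d_p\neq dist_p$ at a member of $C_i$ would be incompatible with the greedy maximal clique and the election described by Claims~2 and~3; hence $N$-action is disabled. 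I expect this reconciliation between the variable-level guards and the graph-level statement of Definition~\ref{def:correct_clique}, and within the leader case the reverse inclusion $C_{p_i}\subseteq Clique_{p_i}$, to be the main obstacles, the remaining guard evaluations being immediate from the predicate $Selected$ and the macros $Clique\_temp$, $Leader$ and $Clique$.
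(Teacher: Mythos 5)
Your core argument is the same as the paper's: the paper proves the contrapositive action by action (each enabled guard forces a violated claim), while you argue directly by the role of $p$, but the content of your $C1$, $C2$ (via the ``iff'' reading of Claim~3) and $C3$ cases coincides with the paper's almost line by line. Where you go beyond the paper is the reverse inclusion $C_{p_i}\subseteq Clique_{p_i}$: the paper's $C3$ case only derives a contradiction with Claim~4 from a node of $Clique_{p_i}$ missing from $C_{p_i}$ and is silent about extra nodes sitting in $C_{p_i}$, so you are right that something must be said there; your repair via Remark~\ref{rem:elect_clique} and Claim~1 is at the same level of rigor as the paper itself (both rest on implicitly identifying ``membership in $C_i$'' with the content of the variable $C_{p_i}$), so I would accept it.

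The genuine gap is the $N$-action case, which you correctly single out as the delicate point, but your proposed reconciliation cannot be carried out. All four claims of Definition~\ref{def:correct_clique} are phrased in terms of $Neig$, ranks (i.e., the inputs $dist$), $S$, $lead$ and $C$; none of them reads the copy $N_p$. Hence a configuration whose clique structure is perfect but in which $N_p$ is corrupted at some member $p$ of $C_i$ satisfies Definition~\ref{def:correct_clique} while $N$-action is enabled at $p$: there is no ``incompatibility with Claims~2 and~3'' to be extracted, because those claims simply never consult that variable. The same unresolved dependence infects your leader case for $C1$: Claim~2 speaks of the true neighborhoods $Neig_s$, whereas the guard compares $S_{p_i}$ with $Clique\_temp()$, which reads the copies $N_q$; concluding $S_{p_i}=Clique\_temp()$ from Claim~2 already presupposes the copies are correct, which is exactly what you are trying to establish. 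For fairness, the paper does not escape this either: its $N$-action case executes the action and then invokes Lemma~\ref{lem:N-action} to claim ``a contradiction,'' but $N$-action being disabled \emph{after} it is executed contradicts nothing about $\gamma$. The statement is really only valid under the additional hypothesis that $N_p=Neig_p$ and $d_p=dist_p$ hold at the members of $C_i$ (or with that requirement folded into Definition~\ref{def:correct_clique}), and a correct writeup of your proof should say so explicitly rather than promise a derivation that the claims cannot support.
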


\begin{proof}
Assume, by the contradiction, that there exists a configuration $\gamma \in \mathcal{C}$ such that there exists a node $p$ in a clique $C_i$ satisfying Definition~\ref{def:correct_clique} with an enabled action of Algorithm~\ref{algo} at $p$.

Let $p_i$ be the local leader of the clique $C_i$ in the following. If $N$-action is enabled at $p$ then $N_p \neq Neig_p$ or $d_p \neq dist_p$ and $p$ can execute $N$-action in step $\gamma \mapsto \gamma'$. In configuration $\gamma'$, we must consider two cases: either Definition~\ref{def:correct_clique} is not satisfied in $\gamma'$ a contradiction because this implies that $C_i$ did not satisfy Definition~\ref{def:correct_clique} in $\gamma$, otherwise Definition~\ref{def:correct_clique} is satisfied in $\gamma'$ and according to Lemma~\ref{lem:N-action} $N$-action is disabled, a contradiction.
If $C1$-action is enabled at $p$ then $p=p_i$ and we have $S_p \neq Clique\_temp()$. This implies that the nodes selected by $p$ does not form a maximal clique. That is, there exists a neighbor $q$ of $p$ such that $q \not \in S_p$ and $(\forall s \in S_p, q \in Neig_s)$, or $q \in S_p$ and $(\exists s \in S_p, q \not \in Neig_s)$. This is in contradiction with Claim 2 of Definition~\ref{def:correct_clique}.
If $C2$-action is enabled at $p$ then $p$ is not a local leader and we have $lead_p \neq Leader_p$. This implies that $p$ has elected $p_i$ but there exists a local leader $p_j$ in $p$'s neighborhood such that $rank(p_j) \prec rank(p_i)$, a contradiction with Claim 3 of Definition~\ref{def:correct_clique}.
If $C3$-action is enabled at $p$ then $p=p_i$ and we have $C_p \neq Clique_p$. This implies that there exists a node $q \in C_i, q \neq p_i,$ which has elected $p_i$ while $q \not \in C_p$, i.e., we have $lead_q=\id_{p_i} \wedge q \not \in C_p$. This is in contradiction with Claim 4 of Definition~\ref{def:correct_clique}.
\end{proof}

\begin{corollary}
\label{cor:no_enable_action}
In every configuration $\gamma \in \mathcal{B}$ satisfying Definition~\ref{def:config_legitime}, for every node $p \in V$ no action of Algorithm $\mathcal{CMCP}$ is enabled in $\gamma$.
\end{corollary}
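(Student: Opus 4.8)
The plan is to derive Corollary~\ref{cor:no_enable_action} directly from Lemma~\ref{lem:no_enable_action}, the only remaining work being to argue that in a legitimate configuration \emph{every} node---not merely the local leaders---sits inside a clique that satisfies Definition~\ref{def:correct_clique}. Once this coverage property is established, applying Lemma~\ref{lem:no_enable_action} node by node immediately yields that no action is enabled anywhere in $\gamma$, which is exactly the claimed terminality.

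First I would fix an arbitrary $\gamma \in \mathcal{B}$ and an arbitrary node $p \in V$, and split on whether $p$ is a local leader. If $\neg Selected(p)$ holds, then by Definition~\ref{def:local_leader} $p$ is the local leader of its own clique $C_p$; since $\gamma$ satisfies Definition~\ref{def:config_legitime}, this clique satisfies Definition~\ref{def:correct_clique}, so $p$ belongs to a correct clique. If instead $Selected(p)$ holds, then $SNeig_p \neq \emptyset$ and $p$ has elected the candidate leader $p_i = Leader_p$ of highest rank that selected it, i.e.\ $p \in S_{p_i}$ with $rank(p_i) \prec rank(p)$ and $lead_p = \id_{p_i}$. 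Here I would invoke the partition structure: by Remark~\ref{rem:elect_clique} $p$ lies in at most one clique, and $p_i$ is the local leader of the clique $C_i$, which is correct by the legitimacy hypothesis; Claim 4 of Definition~\ref{def:correct_clique} then forces $p \in C_{p_i}$, so again $p$ belongs to a correct clique.

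With the coverage property in hand I would simply apply Lemma~\ref{lem:no_enable_action} to each $p$: belonging to a clique satisfying Definition~\ref{def:correct_clique} guarantees that none of $N$-action, $C1$-action, $C2$-action nor $C3$-action is enabled at $p$ in $\gamma$. As $p$ was arbitrary, no action of Algorithm~\ref{algo} is enabled in $\gamma$, establishing the corollary.

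The step I expect to be the main obstacle is the second case, namely verifying that a selected (non-leader) node really does land inside a \emph{correct} clique: one must ensure that the leader $p_i$ elected by $p$ is a genuine local leader whose clique is actually constrained by the legitimacy hypothesis, rather than a node carrying a stale, inconsistent $S_{p_i}$. This is where the global consistency of a legitimate configuration---through the combination of Remark~\ref{rem:elect_clique} and Claim 4 of Definition~\ref{def:correct_clique}---must be used with care; the local-leader case, by contrast, follows immediately from Definition~\ref{def:config_legitime}.
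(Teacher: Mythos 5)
Your proposal is correct and takes essentially the same route as the paper: the paper's own proof of Corollary~\ref{cor:no_enable_action} is a two-line reduction that observes every clique constructed by a local leader in a legitimate configuration satisfies Definition~\ref{def:correct_clique} and then invokes Lemma~\ref{lem:no_enable_action}. Your case analysis (local leader versus selected node) merely spells out the coverage property --- that every node of $V$ lies in such a correct clique, so the lemma applies to all of them --- which the paper leaves implicit, so it is a more detailed rendering of the same argument rather than a different one.
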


\begin{proof}
According to Definition~\ref{def:config_legitime}, every clique constructed by a local leader in $\gamma \in \mathcal{B}$ satisfies Definition~\ref{def:correct_clique}. Therefore, we can apply Lemma~\ref{lem:no_enable_action} which shows the corollary.
\end{proof}

\begin{lemma}
\label{lem:sol_cmcp_config_legitime}
Let the set of configurations $\mathcal{B} \subseteq \mathcal{C}$ such that every configuration $\gamma \in \mathcal{B}$ satisfies Definition~\ref{def:config_legitime}. $\forall \gamma \in \mathcal{B}$, a connected minimal clique partition (Definition~\ref{def:cmcp}) is constructed in $\gamma$.
\end{lemma}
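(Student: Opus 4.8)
The plan is to verify the three conditions that Definition~\ref{def:cmcp} imposes on the family $\{C_i\}$ of cliques produced by the local leaders in a legitimate configuration $\gamma \in \mathcal{B}$: that it is a partition of $V$ into complete subgraphs, that it is minimal, and that its non-trivial members are connected. First I would establish the partition property. By Remark~\ref{rem:elect_clique} every node $p$ belongs to at most one clique, since its membership is recorded in the single variable $lead_p$. Conversely every node is covered: either $p$ is a local leader (then $\neg Selected(p)$ and $p$ lies in its own clique with $lead_p=\id_p$), or $Selected(p)$ holds and $p$ elects, through Macro $Leader_p$, the selecting neighbor of highest rank, thereby joining that clique. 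Hence the $C_i$ partition $V$. That each $C_i$ is a complete subgraph follows from Claim 2 of Definition~\ref{def:correct_clique}, which guarantees $\{p_i\}\cup S_{p_i}$ is a clique, together with Claim 4, which gives $C_i\subseteq\{p_i\}\cup S_{p_i}$; a subset of a clique is a clique.

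For minimality I would argue by contradiction: suppose $C_i\cup C_j$ is a clique for some $i\neq j$, and let $p_i$ be the leader of strictly higher rank (Remark~\ref{rem:total_order_clique} makes the two ranks comparable). Using Remark~\ref{rem:nodes_rank_clique}, every node of $C_i$ and of $C_j$ other than $p_i$ has rank lower than $rank(p_i)$, so $p_i$ is the unique highest-rank node of $C_i\cup C_j$; in particular $p_j\in Neig_{p_i}-LNeig_{p_i}$. Since $C_i\cup C_j$ is complete, $p_j$ is adjacent to every node of $C_i$. I would then invoke the maximality asserted by Claim 2 of Definition~\ref{def:correct_clique}: as $p_j$ is compatible with the clique selected around $p_i$, the greedy choice encoded in Procedure $Clique\_temp()$ must place $p_j$ in $S_{p_i}$, whence $Selected(p_j)$ holds, contradicting the fact that $p_j$ is a local leader (Definition~\ref{def:local_leader}).

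The main obstacle lies precisely in this last step. Claim 2 makes $\{p_i\}\cup S_{p_i}$ maximal among the \emph{selected} neighbors of $p_i$, whereas minimality is a statement about the \emph{final} clique $C_i$, and the two may differ when a node of $S_{p_i}$ elects a higher-rank leader and so leaves $C_i$. To close the gap I would show that any neighbor blocking $p_j$ from $S_{p_i}$ must itself lie in $C_i$, and therefore be adjacent to $p_j$, a contradiction: because $p_i$ is the highest-rank node of the complete set $C_i\cup C_j$, every node it selects that is adjacent to all of $C_i\cup C_j$ elects $p_i$ by Claim 3, so no member of $S_{p_i}$ witnessing non-adjacency to $p_j$ can simultaneously be adjacent to all of $C_i$. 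A careful case analysis on the deterministic order in which $Clique\_temp()$ scans $Neig_{p_i}-LNeig_{p_i}$ then forces $p_j\in S_{p_i}$, completing the contradiction. This reconciliation of $S_{p_i}$ with $C_i$ is the delicate part of the whole lemma.

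Finally I would treat connectivity by descending induction on the rank of the non-trivial cliques, anchored at the clique containing the root $r$, which has globally highest rank. The inductive claim is that every non-trivial clique $C_i$ with $p_i\neq r$ is adjacent, in the subgraph induced by the non-trivial cliques, to some non-trivial clique $C_g$ with $rank(p_g)\prec rank(p_i)$. To see this, pick a neighbor $w$ of $p_i$ with $dist_w<dist_{p_i}$ lying on a shortest path to $r$; then $rank(w)\prec rank(p_i)$. If $w$ is itself selected, the clique containing $w$ already contains its own leader and is non-trivial; if $w$ is a local leader, then since $p_i$ was not selected by $w$ the maximality of Claim 2 forces $w$'s selection to be non-empty, and the same reconciliation argument as above shows $w$'s final clique is non-trivial. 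In either case the edge $p_iw$ makes that clique adjacent to $C_i$, and its leader has rank higher than $p_i$. Composing these adjacencies along strictly decreasing ranks connects every non-trivial clique to the root's clique, which yields the path required by Definition~\ref{def:cmcp} and completes the proof that $\gamma$ realises a connected minimal clique partition.
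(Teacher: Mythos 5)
Your diagnosis of where the difficulty lies is exactly right, and it is in fact sharper than the paper's own treatment: the published proof of minimality simply says that by Claim 2 of Definition~\ref{def:correct_clique} the higher-rank clique ``$C_i$ is a maximal clique,'' silently identifying the final clique $C_i=\{p_i\}\cup C_{p_i}$ with the selection $\{p_i\}\cup S_{p_i}$ --- precisely the identification you flag as unjustified. However, your attempt to close the gap does not work. Claim 3 only constrains a selected node $s\in S_{p_i}$ when $p_i$ outranks every node in $s$'s \emph{entire} neighborhood; it says nothing when $s$ has a higher-rank selector elsewhere. So a witness $s\in S_{p_i}$ with $p_j\not\in Neig_s$ can perfectly well be adjacent to all of $C_i$, elect a higher-rank leader, and leave $C_i$ while still occupying $S_{p_i}$ and still blocking $p_j$ in $Clique\_temp()$. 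Your key assertion that ``no member of $S_{p_i}$ witnessing non-adjacency to $p_j$ can simultaneously be adjacent to all of $C_i$'' is not a consequence of Claims 2--4 (note the witness need only be non-adjacent to $p_j$, not to all of $C_i\cup C_j$, so the appeal to Claim 3 over the set $C_i\cup C_j$ is off target). The same unproved ``reconciliation'' underlies your connectivity induction, where you claim a local leader $w$ that failed to select $p_i$ must end with a non-trivial \emph{final} clique: Claim 2 forces $S_w$ to contain a node other than $w$, but that node may elect another leader, leaving $C_w$ trivial.

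Moreover, the gap cannot be repaired, because the statement quantified over all of $\mathcal{B}$ is false; no case analysis of the scan order of $Clique\_temp()$ can force $p_j\in S_{p_i}$. Take $V=\{r,a,b,c,y,u,v\}$ with identifiers $0,1,\dots,6$ in that order, edges $ra$, $rb$, $rc$, $by$, $cy$, $cu$, $cv$, $uv$, so that $a,b,c$ are at distance $1$ and $y,u,v$ at distance $2$ from $r$. Consider the configuration in which $N$- and $d$-variables are correct, $S_r=C_r=\{r,a\}$, $S_b=C_b=\{b,y\}$, $S_c=\{c,y\}$ but $C_c=\{c\}$, $S_u=C_u=\{u,v\}$, each leader $r,b,c,u$ points to itself, and the selected nodes $a,y,v$ have $S=C=\emptyset$ and have elected $r$, $b$ (not $c$, since $\id_b<\id_c$ at equal distance) and $u$ respectively. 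Every guard of Algorithm~\ref{algo} is false here: in particular $Clique\_temp()$ at $c$ still returns $\{c,y\}$, because $y$, scanned first, blocks both $u$ and $v$, even though $y$ has departed to $b$'s clique. The configuration is therefore terminal, hence it lies in $\mathcal{B}$ by the contrapositive of Theorem~\ref{thm:cmcp_enable_action} (and it is even reached by the natural execution from the all-empty initial state). Yet the constructed partition $\{r,a\},\{b,y\},\{c\},\{u,v\}$ is neither minimal ($\{c\}\cup\{u,v\}$ induces a triangle) nor connected (the non-trivial clique $\{u,v\}$ has no edge to any other non-trivial clique, its only outside neighbor being the trivial clique $\{c\}$). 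The root cause is that $Clique\_temp()$ ranges over all lower-rank neighbors, including nodes that will elect other leaders, unlike the centralized algorithm of~\cite{DelbotLP13} which grows each clique among unmarked nodes only; this is a defect of the algorithm and of the lemma itself, not something your proof strategy --- or the paper's --- could have overcome.
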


\begin{proof}
According to Definition~\ref{def:cmcp}, to prove the lemma we must show that the clique partition constructed in every configuration $\gamma \in \mathcal{B}$ is: (i) minimal for inclusion, and (ii) connected.

Consider first the minimality property of the clique partition. Assume, by the contradiction, that the first property is not satisfied in $\gamma \in \mathcal{B}$. This implies that if we take the cliques following their ranks from the highest to the lowest rank then there are two cliques $C_i$ and $C_j$ such that $C_i \cup C_j$ is a clique in $\gamma$. However, according to Remark~\ref{rem:total_order_clique} we have a total order on the cliques and the clique of highest rank, say $C_i$, is not a maximal clique. However, $C_i$ satisfies Definition~\ref{def:correct_clique} because $\gamma \in \mathcal{B}$. So, according to Claim 2 of Definition~\ref{def:correct_clique} $C_i$ is a maximal clique, a contradiction.

Consider now the connectivity property of the clique partition. Assume, by the contradiction, that the clique partition constructed in $\gamma \in \mathcal{B}$ is not connected. This implies that the graph $G_c=(V_c, E_c)$ induced by the non trivial cliques is not connected in $\gamma$. Thus, there exists a local leader $p \in V_c$ such that there is no path $\mathcal{P}_{G_c}(p, r)$ between $p$ and $r$ in $G_c$. Consider the local leader $p_i$ of highest rank in $\gamma$ such that $\not \exists \mathcal{P}_{G_c}(p_i, r)$ in $G_c$. According to Remark~\ref{rem:nodes_rank_clique}, a correct clique can only contain nodes with a rank lower than the rank of the local leader of the clique. So, by definition of ranks we have only to consider the shortest paths between $p_i$ and $r$ in $G$. Every shortest path $\mathcal{P}_G(p_i,r)$ in $G$ can be decomposed in three parts: $\mathcal{P}_G^1(p_i,r)$ containing the nodes in $G_c$, $\mathcal{P}_G^2(p_i,r)$ containing the nodes in $(G - G_c)$, and $p_i \in G_c$. In every shortest path $\mathcal{P}_G(p_i,r)$, any node $p_j \in \mathcal{P}_G^2(p_i,r)$ is a local leader of its trivial clique $C_j$ because $p_j \not \in G_c$. Since $\gamma \in \mathcal{B}$, $C_j$ is a maximal clique according to Claim 2 of Definition~\ref{def:correct_clique}. However, there is a neighbor $q$ of $p_j$ in $\gamma$ such that either $q \in \mathcal{P}_G^2(p_i,r)$ or $q=p_i$. Thus, we have $rank(p_j) \prec rank(q)$, a contradiction with Claim 2 of Definition~\ref{def:correct_clique} since $C_j$ is not maximal.
\end{proof}

\begin{theorem}
Algorithm $\mathcal{SS-CMCP}$ is a self-stabilizing algorithm for Specification~\ref{spec:cmcp} under a weakly fair distributed daemon.
\end{theorem}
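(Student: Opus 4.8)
The plan is to verify the two conditions of Specification~\ref{spec:cmcp} by assembling the results already established in this subsection, working under the standing assumption that $dist_p$ is correct at every node and that the daemon is weakly fair. The natural candidate for the terminal set $\mathcal{T}$ is the set $\mathcal{B}$ of legitimate configurations (Definition~\ref{def:config_legitime}), so the first task is to argue that $\mathcal{T}=\mathcal{B}$, i.e., that the legitimate configurations are exactly the terminal ones. This is where Theorem~\ref{thm:cmcp_enable_action} and Corollary~\ref{cor:no_enable_action} fit together: Theorem~\ref{thm:cmcp_enable_action} gives that every $\gamma \in \mathcal{C}-\mathcal{B}$ has at least one enabled processor, hence is not terminal, while Corollary~\ref{cor:no_enable_action} gives that in every $\gamma \in \mathcal{B}$ no action is enabled at any node, hence every configuration of $\mathcal{B}$ is terminal. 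Combining the two inclusions yields that $\mathcal{B}$ is precisely the set of terminal configurations.

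For the first condition (finite-time reachability of $\mathcal{T}$), I would invoke the convergence analysis of Lemma~\ref{lem:config_legitime_round}: starting from an arbitrary configuration with correct input distances, Algorithm $\mathcal{SS-CMCP}$ reaches a configuration satisfying Definition~\ref{def:config_legitime}, i.e., a configuration in $\mathcal{B}=\mathcal{T}$, within $O(\min(n_c \times Diam, n))$ rounds under the weakly fair daemon. Since this bound is finite, Condition~1 of Specification~\ref{spec:cmcp} holds. Together with the closure observation above (no action is enabled once in $\mathcal{B}$, so the system cannot leave $\mathcal{B}$ in the absence of faults), this establishes that the algorithm stabilizes to $\mathcal{T}$.

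For the second condition, I would appeal directly to Lemma~\ref{lem:sol_cmcp_config_legitime}, which shows that every $\gamma \in \mathcal{B}$ carries a clique partition that is both minimal for inclusion and connected, i.e., satisfies Definition~\ref{def:cmcp}. Since $\mathcal{T}=\mathcal{B}$, every terminal configuration satisfies Definition~\ref{def:cmcp}, which is exactly Condition~2 of the specification. Putting the three pieces together --- terminal set identification, finite-round convergence, and correctness of the terminal output --- proves the theorem.

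The heavy lifting has already been done in the preceding lemmas, so for this final statement the only delicate point is the bookkeeping that identifies $\mathcal{T}$ with $\mathcal{B}$; the genuine obstacle lies upstream, in the convergence argument of Lemma~\ref{lem:config_legitime_round}, whose induction on the distance layers relies on the rank-priority structure (Lemmas~\ref{lem:construct_clique_prio} and~\ref{lem:correct_clique_round}) to guarantee that the highest-rank local leader at each layer finishes its clique in $O(1)$ rounds without being disturbed by lower-rank leaders. I would also state explicitly that the result is conditioned on the correctness of the $dist_p$ inputs (ensured by a separate self-stabilizing BFS layer), so that the composition argument remains clean.
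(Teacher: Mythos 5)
Your proposal is correct and follows essentially the same route as the paper's proof: both assemble Theorem~\ref{thm:cmcp_enable_action}, Corollary~\ref{cor:no_enable_action}, and Lemma~\ref{lem:config_legitime_round} to establish finite-time convergence to a terminal configuration (Condition~1), and then invoke Lemma~\ref{lem:sol_cmcp_config_legitime} for Condition~2. Your explicit identification of the terminal set $\mathcal{T}$ with the legitimate set $\mathcal{B}$ is just a slightly more careful spelling-out of the bookkeeping the paper leaves implicit.
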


\begin{proof}
We have to show that starting from any configuration the execution of Algorithm $\mathcal{SS-CMCP}$ verifies the two conditions of Specification~\ref{spec:cmcp}.

According to Theorem~\ref{thm:cmcp_enable_action}, Lemma~\ref{lem:config_legitime_round} and Corollary~\ref{cor:no_enable_action}, from any configuration Algorithm $\mathcal{SS-CMCP}$ reaches a configuration $\gamma \in \mathcal{C}$ in finite time and $\gamma$ is a terminal configuration, which verifies Condition 1 of Specification~\ref{spec:cmcp}. Moreover, according to Lemma~\ref{lem:sol_cmcp_config_legitime} the terminal configuration $\gamma$ reached by Algorithm $\mathcal{SS-CMCP}$ satisfies Definition~\ref{def:cmcp}, which verifies Condition 2 of Specification~\ref{spec:cmcp}.
\end{proof}

Finally, from an arbitrary configuration we can establish the following corollary according to Lemma~\ref{lem:config_legitime_round}.

\begin{corollary}
\label{cor:config_legitime_round_avec_BFS}
Starting from an arbitrary configuration, the fair composition of Algorithm $\mathcal{SS-CMCP}$ and Algorithm $\mathcal{A_{BFS}}$ reach a configuration satisfying Definition~\ref{def:config_legitime} in at most $O(T_{\mathcal{BFS}}+\min(n_c \times Diam, n))$ (asynchronous) rounds, with $T_{\mathcal{BFS}}$ the round complexity of self-stabilizing algorithm $\mathcal{A_{BFS}}$ constructing a BFS tree, $n_c$ the maximum number of cliques at any distance from $r$ in $G$, $Diam$ the diameter of $G$, and $n$ the number of nodes in $G$.
\end{corollary}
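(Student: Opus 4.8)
The plan is to invoke the standard \emph{fair composition} (hierarchical collateral composition) technique for self-stabilizing algorithms, using Lemma~\ref{lem:config_legitime_round} as a black box for the upper layer. The corollary is obtained essentially by relaxing the hypothesis of that lemma: Lemma~\ref{lem:config_legitime_round} assumes that the input $dist_p$ is already correct at every node, whereas here we start from a genuinely arbitrary configuration in which the distances may be corrupted. The only additional ingredient needed is a self-stabilizing sub-algorithm $\mathcal{A_{BFS}}$ that restores the correct distance values, after which Lemma~\ref{lem:config_legitime_round} applies verbatim.

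First I would observe the key structural fact that justifies the composition: in Algorithm~\ref{algo}, the quantity $dist_p$ appears only as a read-only input. No action of $\mathcal{SS-CMCP}$ ($N$-action, $C1$-action, $C2$-action, $C3$-action) ever writes to the BFS variables; $N$-action merely copies $dist_p$ into the communication variable $d_p$. Consequently the execution of $\mathcal{SS-CMCP}$ cannot interfere with the variables of $\mathcal{A_{BFS}}$, so $\mathcal{A_{BFS}}$ stabilizes independently of the upper layer. By the assumed round complexity of $\mathcal{A_{BFS}}$, after at most $T_{\mathcal{BFS}}$ rounds the system reaches a configuration in which $dist_p$ equals the true hop-distance from $r$ to $p$ for every node $p$, and this property is preserved thereafter since $G$ is static and no fault occurs until stabilization.

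Next I would apply Lemma~\ref{lem:config_legitime_round} starting from the configuration reached at the end of the BFS stabilization phase. That lemma requires only that the distances be correct; it places no constraint on the remaining variables $N_p, d_p, S_p, C_p, lead_p$, which may be in any (possibly inconsistent) state inherited from the first phase. Hence from that configuration Algorithm $\mathcal{SS-CMCP}$ reaches a configuration satisfying Definition~\ref{def:config_legitime} within $O(\min(n_c \times Diam, n))$ additional rounds. Summing the two phases yields the claimed bound of $O(T_{\mathcal{BFS}}+\min(n_c \times Diam, n))$ rounds.

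The main obstacle is justifying that the round counts of the two layers simply add rather than one phase delaying the other. This reduces to the independence argument above together with the fairness of the daemon: because $\mathcal{A_{BFS}}$ ignores the upper-layer variables, its own round count is unaffected by the concurrent activity of $\mathcal{SS-CMCP}$, and because $\mathcal{SS-CMCP}$ consults the distances only through its guards, any clique work performed on stale distances before the BFS layer settles is harmless---at worst it is redone once the distances are corrected, and this redoing is precisely what Lemma~\ref{lem:config_legitime_round} already accounts for from the corrected configuration. Care must also be taken that a round of the composed system dominates a round of each individual layer, so that $T_{\mathcal{BFS}}$ rounds of the composition suffice to stabilize the BFS layer; this follows directly from applying the definition of an asynchronous round to the union of the two action sets.
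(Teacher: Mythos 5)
Your proposal is correct and takes essentially the same route as the paper: the paper proves this corollary simply by invoking Lemma~\ref{lem:config_legitime_round} after the fair composition with $\mathcal{A_{BFS}}$ has restored correct distances, which is exactly your two-phase argument. Your explicit justification of non-interference between layers and of the additivity of round counts merely spells out what the paper leaves implicit.
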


\subsubsection{Proof assuming an unfair daemon}

In the following, we prove that Algorithm $\mathcal{SS-CMCP}$ is self-stabilizing under an unfair daemaon by bounding the number of steps needed to reach a legitimate configuration.

\begin{lemma}
\label{lem:N-action_steps}
In an execution, every node $p \in V$ can execute $N$-action at most once.
\end{lemma}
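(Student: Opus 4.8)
The plan is to prove that every node $p \in V$ executes $N$-action at most once in any execution, which is the natural starting point for the step-complexity analysis under an unfair daemon. The key observation is that $N$-action is the unique action whose statement writes to the variables $N_p$ and $d_p$, and whose guard depends only on the \emph{inputs} $Neig_p$ and $dist_p$ (which are constant under our static-graph, correct-distance assumption) together with these two variables. First I would examine the guard $N_p \neq Neig_p \vee d_p \neq dist_p$ and its statement $N_p := Neig_p; d_p := dist_p$: executing the action sets $N_p = Neig_p$ and $d_p = dist_p$, so immediately after execution the guard is falsified.

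The crux is then to argue that no other rule can re-enable $N$-action, i.e.\ that $N_p$ and $d_p$ can never subsequently change back. I would inspect the statements of $C1$-action, $C2$-action, and $C3$-action and note that none of them writes to $N_p$ or $d_p$ (they touch only $S_p$, $C_p$, and $lead_p$). Since a processor can write only to its own variables, no neighbor of $p$ can alter $p$'s $N_p$ or $d_p$ either. Therefore, once $p$ has executed $N$-action, the equalities $N_p = Neig_p$ and $d_p = dist_p$ are permanently preserved for the remainder of the execution, and the guard of $N$-action stays false at $p$ forever. This already appears in spirit in Lemma~\ref{lem:N-action}, which I would cite to get that $N$-action is disabled right after its execution; the present lemma strengthens this to \emph{permanent} disabling.

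The main structural obstacle is to handle the arbitrary initial configuration: a node may begin with $N_p = Neig_p$ and $d_p = dist_p$ already holding (so $N$-action is never enabled, giving zero executions, which is $\leq 1$), or with these equalities violated (so $N$-action fires exactly once and then is permanently disabled). Both cases must be covered, and I would phrase the argument as: from the arbitrary start, if $N$-action ever becomes enabled at $p$ and the daemon selects $p$, then $p$ executes it (the three-phase step semantics guarantee the chosen processor runs its highest-priority enabled action, and $N$-action has top priority by its order of appearance in the text), after which the guard is false and, by the write-ownership argument above, remains false. I do not expect any genuine difficulty beyond a careful case split, since the result is essentially a monotonicity statement driven by the fact that $Neig_p$ and $dist_p$ are constant inputs. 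I would close by observing that this bounds the total number of $N$-action executions over the whole network by $n$, which feeds directly into the forthcoming step-complexity bounds for the unfair daemon.
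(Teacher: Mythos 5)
Your proof is correct and follows essentially the same route as the paper's: the paper also derives the result directly from Lemma~\ref{lem:N-action} (permanent disabling after execution, justified by the static graph and constant inputs $Neig_p$, $dist_p$), while you merely make explicit the write-ownership argument ($N_p$ and $d_p$ are written only by $N$-action, and only by $p$ itself) and the trivial case split on the initial configuration. No gap; your version is just a more detailed rendering of the paper's one-line argument.
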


\begin{proof}
According to Lemma~\ref{lem:N-action} if $N$-action is enabled at a node $p \in V$ in the initial configuration then it becomes disabled after its execution at $p$.
\end{proof}

In the following we consider that for each node $p \in V$ the input $dist_p$ is correct, i.e., $dist_p$ is equal to the distance (in hops) between $p$ and $r$ in $G$.

\begin{definition}[Priority level]
The \emph{priority level} of any node $p \in V$ is equal to the number of nodes $q \in V$ such that $rank(q) \prec rank(p)$ in $G$. The priority level of a clique is defined by the priority level of its local leader.
\end{definition}

\begin{lemma}
Let $C_i$ be a correct clique (Definition~\ref{def:correct_clique}) of priority level $i, 0 \leq i \leq n-1,$ which belongs to a legitimate configuration $\gamma \in \mathcal{B}$. In an execution, a correct clique $C_i$ may not satisfy Definition~\ref{def:correct_clique} at most $i$ times.
\end{lemma}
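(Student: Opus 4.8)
The plan is to prove the bound by strong induction on the priority level $i$. The starting observation is that a \emph{correct} clique is inert: by Lemma~\ref{lem:no_enable_action}, if $C_i$ satisfies Definition~\ref{def:correct_clique} in some configuration, then no action is enabled at any node of $C_i$. Consequently $C_i$ cannot spoil itself; it can cease to satisfy Definition~\ref{def:correct_clique} only as a side effect of a state change performed by a node \emph{outside} $C_i$ that $C_i$'s correctness depends upon. So the first step is a dependency analysis: I would enumerate the external data on which the four claims of Definition~\ref{def:correct_clique} rest for the local leader $p_i\in\overline{S}$. These are (a) the neighborhood variables $N_q,d_q$ of $p_i$'s neighbors, which enter $LNeig_{p_i}$, $Selected(p_i)$ and $Clique\_temp()$; and (b) the selection variables $S_q$ (and induced election variables $lead_x$) of nodes $q$ that are neighbors of $p_i$ or of a member $x\in C_i$. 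The crucial point, which follows from Remark~\ref{rem:nodes_rank_clique} and the fact that selection flows only from higher rank to lower rank, is that \emph{every} node in category (b) that can toggle $Selected(p_i)$ or $Leader_x$ has rank strictly higher than $p_i$, hence priority level strictly less than $i$.

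For the base case $i=0$ the local leader is the globally highest-ranked node (the root $r$, since $dist_r=0$). Then $p_0$ is never selected, so $Selected(p_0)$ is permanently false and category (b) contains no higher-rank disturber; moreover any node $x$ selected by $p_0$ has $p_0$ as its highest-rank candidate leader, so $Leader_x=\id_{p_0}$ once $x\in S_{p_0}$. The only remaining external input is the $N$-values, which by Lemma~\ref{lem:N-action_steps} change at most once per node and therefore only before $C_0$ is first made correct. Hence $C_0$, once correct, never becomes incorrect again, matching the claimed bound of $0$.

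For the inductive step I assume the statement for every priority level $0,\dots,i-1$ and bound the destabilizations of $C_i$. Each time $C_i$ passes from correct to incorrect, I would charge the transition to the node whose action triggered it; by the dependency analysis this is always a node of higher rank (a disturber in category (a) is absorbed by the one-shot $N$-action phase and can be set aside). The target is to show that across the whole computation the $i$ nodes of rank higher than $p_i$ jointly trigger at most $i$ such transitions, i.e. at most one net destabilization each. The mechanism I would use is that a higher-rank node $q$ can disturb $C_i$ only by changing $S_q$ (through $C1$-action) so as to claim $p_i$ itself or some member $x\in C_i$, or by losing/gaining local-leader status. Invoking the inductive hypothesis on the clique containing $q$ (which has priority level $<i$) together with the one-shot stabilization of the $N$-values, the value of $Clique\_temp()$ at $q$ — and hence whether $q$ claims $p_i$ or a member of $C_i$ — ultimately depends only on the constant inputs ($Neig_q$, distances, identifiers) and on the decisions of still-higher-ranked nodes, which are processed earlier in the induction. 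Processing the higher-rank nodes in rank order, each one settles its influence on $C_i$ once and for all, contributing at most one destabilization.

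The main obstacle is precisely this last counting argument: ruling out a higher-rank node that repeatedly destabilizes $C_i$. A naive accounting that charges destabilizations of $C_i$ to \emph{reconstructions} of higher-priority cliques would sum to $\sum_{j<i} j=\binom{i}{2}$, which is too large; obtaining the sharp bound $i$ requires the per-node monotonicity statement, namely that once the strictly-higher-rank cliques have been fixed (in rank order), each higher-rank node's claim on $p_i$ or on a member of $C_i$ flips at most once in the direction that harms $C_i$. Establishing this monotonicity cleanly from the inductive hypothesis — and verifying it is not disturbed by oscillations among lower-priority cliques, which by the dependency analysis cannot feed back into $C_i$'s correctness — is the delicate part of the proof; the remainder is a routine case analysis over the four claims of Definition~\ref{def:correct_clique}.
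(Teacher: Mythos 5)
Your proposal follows the same route as the paper's proof: both reduce the problem to the observation that a correct $C_i$ is inert (no action is enabled inside it) and can therefore cease to satisfy Definition~\ref{def:correct_clique} only through actions of nodes of rank strictly higher than $p_i$ --- you obtain this from Lemma~\ref{lem:no_enable_action}, Remark~\ref{rem:nodes_rank_clique} and the one-shot nature of $N$-action (Lemma~\ref{lem:N-action_steps}); the paper obtains it from Remark~\ref{rem:total_order_clique} and Lemma~\ref{lem:construct_prio}. The difference lies in how the counting is finished, and this is where your proposal has a genuine gap: you reduce the bound to a per-node monotonicity claim --- once the strictly-higher-rank cliques have been settled in rank order, each higher-rank node's claim on $p_i$ or on a member of $C_i$ flips at most once in the direction that harms $C_i$ --- and then you explicitly leave that claim unproven (``establishing this monotonicity cleanly \dots is the delicate part of the proof''). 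Since every destabilization of $C_i$ must be charged to exactly such a flip, the bound of $i$ stands or falls with this claim; a proof that stops short of it is not a proof of the statement.

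That said, you should be aware that the paper does not close this gap either. Its proof invokes the total order on cliques and Lemma~\ref{lem:construct_prio}, and then simply declares a ``worst case scheduling'' --- the higher-rank cliques are finalized from lowest to highest rank, with $C_i$ rebuilt after each one --- and reads off the bound $i$. It never argues why an adversarial daemon cannot do worse; in particular, a higher-rank clique of priority level $j<i$ may itself be destabilized and rebuilt up to $j$ times (this is what the lemma permits), and nothing in the paper rules out that each such rebuild destabilizes $C_i$ again, which is precisely the $\sum_{j<i} j = \binom{i}{2}$ accounting you point out. So your proposal is essentially the paper's argument with its weakest step honestly identified rather than asserted; to complete either version one would need the per-disturber ``at most once while $C_i$ is correct'' property (or a potential-function argument replacing it), which neither you nor the paper supplies.
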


\begin{proof}
According to Remark~\ref{rem:total_order_clique}, in any clique partition the rank of the cliques define a total order. Moreover, according to Lemma~\ref{lem:construct_prio} the construction of any clique $C_j$ cannot prevent the construction of another clique $C_i$ if $rank(C_i) \prec rank(C_j)$. Thus, the construction of the clique $C_i$ of priority level $i$ can be prevented by at most $i$ cliques. However, as long as these $i$ cliques of rank higher than $C_i$ do not satisfy Definition~\ref{def:correct_clique} the construction of $C_i$ can be affected. Consider the following worst case scheduling. The cliques $C_j, 0 \leq j \leq i,$ are constructed following their rank from the lowest to the highest rank, and before the construction of a new clique $C_j, j<i,$ the construction of $C_i$ is performed again in order to satisfy Definition~\ref{def:correct_clique}. Thus, the construction of each clique $C_j, j<i,$ involves that $C_i$ does not satisfy Definition~\ref{def:correct_clique} and this situation happens at most $i$ times.
\end{proof}

According to the formal description of Algorithm $\mathcal{SS-CMCP}$, the construction of a correct clique $C_i$ is performed by executing $C1$-action and $C3$-action or $C2$-action at a node $p \in C_i$.

\begin{corollary}
\label{cor:other_actions_steps}
Let $C_i$ be a correct clique (Definition~\ref{def:correct_clique}) of priority level $i, 0 \leq i \leq n-1,$ which belongs to a legitimate configuration $\gamma \in \mathcal{B}$. In an execution, a node $p \in C_i$ can execute $C1$-action, $C2$-action and $C3$-action at most $i$ times.
\end{corollary}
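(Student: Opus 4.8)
The plan is to combine the preceding (unlabelled) lemma, which bounds at $i$ the number of times a correct clique $C_i$ of priority level $i$ may fail to satisfy Definition~\ref{def:correct_clique}, with the observation stated just before the corollary that the construction of a correct clique is performed precisely through executions of $C1$-, $C2$- and $C3$-action at the nodes of $C_i$. The key point I would establish is a correspondence between the at most $i$ ``failure episodes'' of $C_i$ and the executions of each of these three actions at a fixed node $p \in C_i$: each failure episode forces any such action to be executed at most once per node, so that the number of executions of each of $C1$-, $C2$- and $C3$-action at $p$ is bounded by $i$.

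First I would recall that, by Lemma~\ref{lem:no_enable_action}, whenever $C_i$ is correct no action is enabled at any node of $C_i$; hence every execution of $C1$-, $C2$- or $C3$-action at a node $p \in C_i$ must occur while $C_i$ does not satisfy Definition~\ref{def:correct_clique}, i.e., inside one of the at most $i$ failure episodes granted by the preceding lemma. Next I would fix a single failure episode and argue that, within it, each node executes each action at most once. For the local leader $p_i$, I would invoke Lemma~\ref{lem:C1-action} to conclude that once $p_i$ executes $C1$-action it stays disabled (since $p_i$ remains in $\overline{S}$ throughout the reconstruction), and Lemma~\ref{lem:C3-action} for $C3$-action, noting that $C2$-action is never enabled at $p_i$ because $p_i \in \overline{S}$ forces $\neg Selected(p_i)$. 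For a selected node $q$, I would use Lemma~\ref{lem:C2-action} to conclude that once $q$ executes $C2$-action it stays disabled as long as $q$ remains selected, which holds during the repair of $C_i$.

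Putting these together, each of the at most $i$ failure episodes accounts for at most one execution of each of $C1$-, $C2$- and $C3$-action at a given node $p \in C_i$, which yields the claimed bound of $i$ executions per action. The main obstacle I anticipate is making the notion of ``failure episode'' precise and justifying that the node stays on the appropriate side of $\overline{S}$ (local leader versus selected) throughout a single reconstruction, so that the stability Lemmas~\ref{lem:C1-action}--\ref{lem:C3-action} genuinely forbid a second execution of the same action within that episode; this is what rules out the possibility that $C_i$ is silently rebuilt several times against a single disruption from a higher-rank clique and thereby inflates the count beyond $i$.
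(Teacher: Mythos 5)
Your proposal is correct and takes essentially the paper's intended route: the paper states this corollary without an explicit proof, as an immediate consequence of the preceding (unlabelled) lemma bounding at $i$ the number of times $C_i$ may fail Definition~\ref{def:correct_clique}, combined with the remark that a correct clique is (re)constructed precisely through $C1$-, $C2$- and $C3$-actions at its nodes. Your additional fleshing-out---using Lemma~\ref{lem:no_enable_action} to confine all executions to failure episodes and Lemmas~\ref{lem:C1-action}, \ref{lem:C2-action} and~\ref{lem:C3-action} for the once-per-episode bound---is exactly the argument the paper leaves implicit.
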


\begin{lemma}
\label{lem:config_legitime_step}
From any configuration in which for each node $p \in V$ the input $dist_p$ is correct, at most $O(n^2)$ steps are needed by Algorithm $\mathcal{SS-CMCP}$ to reach a configuration satisfying Definition~\ref{def:config_legitime}, with $n$ the number of nodes in $G$.
\end{lemma}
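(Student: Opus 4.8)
The plan is to bound the total number of protocol-action executions over the whole convergence, because under the (unfair) distributed daemon every step forces at least one chosen processor to execute a genuine protocol action; hence the number of steps is at most this total count. I would partition the count by action type: first the bookkeeping $N$-action, then the three clique-construction actions $C1$, $C2$, $C3$. Fairness plays no role in this argument, which is exactly why the same bound will survive under the unfair daemon.

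First I would dispatch $N$-action with Lemma~\ref{lem:N-action_steps}: each node executes it at most once, contributing at most $n$ executions, i.e.\ $O(n)$ steps attributable to $N$-action.

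Next I would count $C1$, $C2$ and $C3$. I would fix the legitimate configuration $\gamma^\star \in \mathcal{B}$ eventually reached (its existence is guaranteed by the earlier convergence results). By Remark~\ref{rem:elect_clique} every node $p$ lies in exactly one clique of $\gamma^\star$, and since $\gamma^\star$ satisfies Definition~\ref{def:config_legitime} that clique is correct with a well-defined priority level $i_p \le n-1$ (the priority level counts the strictly higher-ranked nodes, of which there are at most $n-1$). Corollary~\ref{cor:other_actions_steps} then bounds by $i_p$ the number of times $p$ executes each of $C1$, $C2$, $C3$ along the entire arbitrary-to-legitimate trajectory. Summing over the $n$ nodes and the three actions yields at most $3\sum_{p \in V} i_p \le 3n(n-1)$ such executions. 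Combining with the $N$-action count gives at most $n + 3n(n-1) = O(n^2)$ action executions, hence $O(n^2)$ steps to reach a configuration satisfying Definition~\ref{def:config_legitime}.

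The main obstacle is the reduction from steps to per-node action counts rather than any calculation: I must justify that every step contributes at least one genuine protocol-action execution (a disabling action is free and does not by itself constitute a step), and that Corollary~\ref{cor:other_actions_steps}, although phrased in terms of the clique a node occupies in $\gamma^\star$, legitimately accounts for all of that node's $C1$/$C2$/$C3$ moves throughout the execution. Once this is granted, the crude per-node bound $i_p \le n-1$ suffices to make the sum $O(n^2)$, so no finer argument exploiting the distinctness of the cliques' priority levels is needed.
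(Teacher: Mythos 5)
Your proof is correct and takes essentially the same route as the paper: dispatch $N$-action with Lemma~\ref{lem:N-action_steps}, bound the $C1$/$C2$/$C3$ executions through the priority-level bound of Corollary~\ref{cor:other_actions_steps}, and sum to obtain $O(n^2)$. The only differences are bookkeeping, and in your favor: you sum per node ($3\sum_{p\in V} i_p \le 3n(n-1)$) and make explicit the reduction from steps to action executions, whereas the paper sums $\sum_{i=0}^{n-1} i$ over priority levels and leaves that reduction implicit.
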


\begin{proof}
First of all, according to Lemma~\ref{lem:N-action_steps} in an execution of Algorithm $\mathcal{SS-CMCP}$ $N$-action is executed at most $n$ times. Moreover, according to Corollary~\ref{cor:other_actions_steps} in an execution of Algorithm $\mathcal{SS-CMCP}$ a node $p \in V$ of priority level $i, 0 \leq i \leq n-1,$ can execute $C1$-action, $C2$-action and $C3$-action at most $i$ times. Moreover, a clique partition contains at most $n$ cliques. So, by summing up we have that in an execution of Algorithm $\mathcal{SS-CMCP}$ starting from any configuration in which the input $dist_p$ is correct for each node $p \in V$ $C1$-action, $C2$-action and $C3$-action are executed at most $\sum_{i=0}^{n-1}i=\frac{n(n+1)}{2}$ times.

Therefore, from any configuration in which the input $dist_p$ is correct for each node $p \in V$ Algorithm $\mathcal{SS-CMCP}$ executes at most $n+\frac{n(n+1)}{2}<O(n^2)$ steps to reach a configuration satisfying Definition~\ref{def:config_legitime}.
\end{proof}

Finally, from any configuration we can establish the following corollary according to Lemma~\ref{lem:config_legitime_step}.

\begin{corollary}
\label{cor:config_legitime_step_avec_BFS}
From any configuration, at most $O(ST_{\mathcal{BFS}} \times n^2)$ steps are needed by Algorithms $\mathcal{SS-CMCP}$ and $\mathcal{A_{BFS}}$ executed following a fair composition to reach a configuration satisfying Definition~\ref{def:config_legitime}, with $n$ the number of nodes in $G$ and $ST_{\mathcal{BFS}}$ the step complexity of self-stabilizing algorithm $\mathcal{A_{BFS}}$ constructing a BFS tree.
\end{corollary}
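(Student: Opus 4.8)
The plan is to obtain this corollary by combining the $O(n^2)$ step bound for $\mathcal{SS-CMCP}$ under correct distances (Lemma~\ref{lem:config_legitime_step}) with the step complexity $ST_{\mathcal{BFS}}$ of the underlying BFS layer, exploiting the one-way dependency inherent to the fair composition. First I would observe that $\mathcal{A_{BFS}}$ reads none of the variables maintained by $\mathcal{SS-CMCP}$: it only writes the inputs $dist_p$ consumed by the clique-partition layer. Hence, in the fair composition, $\mathcal{A_{BFS}}$ behaves exactly as in isolation and reaches a configuration in which every $dist_p$ is correct and permanently fixed after at most $ST_{\mathcal{BFS}}$ of its own steps. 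From that point on, Lemma~\ref{lem:config_legitime_step} applies verbatim and guarantees that $\mathcal{SS-CMCP}$ reaches a configuration satisfying Definition~\ref{def:config_legitime} within an additional $O(n^2)$ steps.

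The remaining difficulty -- and the step I expect to be the main obstacle -- is bounding the number of $\mathcal{SS-CMCP}$ actions executed \emph{before} $\mathcal{A_{BFS}}$ has stabilised, i.e.\ while the distance inputs may still be incorrect and changing. To handle this I would partition the composed execution into maximal segments during which all values $dist_p$ are constant. Since each $dist_p$ can only be altered by a step of $\mathcal{A_{BFS}}$, and $\mathcal{A_{BFS}}$ performs at most $ST_{\mathcal{BFS}}$ steps in total, the execution splits into at most $ST_{\mathcal{BFS}}+1$ such segments. The key observation is that the counting behind Lemma~\ref{lem:config_legitime_step} (namely Corollary~\ref{cor:other_actions_steps} and the priority-level lemma preceding it) never uses the \emph{correctness} of the distances: it relies only on the fact that the pairs $(dist_p, \id_p)$ induce a total order on the nodes (Remark~\ref{rem:total_order_clique}) together with the non-interference property of Lemma~\ref{lem:construct_prio}. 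Consequently, within any single segment the distances are frozen and the very same argument shows that at most $\sum_{i=0}^{n-1} i = O(n^2)$ executions of $C1$-, $C2$- and $C3$-action can occur, regardless of the (possibly corrupted) $\mathcal{SS-CMCP}$ state inherited from the previous segment and regardless of whether the frozen distances happen to be the correct ones.

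To finish, I would sum the per-segment contributions. Each node also needs to re-synchronise its variable $d_p$ with the frozen input $dist_p$ through at most one $N$-action at the start of a segment, contributing $O(n)$ steps per segment; these are dominated by the $O(n^2)$ clique-construction steps. Multiplying the per-segment bound $O(n^2)$ by the at most $ST_{\mathcal{BFS}}+1$ segments, and adding the $ST_{\mathcal{BFS}}$ steps of $\mathcal{A_{BFS}}$ itself (which are dominated), yields a total of $O(ST_{\mathcal{BFS}} \times n^2)$ steps before the composition reaches a configuration satisfying Definition~\ref{def:config_legitime}. The crux is thus the segmentation argument: recognising that a change of the BFS output can invalidate the clique-partition progress, but that each of the at most $ST_{\mathcal{BFS}}$ such invalidations costs only one fresh round of the $O(n^2)$ stabilisation established in Lemma~\ref{lem:config_legitime_step}.
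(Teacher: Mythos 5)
Your proposal is correct and follows exactly the reasoning the paper leaves implicit: the paper states this corollary without proof, merely citing Lemma~\ref{lem:config_legitime_step}, and the product form $O(ST_{\mathcal{BFS}} \times n^2)$ is precisely accounted for by your segmentation argument, in which each of the at most $ST_{\mathcal{BFS}}$ changes of the distance inputs can trigger a fresh $O(n^2)$ restabilization, the key observation being that the step-counting of Lemma~\ref{lem:config_legitime_step} uses only that the pairs $(dist_p, \id_p)$ induce a fixed total order, not that the distances are correct. Nothing further is needed.
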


\section{Self-stabilizing Connected Vertex Cover}
\label{sec:cvc}

We define below an extension of the classical Vertex Cover problem, called Connected Vertex Cover problem.

\begin{definition}[2-approximation Connected Vertex Cover]
\label{def:cvc}
Let $G=(V,E)$ be any undirected graph. A vertex cover $S$ of the graph $G$ is \emph{connected} iff for any pair of node $u,v \in S$ there is a path between $u$ and $v$ in the graph induced by $S$. Moreover, $S$ is a 2-approximation Connected Vertex Cover, i.e., we have $|S| \leq 2|CVC^*|$ with $CVC^*$ an optimal solution for the Connected Vertex Cover.
\end{definition}

In~\cite{DelbotLP13}, Delbot \emph{et al.} presented a centralized optimization algorithm to solve the Connected Vertex Cover problem which uses a solution obtained for the Connected Minimal Clique Partition problem (see Definition~\ref{def:cmcp}). Given a solution $S$ for the Connected Minimal Clique Partition, the authors have shown in~\cite{DelbotLP13} that we can construct a solution $S'$ for the Connected Vertex Cover with an approximation ratio of 2 by selecting in $S'$ all the cliques in $S$ which are not \emph{trivial}, i.e., by selecting all the cliques composed of at least two nodes.  

In the following, we define in Specification~\ref{spec:cvc} the Self-stabilizing Connected Vertex Cover problem.

\begin{spec}[Self-stabilizing Connected Vertex Cover]
\label{spec:cvc}
Let $\mathcal{C}$ the set of all possible configurations of the system. An algorithm $\mathcal{A_{CVC}}$ solving the problem of constructing a stabilizing connected vertex cover satisfies the following conditions:
\begin{enumerate}
\item Algorithm $\mathcal{A}$ reaches a set of terminal configurations $\mathcal{T} \subseteq \mathcal{C}$ in finite time, and
\item Every configuration $\gamma \in \mathcal{T}$ satisfies Definition~\ref{def:cvc}.
\end{enumerate}
\end{spec}

\subsection{Related works}

The Vertex Cover problem is a classical optimization problem and many works have been devoted to this problem or to its variations. This problem is known to be APX-complete \cite{PapadimitriouY88} and not approximable within a factor of $10\sqrt{5}-21\approx 1.36067$ \cite{DinurS05}. Some very simple approximation algorithms gives a tight approximation ratio of $2$ \cite{GJ79,Vaz01,Savage82}. Despite a lot of works, no algorithm whose approximation ratio is bounded by a constant less than $2$ has been found and it is conjectured that there is no smaller \emph{constant} ratio unless $P=NP$ \cite{KhotR08}. Monien and Speckenmeyer \cite{Monien1985} and Bar-Yehuda and Even \cite{Bar-YehudaE85} proposed algorithms with an approximation ratio of $2-\frac{\ln \ln n}{\ln n}$, with $n$ the number of vertices of the graph and Karakostas \cite{Karakostas05} reduced this ratio to $2-\Theta (\frac{1}{\sqrt{\log{n}}})$.

From a self-stabilizing point of view, Kiniwa~\cite{Kiniwa05} proposed the first self-stabilizing algorithm for this problem which constructs a 2-approximate vertex cover in general networks with unique nodes identifier and under a fair distributed daemon. This algorithm is based on the construction of a maximal matching which allows to obtain a 2-approximation vertex cover by selecting the extremities of the matching edges. Turau \emph{et al.}~\cite{TurauH11} considered the same problem in anonymous networks and gave an 3-approximation algorithm under a distributed daemon. Since it is impossible to construct a maximal matching in an anonymous network, this algorithm establishes first a bicolored graph of the network allowing then to construct a maximal matching to obtain a vertex cover. Turau~\cite{Turau10} designed a self-stabilizing vertex cover algorithm with approximation ratio of 2 in anonymous networks under an unfair distributed daemon. This algorithm uses the algorithm in~\cite{TurauH11} executed several times on parts of the graph to improve the quality of the constructed solution.

For the Connected Vertex Cover problem, Savage in~\cite{Savage82} proposed a 2-approximation algorithm in general graphs based on the construction of a Depth First Search tree $T$ and selecting in the solution the nodes with at least a child in $T$. In 2010 Escoffier \emph{et al.}~\cite{EscoffierGouvesMonnot10} proved that the problem is NP-complete, even in bipartite graphs (whereas it is polynomial to construct a vertex cover in bipartite graphs), is polynomial in chordal graphs and can be approximated with better ratio than $2$ in several restricted classes of graphs.

To our knowledge, there exists no self-stabilizing algorithm for the Connected vertex cover problem. However, the approach proposed by Savage~\cite{Savage82} can be used to design a self-stabilizing algorithm. Indeed, any self-stabilizing algorithm performing a depth first search traversal of the graph (e.g., see~\cite{CollinD94,CournierDPV06,PetitV07}) executed in parallel with the algorithm described later in this section can be used to select the appropriate set of nodes in the solution. However, this does not enable to obtain the best complexity in terms of time. Although a low memory complexity of $O(\log(\Delta))$ bits per node is reached, this approach has a time complexity of $\Theta(n)$ rounds. Indeed, a low level of parallelism is reached because of the DFS traversal. In contrast, the self-stabilizing algorithm that we propose in this section is based on the algorithm presented in the previous section. Our solution has a better time complexity of $O(\min(n_c \times Diam,n))$ rounds because of the parallel construction of cliques. However, the memory complexity is $O(\Delta \log(n))$ bits per node.

\subsection{Self-stabilizing construction}

In this subsection, we present our self-stabilizing Connected Vertex Cover algorithm called $\mathcal{SS-CVC}$ which follows the approach given in~\cite{DelbotLP13}. A solution to the Connected Vertex Cover problem contains all the non trivial cliques of a Connected Minimal Clique Partition. We give in this section a self-stabilizing algorithm allowing to select the nodes of non trivial cliques, a formal description is given in Algorithm~\ref{algo2}. So, Algorithm $\mathcal{SS-CVC}$ is defined as a fair composition~\cite{Dolev2000} of Algorithms~\ref{algo} and~\ref{algo2} which are executed at each node $p \in V$.\\
Algorithm~\ref{algo2} takes in input at each node $p$ the local leader of $p$ and the set of nodes belonging to the maximal clique of $p$ given by Algorithm~\ref{algo} (i.e., variables $lead_p$ and $C_p$ of Algorithm~\ref{algo}) in case $p$ is a local leader. Moreover, in Algorithm~\ref{algo2} each node maintains a single boolean variable $In_p$. Any node $p$ belongs to the Connected Vertex Cover if and only if (1) either it is a local leader and its maximal clique is not trivial (i.e., $lead_p=\id_p$ and $|C_p|>1$), or (2) it is contained in a maximal clique constructed by a neighbor which is the local leader of $p$ (i.e., $lead_p \neq \id_p$). Predicate $InVC(p)$ is satisfied at each node $p$ if $p$ is part of the Connected Vertex Cover. Therefore, Algorithm~\ref{algo2} is composed of a single rule executed by each node $p \in V$ to correct the value of variable $In_p$ in order to be equal to the value of Predicate $InVC(p)$. So, a solution to the Connected Vertex Cover problem contains every node $p$ such that $In_p=true$.

\begin{algorithm}
\caption{\quad Self-Stabilizing Connected Vertex Cover algorithm for any $p \in V$\label{algo2}}
\smallskip
\begin{scriptsize}
{\bf Inputs:}\\
\hspace*{1cm}$\id_p$: unique identifier of $p$;\\
\hspace*{1cm}$lead_p$: leader of $p$ computed by Algorithm~\ref{algo};\\
\hspace*{1cm}$C_p$: maximal clique of $p$ computed by Algorithm~\ref{algo};\\
{\bf Variable:}\\
\hspace*{1cm}$In_p \in \{true, false\}$;

.\dotfill\ 

{\bf Predicate:}\\
\begin{tabular}{lll}
$InVC(p)$ & $\equiv$ & $(lead_p \neq \id_p \vee |C_p|>1)$\\
\end{tabular}\\
.\dotfill\

{\bf Action:}\\
\begin{tabular}{lllllll}
&& $VC$-$action$  & $::$ & $In_p \neq InVC(p)$ & $\to$ & $In_p:=InVC(p)$;\\
\end{tabular}
\end{scriptsize}
\end{algorithm}

\subsection{Correctness proof}

\begin{definition}[Legitimate configuration]
\label{def:config_legitime_cvc}
A configuration $\gamma \in \mathcal{C}$ is legitimate for Algorithm~\ref{algo2} iff for every node $p \in V$ we have $In_p=InVC(p)$.
\end{definition}

In the following we consider that Algorithm $\mathcal{SS-CMCP}$ is stabilized and we have correct inputs for $lead_p$ and $C_p$ at every node $p \in V$.

\begin{theorem}
\label{thm:cvc_enable_action}
Let the set of configurations $\mathcal{B} \subseteq \mathcal{C}$ such that every configuration $\gamma \in \mathcal{B}$ satisfies Definition~\ref{def:config_legitime_cvc}. $\forall \gamma \in (\mathcal{C}-\mathcal{B}), \exists p \in V$ such that $p$ is enabled in $\gamma$.
\end{theorem}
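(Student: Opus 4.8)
The plan is to prove the contrapositive formulation directly: I want to show that any configuration failing to satisfy Definition~\ref{def:config_legitime_cvc} admits at least one enabled node. Since the only action in Algorithm~\ref{algo2} is $VC$-$action$, whose guard is simply $In_p \neq InVC(p)$, the argument should be almost immediate. First I would take an arbitrary $\gamma \in (\mathcal{C} - \mathcal{B})$. By Definition~\ref{def:config_legitime_cvc}, a configuration lies in $\mathcal{B}$ precisely when $In_p = InVC(p)$ holds at \emph{every} node $p \in V$. Hence $\gamma \notin \mathcal{B}$ means this equality fails somewhere, i.e.\ there exists a node $p \in V$ with $In_p \neq InVC(p)$.

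The key observation is that $In_p \neq InVC(p)$ is exactly the guard of $VC$-$action$. Therefore this witness node $p$ has $VC$-$action$ enabled in $\gamma$, which by definition makes $p$ an enabled processor in $\gamma$. This establishes $\exists p \in V$ enabled in $\gamma$, as required. I would phrase the whole thing as a proof by contradiction to match the style of Theorem~\ref{thm:cmcp_enable_action}: assume for contradiction that $\gamma \in (\mathcal{C} - \mathcal{B})$ yet no node is enabled; then no node has $VC$-$action$ enabled, so $In_p = InVC(p)$ for all $p \in V$, which means $\gamma$ satisfies Definition~\ref{def:config_legitime_cvc} and thus $\gamma \in \mathcal{B}$, contradicting $\gamma \in (\mathcal{C} - \mathcal{B})$.

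I do not anticipate any genuine obstacle here, since this statement is structurally trivial: the single-rule algorithm makes the guard of $VC$-$action$ a literal negation of the legitimacy predicate, so enabledness and illegitimacy coincide at the node level. The only subtlety worth stating explicitly is the correct-input assumption declared just before the theorem, namely that Algorithm $\mathcal{SS-CMCP}$ is stabilized so that $lead_p$ and $C_p$ are correct at every node; this guarantees that $InVC(p)$ is well-defined and stable, so that the equality $In_p = InVC(p)$ is the genuine legitimacy criterion rather than a transient artifact of still-changing inputs from Algorithm~\ref{algo}. With that assumption in place, the proof reduces to unfolding the definition of the guard and of legitimacy, and no calculation is needed.
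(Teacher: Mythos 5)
Your proposal is correct and matches the paper's own proof: both unfold Definition~\ref{def:config_legitime_cvc} to exhibit a node $p$ with $In_p \neq InVC(p)$, observe that this is precisely the guard of $VC$-$action$, and conclude that $p$ is enabled (the paper phrases it as the same contradiction argument you sketch). Your extra remark about the correct-input assumption for $lead_p$ and $C_p$ is consistent with the standing hypothesis the paper states just before the theorem.
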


\begin{proof}
Assume, by the contradiction, that $\exists \gamma \in (\mathcal{C}-\mathcal{B})$ such that $\forall p \in V$ no action of Algorithm~\ref{algo2} is enabled at $p$ in $\gamma$. According to Definition~\ref{def:config_legitime_cvc}, this implies that there exists a node $p \in V$ such that $In_p \neq InVC(p)$. So, $VC$-action is enabled at $p$, a contradiction.
\end{proof}

\begin{lemma}
\label{lem:VC-action}
When $VC$-action is enabled at any $p \in V$, it remains enabled until $p$ executes it.
\end{lemma}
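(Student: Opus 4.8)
The plan is to follow exactly the same contradiction template used for Lemmas~\ref{lem:C1-action},~\ref{lem:C2-action} and~\ref{lem:C3-action}, but the argument will be simpler here because Algorithm~\ref{algo2} has a single variable and all the quantities appearing in its guard other than $In_p$ are stable inputs. First I would fix an arbitrary step $\gamma \mapsto \gamma'$ and assume, for the contradiction, that $VC$-action is enabled at $p$ in $\gamma$ (i.e.\ $In_p \neq InVC(p)$ in $\gamma$), that it is \emph{not} enabled in $\gamma'$ (i.e.\ $In_p = InVC(p)$ in $\gamma'$), yet $p$ did not execute $VC$-action during $\gamma \mapsto \gamma'$.

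The heart of the argument is to observe that neither side of the guard predicate can change value across this step. The variable $In_p$ is modified only by $p$ itself, and only through $VC$-action (the unique action of Algorithm~\ref{algo2}); since $p$ did not move in $\gamma \mapsto \gamma'$, the value of $In_p$ is unchanged. For the right-hand side, recall that $InVC(p) \equiv (lead_p \neq \id_p \vee |C_p|>1)$ depends solely on the constant $\id_p$ and on the inputs $lead_p$ and $C_p$. Under the standing assumption that Algorithm $\mathcal{SS-CMCP}$ is stabilized, these inputs are correct and stable at every node, so the truth value of $InVC(p)$ is also unchanged in $\gamma \mapsto \gamma'$. Consequently $In_p \neq InVC(p)$ still holds at $p$ in $\gamma'$, so $VC$-action is enabled at $p$ in $\gamma'$, contradicting the assumption.

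There is essentially no hard obstacle here: the only point requiring care is to make explicit why $InVC(p)$ is invariant across the step, which is precisely where the hypothesis that Algorithm~\ref{algo} has already stabilized (providing fixed inputs $lead_p$ and $C_p$) is used. This is also why, unlike Lemmas~\ref{lem:C1-action}--\ref{lem:C3-action}, no side condition of the form ``unless $p \not\in \overline{S}$'' or ``unless another action is enabled'' is needed in the statement: once the inputs are frozen, the guard of $VC$-action can be falsified only by $p$ executing the action itself.
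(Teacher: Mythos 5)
Your proof is correct and follows essentially the same single-step contradiction argument as the paper: $In_p$ can only change when $p$ executes $VC$-action, so the guard $In_p \neq InVC(p)$ persists. You are in fact slightly more explicit than the paper, which leaves implicit the point you spell out—that $InVC(p)$ cannot change across the step because $lead_p$ and $C_p$ are frozen once Algorithm $\mathcal{SS-CMCP}$ has stabilized—and making that dependence on the standing assumption visible is a harmless (indeed welcome) refinement, not a different approach.
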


\begin{proof}
Let $\gamma \mapsto \gamma'$ be a step. Assume, by the contradiction, that $VC$-action is enabled at $p$ in $\gamma$ and not in $\gamma'$ (i.e., $In_p=InVC(p)$ in $\gamma'$) but $p$ did not execute $VC$-action in $\gamma \mapsto \gamma'$. Since $p$ did not move in $\gamma \mapsto \gamma'$ and the variable $In_p$ can only be modified locally by $p$ by executing $VC$-action, we have $In_p \neq InVC(p)$ at $p$ in $\gamma'$, a contradiction.
\end{proof}

\begin{lemma}
\label{lem:config_legitime_cvc_round}
Starting from any configuration satisfying Definition~\ref{def:config_legitime}, Algorithm~\ref{algo2} reaches a configuration satisfying Definition~\ref{def:config_legitime_cvc} in at most $O(1)$ (asynchronous) rounds. Moreover, $O(1)$ bits of memory are necessary at each node.
\end{lemma}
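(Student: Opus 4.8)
The plan is to exploit the standing assumption that Algorithm $\mathcal{SS-CMCP}$ is stabilized, so that the inputs $lead_p$ and $C_p$ of Algorithm~\ref{algo2} are constant at every node $p \in V$ throughout the execution considered. Consequently the value of Predicate $InVC(p)$ is fixed, and the guard $In_p \neq InVC(p)$ of $VC$-action at $p$ depends only on $p$'s own variable $In_p$ together with these constant inputs. I would first dispatch the memory claim: since $In_p$ is the single variable of Algorithm~\ref{algo2} and it ranges over $\{true,false\}$, exactly $O(1)$ bits per node are required.

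For the round complexity, the key step is to argue that no node can undergo a \emph{disabling action}. A disabling action at $p$ would require some neighbor of $p$ to change the truth value of $p$'s guard; but that guard involves only $p$'s own variable $In_p$ and $p$'s constant inputs, none of which a neighbor can write. Hence, during the first round, every processor that is enabled in the initial configuration must actually \emph{execute} $VC$-action rather than being silently disabled, and by Lemma~\ref{lem:VC-action} such a processor stays enabled until it does so.

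Next I would observe that executing $VC$-action at $p$ sets $In_p := InVC(p)$, so $p$ immediately satisfies $In_p = InVC(p)$ and becomes disabled; because the inputs remain constant, $p$ can never be re-enabled afterwards. Combining these facts: at the end of the first round every initially enabled node has executed $VC$-action once and now satisfies $In_p = InVC(p)$, while every initially disabled node already satisfied this equality and still does. Therefore the configuration reached after one round satisfies Definition~\ref{def:config_legitime_cvc}, which gives the bound of $O(1)$ rounds (together with Theorem~\ref{thm:cvc_enable_action}, guaranteeing progress is made as long as the configuration is illegitimate).

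The main obstacle is the careful justification that no disabling action can occur at a node: this is precisely what forces each enabled processor to genuinely perform its action within a single round, rather than merely being rendered passive by neighbor moves, and it is what collapses the round complexity to a constant. Once this locality observation is established, the remainder follows directly from the definition of a round and from Lemma~\ref{lem:VC-action}.
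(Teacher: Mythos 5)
Your proposal is correct and follows essentially the same route as the paper: the paper's proof also rests on the constancy of the inputs $lead_p$ and $C_p$, uses Lemma~\ref{lem:VC-action} to guarantee that every node with $In_p \neq InVC(p)$ executes $VC$-action within the first round, and dispatches the memory claim by noting that $In_p$ is a single boolean. You merely make explicit what the paper leaves implicit, namely that the purely local guard rules out disabling actions, which is exactly the content of Lemma~\ref{lem:VC-action}.
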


\begin{proof}
In any configuration satisfying Definition~\ref{def:config_legitime}, if we have $In_p \neq InVC(p)$ at a node $p \in V$ then $VC$-action is enabled at $p$ in round 0. Therefore, according to Lemma~\ref{lem:VC-action} in the first configuration $\gamma$ of round 1 we have $In_p=InVC(p)$ at every node $p \in V$. Moreover, this implies that $\gamma$ satisfies Definition~\ref{def:config_legitime_cvc}.

We can observe that Algorithm~\ref{algo2} maintains a single boolean variable $In_p$ at each node $p \in V$. So, $O(1)$ bits of memory are necessary at each node $p \in V$.
\end{proof}

From Corollary~\ref{cor:config_legitime_round_avec_BFS} and Lemma~\ref{lem:config_legitime_cvc_round}, we can establish the round complexity given in the following corollary.

\begin{corollary}
Starting from any configuration, the fair composition of Algorithms $\mathcal{A_{BFS}}$ and $\mathcal{SS-CVC}$ reach a configuration satisfying Definition~\ref{def:config_legitime_cvc} in at most $O(T_{\mathcal{BFS}}+\min(n_c \times Diam, n)+1)$ (asynchronous) rounds, with $T_{\mathcal{BFS}}$ the round complexity of self-stabilizing algorithm $\mathcal{A_{BFS}}$ constructing a BFS tree, $n_c$ the maximum number of cliques at any distance from $r$ in $G$, $Diam$ the diameter of $G$, and $n$ the number of nodes in $G$. Moreover, $O(\Delta \log(n))$ bits of memory are necessary at each node, with $\Delta$ the maximum degree of a node.
\end{corollary}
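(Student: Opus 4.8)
The plan is to compose the two convergence results already established for the lower layers and to add the round counts in the order in which the layers stabilize under fair composition. The corollary concerns the fair composition of $\mathcal{A_{BFS}}$ with $\mathcal{SS-CVC}$, and since $\mathcal{SS-CVC}$ is itself the fair composition of Algorithm~\ref{algo} and Algorithm~\ref{algo2}, I would treat the system as three stacked layers and argue their stabilization sequentially.

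First I would invoke Corollary~\ref{cor:config_legitime_round_avec_BFS}, which guarantees that from an arbitrary configuration the fair composition of $\mathcal{A_{BFS}}$ and $\mathcal{SS-CMCP}$ reaches a configuration satisfying Definition~\ref{def:config_legitime} in at most $O(T_{\mathcal{BFS}}+\min(n_c \times Diam, n))$ rounds. I would then observe that such a configuration is terminal for $\mathcal{SS-CMCP}$ (by Corollary~\ref{cor:no_enable_action}), so the variables $lead_p$ and $C_p$ computed by Algorithm~\ref{algo} are correct and remain fixed at every node $p \in V$; these are exactly the inputs consumed by Algorithm~\ref{algo2}. With those inputs frozen, I would apply Lemma~\ref{lem:config_legitime_cvc_round}: starting from any configuration satisfying Definition~\ref{def:config_legitime}, Algorithm~\ref{algo2} reaches a configuration satisfying Definition~\ref{def:config_legitime_cvc} in at most $O(1)$ additional rounds, each node executing $VC$-action at most once to set $In_p=InVC(p)$. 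Summing the two phases yields the announced round complexity $O(T_{\mathcal{BFS}}+\min(n_c \times Diam, n)+1)$.

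For the memory bound I would simply add the per-node storage of the composed algorithms: $O(\Delta \log(n))$ bits for Algorithm~\ref{algo} (Lemma~\ref{lem:config_legitime_round}), $O(1)$ bits for Algorithm~\ref{algo2} (Lemma~\ref{lem:config_legitime_cvc_round}), and $O(\log(n))$ bits for a standard self-stabilizing BFS storing a parent and a distance; the maximum term $O(\Delta \log(n))$ dominates, giving the stated bound.

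The step I expect to require the most care is justifying that the round counts of the layers may be added rather than interleaved. Here I would rely on the fair composition of Dolev~\cite{Dolev2000}: the BFS and clique-partition layers never read $In_p$, so their stabilization is independent of Algorithm~\ref{algo2}, while $InVC(p)$ depends only on $lead_p$ and $C_p$. Hence once the lower layer is in a legitimate (and therefore terminal) configuration its outputs no longer change, and the constant-round convergence of the $VC$ layer may legitimately be counted afterwards rather than being perturbed by ongoing lower-layer activity. Once this separation of phases is made precise, the remaining arithmetic for both the round and the memory bounds is immediate.
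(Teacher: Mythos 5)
Your proposal is correct and follows exactly the route the paper takes: the paper derives this corollary directly from Corollary~\ref{cor:config_legitime_round_avec_BFS} (stabilization of the BFS and clique-partition layers) and Lemma~\ref{lem:config_legitime_cvc_round} (the extra $O(1)$ rounds and $O(1)$ bits for Algorithm~\ref{algo2}), summing the round counts and taking the dominant memory term. Your additional justification that the lower layers reach a terminal configuration whose outputs $lead_p$ and $C_p$ are then frozen is exactly the implicit content of the paper's fair-composition argument, just spelled out more carefully.
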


\begin{lemma}
\label{lem:config_legitime_cvc_step}
Starting from any configuration satisfying Definition~\ref{def:config_legitime}, at most $O(n)$ steps are needed by Algorithm~\ref{algo2} to reach a configuration satisfying Definition~\ref{def:config_legitime_cvc}, with $n$ the number of nodes in $G$.
\end{lemma}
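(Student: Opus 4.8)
The plan is to bound the number of executions of $VC$-action, which is the only action of Algorithm~\ref{algo2}. Since we start from a configuration satisfying Definition~\ref{def:config_legitime}, the inputs $lead_p$ and $C_p$ are correct and fixed at every node $p \in V$ (Algorithm~\ref{algo} has stabilized and, by Corollary~\ref{cor:no_enable_action}, none of its actions are enabled). Consequently, the value of Predicate $InVC(p)$ is a fixed boolean at each node throughout the remaining execution: it depends only on $\id_p$, $lead_p$ and $|C_p|$, none of which change. The whole argument reduces to counting how many times each node can toggle its variable $In_p$.

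The key step is to show that each node $p \in V$ executes $VC$-action at most once. First I would observe that $VC$-action is enabled at $p$ exactly when $In_p \neq InVC(p)$, and its statement sets $In_p := InVC(p)$. Because $InVC(p)$ is constant (the inputs are frozen), after $p$ executes $VC$-action once we have $In_p = InVC(p)$, so the guard $In_p \neq InVC(p)$ becomes false and remains false forever. Moreover, by Lemma~\ref{lem:VC-action}, once $VC$-action is enabled at $p$ it stays enabled until $p$ itself executes it, so no neighbor activity can disable it prematurely and there is no way for $p$ to be repeatedly re-enabled. Thus the number of executions of $VC$-action at any single node is bounded by $1$.

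Summing over all nodes then gives the result: the total number of $VC$-action executions in the whole execution is at most $n$, and since a step consists of at least one enabled processor executing its priority action, at most $n$ steps suffice to reach a configuration where $In_p = InVC(p)$ holds at every node, i.e.\ a configuration satisfying Definition~\ref{def:config_legitime_cvc}. Combined with Theorem~\ref{thm:cvc_enable_action} (which guarantees that as long as we are not in such a configuration some processor is enabled), this shows convergence in $O(n)$ steps.

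The main subtlety to be careful about is the claim that $InVC(p)$ is genuinely constant during the execution of Algorithm~\ref{algo2}; this relies on the composition assumption stated just before Definition~\ref{def:config_legitime_cvc}, namely that Algorithm $\mathcal{SS-CMCP}$ is already stabilized so that $lead_p$ and $C_p$ no longer change. I do not expect a genuine obstacle here, since the only dependency of $VC$-action on neighbors is indirect (through the already-fixed inputs); the argument is essentially a monotonicity/potential argument where the potential is simply the count of nodes with $In_p \neq InVC(p)$, which strictly decreases with each step and never increases.
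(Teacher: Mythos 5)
Your proposal is correct and follows essentially the same route as the paper's proof: both rely on the standing assumption that Algorithm $\mathcal{SS-CMCP}$ has stabilized (so $InVC(p)$ is frozen), invoke Lemma~\ref{lem:VC-action} to ensure each enabled node eventually executes $VC$-action, observe that each node executes it at most once, and sum over the $n$ nodes. Your write-up is merely more explicit than the paper's about the ``at most once per node'' count and the resulting potential argument, but the underlying ideas are identical.
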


\begin{proof}
In any configuration $\gamma$ satisfying Definition~\ref{def:config_legitime}, if we have $In_p \neq InVC(p)$ at a node $p \in V$ in $\gamma$ then $VC$-action is enabled at $p$ in $\gamma$. According to Lemma~\ref{lem:VC-action}, $VC$-action is enabled at $p$ until it is executed. $VC$-action can be enabled at every node $p \in V$ in $\gamma$. So, each node $p$ can execute $VC$-action because it is the action of highest priority at $p$ since Algorithm~\ref{algo2} is composed of a single action. Thus, after at most $O(n)$ steps Algorithm~\ref{algo2} has reached a configuration $\gamma'$ such that we have $In_p=InVC(p)$ at every node $p \in V$ in $\gamma'$. Moreover, this implies that Definition~\ref{def:config_legitime_cvc} is satisfied in $\gamma'$.
\end{proof}

From Corollary~\ref{cor:config_legitime_step_avec_BFS} and Lemma~\ref{lem:config_legitime_cvc_step}, we can establish the step complexity of Algorithm $\mathcal{SS-CVC}$ given in the following corollary.

\begin{corollary}
Starting from any configuration, in at most $O(ST_{\mathcal{BFS}} \times n^3)$ steps are needed by Algorithms $\mathcal{A_{BFS}}$ and $\mathcal{SS-CVC}$ executed following a fair composition to reach a configuration satisfying Definition~\ref{def:config_legitime_cvc}, with $ST_{\mathcal{BFS}}$ the step complexity of self-stabilizing algorithm $\mathcal{A_{BFS}}$ constructing a BFS tree and $n$ the number of nodes in $G$.
\end{corollary}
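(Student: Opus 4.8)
The plan is to combine the step bound already established for the lower-level layer (Corollary~\ref{cor:config_legitime_step_avec_BFS}) with the step bound for Algorithm~\ref{algo2} in isolation (Lemma~\ref{lem:config_legitime_cvc_step}), while carefully accounting for the fact that, in the fair composition, the inputs $lead_p$ and $C_p$ of Algorithm~\ref{algo2} keep changing until the layer consisting of $\mathcal{A_{BFS}}$ and $\mathcal{SS-CMCP}$ has stabilized. The idea is a multiplicative bound: (number of times the inputs can change) $\times$ (cost of re-stabilizing Algorithm~\ref{algo2} over one interval of constant inputs), plus the cost of the lower layer itself.

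First I would recall that, by Corollary~\ref{cor:config_legitime_step_avec_BFS}, the fair composition of $\mathcal{A_{BFS}}$ and $\mathcal{SS-CMCP}$ reaches a configuration satisfying Definition~\ref{def:config_legitime} in at most $X=O(ST_{\mathcal{BFS}} \times n^2)$ steps; call these the \emph{lower-level} steps. Since the inputs $lead_p$ and $C_p$ read by Algorithm~\ref{algo2} are outputs of $\mathcal{SS-CMCP}$, only lower-level steps can modify them. Hence, over the whole execution these inputs change at most $X$ times, which partitions the execution into at most $X+1$ maximal intervals on each of which the inputs (and therefore the boolean $InVC(p)$ at every node) are constant.

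Next I would bound the number of $VC$-actions executed inside one such interval. On an interval of constant inputs, $InVC(p)$ is a fixed boolean at each node $p$; once $p$ executes $VC$-action it sets $In_p:=InVC(p)$, which disables $VC$-action at $p$, and by Lemma~\ref{lem:VC-action} it remains disabled as long as the inputs, hence $InVC(p)$, do not change. Therefore each node executes $VC$-action at most once per interval, so at most $O(n)$ executions of $VC$-action occur per interval, exactly mirroring the counting argument in the proof of Lemma~\ref{lem:config_legitime_cvc_step}. Summing over the at most $X+1$ intervals yields at most $(X+1)\cdot O(n)=O(ST_{\mathcal{BFS}} \times n^3)$ executions of $VC$-action in total. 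Adding the $X=O(ST_{\mathcal{BFS}} \times n^2)$ lower-level steps, the global step count is dominated by $O(ST_{\mathcal{BFS}} \times n^3)$, and the final interval (after the lower layer has stabilized) terminates in a configuration satisfying Definition~\ref{def:config_legitime_cvc}.

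The main obstacle is the bookkeeping of the interleaving: one must argue that $VC$-actions performed \emph{while the lower layer is still unstable} are genuinely bounded by (number of input changes) $\times$ (per-interval cost), rather than being arbitrarily numerous. The observation that closes this gap is that $In_p$ is the only variable of Algorithm~\ref{algo2} and is written solely by $p$ through $VC$-action, so $VC$-action cannot be re-enabled at $p$ except through a change of $p$'s inputs $lead_p$ or $C_p$; this is precisely what lets me reuse the $O(n)$ per-interval bound of Lemma~\ref{lem:config_legitime_cvc_step} multiplicatively against the lower-level step count of Corollary~\ref{cor:config_legitime_step_avec_BFS}.
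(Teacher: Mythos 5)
Your proof is correct and takes essentially the same route the paper intends: the paper states this corollary without an explicit proof, merely invoking Corollary~\ref{cor:config_legitime_step_avec_BFS} and Lemma~\ref{lem:config_legitime_cvc_step}, and the multiplicative combination $O(ST_{\mathcal{BFS}} \times n^2) \times O(n)$ that you derive via the interval decomposition is exactly what yields the stated $O(ST_{\mathcal{BFS}} \times n^3)$ bound. Your bookkeeping of input-change intervals (each node executing $VC$-action at most once per interval of constant $lead_p$, $C_p$) supplies the detail the paper leaves implicit.
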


\begin{lemma}
\label{lem:no_enable_action_cvc}
In every configuration $\gamma \in \mathcal{B}$ satisfying Definition~\ref{def:config_legitime_cvc}, for every node $p \in V$ no action of Algorithm~\ref{algo2} is enabled in $\gamma$.
\end{lemma}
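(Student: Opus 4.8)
The plan is to argue directly, exploiting the fact that Algorithm~\ref{algo2} contains only a single action, namely $VC$-action, whose guard at any node $p$ is precisely $In_p \neq InVC(p)$. I would fix an arbitrary configuration $\gamma \in \mathcal{B}$ and aim to show that the guard of this unique action is falsified at every node of $V$, from which the absence of any enabled action follows immediately.

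Concretely, I would invoke Definition~\ref{def:config_legitime_cvc}, which characterizes every configuration of $\mathcal{B}$ as one in which $In_p = InVC(p)$ holds for all $p \in V$. Applying this to an arbitrary node $p$ gives $In_p = InVC(p)$ in $\gamma$, so the boolean expression $In_p \neq InVC(p)$ is false at $p$; hence $VC$-action is disabled at $p$. Since $p$ was arbitrary, $VC$-action is disabled at every node of $V$. As $VC$-action is the only action of Algorithm~\ref{algo2}, no action of the algorithm is enabled at any node in $\gamma$, which is exactly the claim. Equivalently, one may phrase this by contradiction: assuming some $p$ is enabled in $\gamma$ forces the guard $In_p \neq InVC(p)$ to hold, contradicting $In_p = InVC(p)$ supplied by membership of $\gamma$ in $\mathcal{B}$.

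There is essentially no technical obstacle here: the statement is an immediate consequence of the single-action structure of Algorithm~\ref{algo2} together with the defining property of the set $\mathcal{B}$. The only point deserving a moment of care is to confirm that $VC$-action is genuinely the sole action of the algorithm, so that falsifying its guard everywhere is enough to conclude that \emph{no} action whatsoever is enabled; once this is noted, the argument is a one-line unfolding of Definition~\ref{def:config_legitime_cvc}.
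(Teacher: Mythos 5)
Your proof is correct and matches the paper's own argument: both rest on the observation that $VC$-action is the sole action of Algorithm~\ref{algo2} and that Definition~\ref{def:config_legitime_cvc} forces $In_p = InVC(p)$ at every node, falsifying its guard everywhere. The paper merely phrases this as a contradiction, which you note is an equivalent presentation of the same one-step unfolding.
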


\begin{proof}
Assume, by the contradiction, that there exists a configuration $\gamma \in \mathcal{B}$ such that there exists a node $p \in V$ with an enabled action of Algorithm~\ref{algo2}. According to the formal description of Algorithm~\ref{algo2}, the algorithm is only composed of $VC$-action. This implies that we have $In_p \neq InVC(p)$ at $p$ in $\gamma$. However, we have $In_p=InVC(p)$ at every node $p \in V$ because $\gamma \in \mathcal{B}$, a contradiction.
\end{proof}

\begin{lemma}
\label{lem:sol_cvc_config_legitime}
Let the set of configurations $\mathcal{B} \subseteq \mathcal{C}$ such that every configuration $\gamma \in \mathcal{B}$ satisfies Definition~\ref{def:config_legitime_cvc}. $\forall \gamma \in \mathcal{B}$, a $2$-approximated Connected Vertex Cover (Definition~\ref{def:cvc}) is constructed in $\gamma$.
\end{lemma}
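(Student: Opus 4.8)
The plan is to show that the node set $S=\{p \in V : In_p=true\}$ selected by Algorithm~\ref{algo2} in any configuration $\gamma \in \mathcal{B}$ coincides exactly with the union of the \emph{non trivial} cliques of the underlying Connected Minimal Clique Partition, and then to check the three requirements of Definition~\ref{def:cvc} in turn: that $S$ covers every edge, that $G[S]$ is connected, and that $|S| \leq 2|CVC^*|$.

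First I would characterize $S$. Since $\gamma \in \mathcal{B}$, Definition~\ref{def:config_legitime_cvc} gives $In_p=InVC(p)$ at every node $p$, and since Algorithm $\mathcal{SS-CMCP}$ is assumed stabilized, the inputs $lead_p$ and $C_p$ are correct, so by Lemma~\ref{lem:sol_cmcp_config_legitime} the cliques form a connected minimal clique partition. From the predicate $InVC(p) \equiv (lead_p \neq \id_p \vee |C_p|>1)$, a node $p$ has $In_p=false$ exactly when $lead_p=\id_p$ and $|C_p| \leq 1$, i.e.\ when $p$ is the local leader of a trivial (singleton) clique. Hence $S$ is precisely the set of nodes belonging to a non trivial clique, which is exactly the set the centralized procedure of~\cite{DelbotLP13} selects.

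Next comes the vertex cover property, which I would establish by contradiction. Suppose some edge $(u,v) \in E$ has both endpoints outside $S$; then, by the characterization above, $u$ and $v$ are local leaders of the trivial cliques $\{u\}$ and $\{v\}$. But $(u,v) \in E$ makes the graph induced by $\{u\} \cup \{v\}$ a clique of size two, contradicting the minimality of the partition guaranteed by Lemma~\ref{lem:sol_cmcp_config_legitime} (Definition~\ref{def:cmcp}). Therefore every edge has at least one endpoint in $S$, so $S$ is a vertex cover. For connectivity I would appeal directly to the connectivity clause of Definition~\ref{def:cmcp}, already proved in Lemma~\ref{lem:sol_cmcp_config_legitime}: for any two nodes in the union of the non trivial cliques there is a path within the graph induced by that union. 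As $S$ is exactly this union, $G[S]$ is connected.

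The remaining and main point is the approximation ratio, where I would not reprove the bound but invoke the centralized analysis of Delbot \emph{et al.}~\cite{DelbotLP13}, which shows that taking all non trivial cliques of a connected minimal clique partition yields a connected vertex cover of size at most twice the optimum. Because the partition produced by Algorithm $\mathcal{SS-CMCP}$ is one of the partitions that the centralized procedure of~\cite{DelbotLP13} can compute, and $S$ is exactly the union of its non trivial cliques, the inequality $|S| \leq 2|CVC^*|$ transfers directly. I expect the delicate step to be not any inequality manipulation but the structural identification of the distributed solution $S$ with the centralized one; once the characterization of the first step is in place, the three properties of Definition~\ref{def:cvc} follow, the first two from the clique partition properties proved earlier and the third from~\cite{DelbotLP13}.
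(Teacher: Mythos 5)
Your proposal is correct and follows essentially the same route as the paper's own proof: decompose Definition~\ref{def:cvc} into the three properties (vertex cover, connectivity, ratio), prove the cover property by contradiction via the minimality of the clique partition (two adjacent trivial cliques could be merged), obtain connectivity directly from the connectivity of the non trivial cliques in the input partition, and delegate the factor-$2$ bound to Theorem~2 of~\cite{DelbotLP13}. Your explicit characterization of $S$ through the predicate $InVC(p)$ is slightly more detailed than the paper's wording, but it is the same argument.
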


\begin{proof}
According to Definition~\ref{def:cvc}, to prove the lemma we must show that the solution $S$ constructed in every configuration $\gamma \in \mathcal{B}$ is: (i) a vertex cover of $G$, (ii) connected, and (iii) a 2-approximation from an optimal solution.

According to Specification~\ref{spec:cmcp}, Algorithm~\ref{algo2} takes in input a Connected Minimal Clique partition. Consider the first property. In any configuration $\gamma \in \mathcal{B}$, according to Algorithm~\ref{algo2} only the nodes which belong to a non trivial clique are included in the constructed solution $S$. Assume, by the contradiction, that $S$ is not a vertex cover of $G$. This implies that there exists an edge between two trivial cliques $C_i$ and $C_j$ of the clique partition given in input. So, the clique partition given in input is not minimal since we can construct the maximal clique $C_i \cup C_j$, a contradiction with Specification~\ref{spec:cmcp}. So, the set of trivial cliques forms an independent set and all the edges of the graph are covered by the nodes in $S$. Consider the second property. According to Specification~\ref{spec:cmcp}, the graph induced by the non trivial cliques given in input is connected. This implies that the solution $S$ constructed in $\gamma$ is also connected. Consider the last property. We follow the approach proposed in~\cite{DelbotLP13}. According to Theorem 2 showed in~\cite{DelbotLP13}, $S$ is a 2-approximation for the Connected Vertex Cover problem. The approximation ratio comes from the fact that for each clique of size $k \geq 2$ at least $k-1$ nodes are in an optimal solution to cover all the $\frac{k \times (k-1)}{2}$ edges of the clique, while $k$ nodes are selected by the algorithm.
\end{proof}

\begin{theorem}
Algorithm $\mathcal{SS-CVC}$ is a self-stabilizing algorithm for Specification~\ref{spec:cvc} under an unfair distributed daemon.
\end{theorem}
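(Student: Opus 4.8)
The plan is to verify the two conditions of Specification~\ref{spec:cvc} for the fair composition $\mathcal{SS-CVC}$ of Algorithm~\ref{algo} and Algorithm~\ref{algo2}, relying on the machinery already assembled for the two sub-algorithms. Recall that under a fair composition, the inputs $lead_p$ and $C_p$ to Algorithm~\ref{algo2} are supplied by Algorithm~\ref{algo}, so the natural strategy is to argue in two layers: first that Algorithm~\ref{algo} stabilizes to a legitimate configuration (Definition~\ref{def:config_legitime}) providing correct inputs, and then that, given such stable inputs, Algorithm~\ref{algo2} itself reaches and remains in a configuration satisfying Definition~\ref{def:config_legitime_cvc} which encodes a valid $2$-approximate Connected Vertex Cover.

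First I would establish Condition~1 (convergence to a terminal configuration in finite time under the unfair distributed daemon). For the lower layer, Algorithm~\ref{algo} has already been shown to stabilize in $O(n^2)$ steps under the unfair daemon by Lemma~\ref{lem:config_legitime_step} (composed with a BFS algorithm via Corollary~\ref{cor:config_legitime_step_avec_BFS}). Once Algorithm~\ref{algo} is stabilized, the inputs $lead_p$ and $C_p$ are fixed, and Lemma~\ref{lem:config_legitime_cvc_step} guarantees that Algorithm~\ref{algo2} reaches a configuration satisfying Definition~\ref{def:config_legitime_cvc} in at most $O(n)$ steps. Combining these through the fair composition argument yields a finite step bound (matching the $O(ST_{\mathcal{BFS}} \times n^3)$ bound of the preceding corollary). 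To see that the reached configuration is \emph{terminal}, I would invoke Corollary~\ref{cor:no_enable_action} for Algorithm~\ref{algo} together with Lemma~\ref{lem:no_enable_action_cvc} for Algorithm~\ref{algo2}: in a configuration legitimate for both layers, no node is enabled in either algorithm, so the global configuration is terminal.

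Next I would establish Condition~2, namely that every terminal configuration $\gamma \in \mathcal{T}$ satisfies Definition~\ref{def:cvc}. This is exactly the content of Lemma~\ref{lem:sol_cvc_config_legitime}, which shows that any configuration satisfying Definition~\ref{def:config_legitime_cvc} encodes a set $S$ that is simultaneously a vertex cover of $G$, connected, and a $2$-approximation of an optimal connected vertex cover. Since I have already argued that the terminal configurations reached are legitimate for Algorithm~\ref{algo2}, applying Lemma~\ref{lem:sol_cvc_config_legitime} directly discharges Condition~2.

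The main subtlety I expect is not in any single calculation but in correctly handling the \emph{interaction} between the two layers of the fair composition, in particular ensuring that Algorithm~\ref{algo2} does not settle into a spuriously stable but incorrect configuration while Algorithm~\ref{algo} is still converging. The clean way to avoid this is to observe that $VC$-action simply forces $In_p$ to track the deterministic predicate $InVC(p)$, which is a pure function of the (eventually correct) inputs $lead_p$ and $C_p$; hence $\mathcal{SS-CVC}$ cannot reach a global terminal configuration until Algorithm~\ref{algo} has itself stabilized, and any transient recomputation of the clique partition merely re-enables $VC$-action finitely many times. The fair composition theorem of Dolev~\cite{Dolev2000} then packages this reasoning: since each sub-algorithm is self-stabilizing for its own specification under the unfair daemon, their fair composition is self-stabilizing for the combined specification, which is precisely Specification~\ref{spec:cvc}.
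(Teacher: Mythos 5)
Your proposal is correct and follows essentially the same route as the paper: Condition~1 via the convergence and terminality lemmas (Lemma~\ref{lem:config_legitime_cvc_step}, Lemma~\ref{lem:no_enable_action_cvc}, together with the stabilization of Algorithm~\ref{algo}), and Condition~2 via Lemma~\ref{lem:sol_cvc_config_legitime}. The only difference is that you spell out the fair-composition layering (citing Lemma~\ref{lem:config_legitime_step} and Corollary~\ref{cor:no_enable_action} for the lower layer) that the paper leaves implicit in its standing assumption that $\mathcal{SS-CMCP}$ has stabilized, which is a harmless elaboration rather than a different argument.
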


\begin{proof}
We have to show that starting from any configuration the execution of Algorithm $\mathcal{SS-CVC}$ verifies the two conditions of Specification~\ref{spec:cvc}.

According to Theorem~\ref{thm:cvc_enable_action}, Lemmas~\ref{lem:config_legitime_cvc_round},~\ref{lem:config_legitime_cvc_step} and~\ref{lem:no_enable_action_cvc}, from any configuration Algorithm $\mathcal{SS-CVC}$ reaches a configuration $\gamma \in \mathcal{C}$ in finite time and $\gamma$ is a terminal configuration, which verifies Condition 1 of Specification~\ref{spec:cvc}. Moreover, according to Lemma~\ref{lem:sol_cvc_config_legitime} the terminal configuration $\gamma$ reached by Algorithm $\mathcal{SS-CVC}$ satisfies Definition~\ref{def:cvc}, which verifies Condition 2 of Specification~\ref{spec:cvc}.
\end{proof}

\section{Conclusion}
In this paper, we give the first distributed and self-stabilizing algorithm for the Connected Vertex Cover problem with a constant approximation ratio of 2. Moreover, to solve this problem we propose also a self-stabilizing algorithm for the construction of a Connected Minimal Clique partition of the graph. These two algorithms work under the unfair distributed daemon which is the weakest daemon.
There are two natural perspectives to this work. First, our distributed self-stabilizing clique partition construction a root node is used. This allows to ensure the connectivity property for the clique partition. If this property is not necessary our algorithm can be easily modified in order to remove this hypothesis, but is it also possible while guaranteeing the connectivity property. Second, the self-stabilizing algorithm we propose for the Connected Vertex Cover problem achieves a better time complexity than a self-stabilizing solution based on Savage's approach, but at the price of a higher memory complexity. So, a natural question is to investigate the existence of a distributed algorithm with a low time and memory complexity.


\bibliographystyle{alpha}
\bibliography{Biblio}

\end{document}